\newtheorem{Theorem}{Theorem}
\newtheorem{Lemma}{Lemma}
\newtheorem{Corollary}{Corollary}
\newtheorem{Assumption}{Assumption}
\newtheorem{Remark}{Remark}
\newcommand{\rs}{\!\!}
\newcolumntype{C}[1]{>{\centering \arraybackslash}p{#1}}
\newcommand{\bblue}{\textcolor{black}}
\newcommand{\bquad}{\qquad\qquad\qquad\qquad\qquad}\newcommand{\mquad}{\qquad\qquad\qquad}
\newcommand{\squad}{\qquad\qquad}
\acrodef{ml}[ML]{machine learning}
\acrodef{ai}[AI]{artificial intelligence}
\acrodef{fl}[FL]{federated learning}
\acrodef{hfl}[HFL]{hierarchical federated learning}
\acrodef{fedavg}[FedAvg]{federated averaging}
\acrodef{rl}[RL]{reinforcement learning}
\acrodef{drl}[DRL]{deep reinforcement learning}
\acrodef{bs}[BS]{base station}
\acrodef{isp}[ISP]{(wireless) internet service provider}
\acrodef{ue}[UE]{user equipment}
\acrodef{cs}[CS]{central server}
\acrodef{es}[ES]{edge server}
\acrodef{csp}[CSP]{content service provider}
\acrodef{rawhfl}[RawHFL]{\underline{r}esource-\underline{aw}are \underline{h}ierarchical \underline{f}ederated \underline{l}earning}
\acrodef{hfedavg}[H-FedAvg]{hierarchical federated averaging}
\acrodef{sgd}[SGD]{stochastic gradient descent}
\acrodef{cpu}[CPU]{central processing unit}
\acrodef{prb}[pRB]{physical resource block}
\acrodef{snr}[SNR]{signal-to-noise-ratio}
\acrodef{lp}[LP]{linear programming}
\acrodef{fpp}[FPP]{floating point precision}
\acrodef{cdf}[CDF]{cumulative distribution function}
\acrodef{uav}[UAV]{unmanned aerial vehicles}
\acrodef{ap}[AP]{access point}
\acrodef{hz}[Hz]{hertz}
\acrodef{csi}[CSI]{channel state information}
\acrodef{chr}[CHR]{cache hit ratio}
\acrodef{rnn}[RNN]{recurrent neural network}
\acrodef{qoe}[QoE]{quality of experience}
\acrodef{csi}[CSI]{channel state information}
\acrodef{mcs}[MCS]{modulation and coding scheme}
\acrodef{arq}[ARQ]{automatic repeat request}
\begin{document}

\title{Resource-Aware Hierarchical Federated Learning in Wireless Video Caching Networks} 
\author{Md Ferdous Pervej, \IEEEmembership{Member, IEEE}, and Andreas F. Molisch, \IEEEmembership{Fellow, IEEE}
\thanks{Part of this work was presented at the $2024$ IEEE International Conference on Communications (ICC) \cite{pervej2024resource}.}
\thanks{This work was supported by NSF-IITP Project $2152646$.}
\thanks{The authors are with the Ming Hsieh Department of Electrical and Computer Engineering, University of Southern California, Los Angeles, CA $90089$ USA (e-mail: pervej@usc.edu; molisch@usc.edu).}
}

\markboth{Draft} %
{Pervej \MakeLowercase{\textit{et al.}}: Resource-Aware HFL in Wireless Video Caching Networks}

\maketitle

\begin{abstract}
Backhaul traffic congestion caused by the video traffic of a few popular files can be alleviated by storing the to-be-requested content at various levels in wireless video caching networks. 
Typically, content service providers (CSPs) own the content, and the users request their preferred content from the CSPs using their (wireless) internet service providers (ISPs). 
As these parties do not reveal their private information and business secrets, traditional techniques may not be readily used to predict the dynamic changes in users' future demands. 
Motivated by this, we propose a novel \underline{r}esource-\underline{aw}are \underline{h}ierarchical \underline{f}ederated \underline{l}earning (RawHFL) solution for predicting user's future content requests.  
A practical data acquisition technique is used that allows the user to update its local training dataset based on its requested content.
Besides, since networking and other computational resources are limited, considering that only a subset of the users participate in the model training, we derive the convergence bound of the proposed algorithm. 
Based on this bound, we minimize a weighted utility function for jointly configuring the controllable parameters to train the RawHFL energy efficiently under practical resource constraints. 
Our extensive simulation results validate the proposed algorithm's superiority, in terms of test accuracy and energy cost, over existing baselines.
\end{abstract}

\begin{IEEEkeywords}
Federated learning, resource-aware hierarchical federated learning, resource optimization, video caching.
\end{IEEEkeywords}

\section{Introduction}

\noindent 
\IEEEPARstart{V}{ideo} streaming is the dominant source of data traffic in wireless networks: approximately $3$ out of $5$ video views are generated from wireless devices \cite{loh2022youtube}, accounting for some $70 \%$ of bits sent over the air.
This traffic not only burdens the over-the-air transmission but also the backhaul: since the videos are owned by some \acp{csp}, e.g., YouTube, Netflix, Amazon Prime Video, HBO, Hulu, to name a few, and stored in the cloud servers used by the \acp{csp}, the requested videos need to be obtained from the cloud by the \ac{isp} and transported over its backhaul in order to be delivered to the requester node. 
As such, continuous deliveries of the requested content can overwhelm the backhaul network due to large bandwidth requirements engendered by the combination of large data amount and latency constraints required for the \ac{qoe}.
These problems can be greatly mitigated by caching at (or near) the wireless edge \cite{golrezaei2013femtocaching}, i.e., wireless networks can store the to-be-requested content at the \ac{bs}, reducing repetitive extractions and saving considerable backhaul resources.
Therefore, if the \ac{csp} and the \ac{isp} collaborate, they can design a practical solution to ameliorate the load for the backhaul.

Among many design parameters of a video caching network, two main elements are the content {\em placement} and {\em delivery} of the requested content \cite{liu2016caching}.
Typically, the storage size required to store {\em all} content available from the \acp{csp} far exceeds the capacity of a local cache. 
Consequently, it is critical to know the most likely to-be-requested content sets during the content placement phase.
Unfortunately, determination of these to-be-requested content is challenging: they may deviate considerably from the most recently used ones and may also differ significantly from the global or even regional most-popular files. 
This is particularly true when the users have individual preferences and do not always request the most popular content. 
Thanks to \ac{ml}, such changes can be accurately captured by carefully designing the \ac{ml} model and its training algorithm.
Nonetheless, the need for large amounts of training data samples to accurately learn the \ac{ml} model parameters remains a challenge.


Additionally, content caching becomes inevitably difficult due to the need to protect privacy and business secrets.
In order to get the requested content from the \ac{csp}, the user must place the request using its serving \ac{bs}.
However, neither the \ac{ue} nor the \ac{csp} wants to reveal the exact content ID to the serving \ac{bs}: the \ac{ue} as a matter of privacy, and the \ac{csp} because the user requests are business secrets that it does not disclose to the \ac{isp}, which often is a competitor. 
Conversely,  while the \ac{ue} can place an encrypted content request to the \ac{csp} via the serving \ac{bs}, the \ac{bs} does not wish to reveal the exact spatial information of the requester \ac{ue} to the \ac{csp}, since that is {\em its} business secret.
The above factors must be considered in designing any \ac{ml} algorithm for wireless video caching networks.
Fortunately, these challenges can be handled with \ac{fl} \cite{mcmahan2017communication}, which enables distributed model training on users' devices, thus protecting the users' data privacy.
While traditional \ac{fl} follows a parameter server paradigm, where a \ac{cs} coordinates the training process for the distributed clients\footnote{A user is usually named by the terms {\em \ac{ue}} and {\em client} in wireless networks and \ac{fl}, respectively. We use these terms interchangeably in this paper.}, almost all practical wireless networks have hierarchical structures \cite{wang2022demystifying,pervej2024hierarchical}.
Therefore, it is critical to devise an efficient \ac{hfl} considering the cooperation among the \ac{ue}, \ac{isp} and \ac{csp}.

\subsection{State-of-the-Art}

\noindent
Content placement is widely investigated to maximize \ac{chr} and/or minimize content delivery delay.
Recently, Javedankherad \textit{et al.} leveraged a weighted vertex graph coloring and a greedy algorithm for mobility-aware content placement to maximize the \ac{chr} \cite{javedankherad2022mobility}.
Malak \textit{et al.} optimized cache placement and transmission power to minimize transmission delays in multi-hop wireless heterogeneous networks \cite{malak2023joint}. 
Qian \textit{et al.} first derived closed-form solutions for cache placement and transcoding, followed by designing trading contracts between a communication, caching and computing resource provider and mobile network operators in order to minimize personalized content delivery delay \cite{qian20233c}.
Considering users' long-term request patterns, Ning \textit{et al.} jointly optimized cache placement and content recommendations in order to minimize content delivery costs \cite{ning2023optimizing}.
While these studies show the benefits of content caching, dynamic changes in user preferences, leading to frequent changes in global/regional content popularity, still cause significant challenges in wireless video caching networks.


\ac{ml} is widely used in literature \cite{kang2023content, li2023deep, nguyen2021user, lin2023joint,fang2023drl,zhou2023recommendation,xiong2023reinforcement} to apprehend the dynamic changes in content popularity.
These works broadly fall into two \ac{ml} domains, namely ($1$) supervised learning and ($2$) \ac{rl}.
In the supervised \ac{ml} category, existing works \cite{kang2023content,li2023deep,nguyen2021user} mainly used some variants of \ac{rnn} to predict future content popularity or cache placement.
Besides, \ac{drl} is also widely used \cite{lin2023joint,fang2023drl,zhou2023recommendation,xiong2023reinforcement} to learn long-term cache placement policy. 
However, these works mainly adopted a centralized \ac{ml} strategy, considering training datasets are centrally available.
This assumption differs in many practical cases where training data are only available to the end users who want to protect their data privacy.


Many recent studies \cite{qiao2022adaptive,wang2019edge,wang2023popularity,jiang2021federated,li2023community,khanal2022route,lin2023feedback,maale2023deepfesl,cao2023mobility,feng2023proactive} leveraged \ac{fl}-based solutions for different content caching applications in order to protect users' data privacy.
Qiang \textit{et al.} proposed an adaptive \ac{fl}-based proactive content caching solution \cite{qiao2022adaptive}, where two sets of \ac{drl} agents were used to learn two tasks: subset client selection and clients' local training rounds selection. 
They then leveraged \ac{fl} to aggregate the models of these \ac{drl} agents.
\cite{wang2019edge} also used two separate \ac{drl} agent sets --- one at the user level for learning computation offloading policy and one at the edge server level for learning content placement policy --- and applied federated aggregation of these \ac{drl} models. 
A similar strategy was also adopted by Wang \textit{et al.}, where clients' mobility and preferences were first utilized to cluster the clients \cite{wang2023popularity}.
They then let each cluster head train a \ac{drl} model to learn the cache placement.
In \cite{jiang2021federated}, users first learned their content preferences, followed by sharing their context information and learned preferences with their associated \acp{ap}.
The \acp{ap} then partitioned the users and derived popularity scores of the content that they leveraged to train stochastic variance reduced gradient models to predict the global content popularity.
An attention-weighted \ac{fl} algorithm was proposed by Li \textit{et al.} for predicting content popularity in a device-to-device network, leveraging a similar user partitioning strategy based on mobility and social behavior \cite{li2023community}.
Khanal \textit{et al.} also proposed a self-attention-based \ac{fl} algorithm to predict cell-wise content popularity in vehicular networks \cite{khanal2022route}.

Lin \textit{et al.} proposed a proactive caching scheme using stacked autoencoders for global content popularity prediction \cite{lin2023feedback}.
Maale \textit{et al.} considered a \ac{uav}-assisted caching network \cite{maale2023deepfesl}, where popular content of the ground users was placed in the \ac{uav}'s limited cache to ensure ground users' \acp{qoe}.
Cao \textit{et al.} proposed a similar mobility-aware content caching solution \cite{cao2023mobility}, where FL was used to predict users' trajectories and content popularity. 
These prediction results were then used to place the contents to minimize content delivery costs.
Feng \textit{et al.} also considered a similar approach \cite{feng2023proactive}, which first predicted the residence time of moving vehicles and then utilized that to predict content popularity in each \ac{bs}.

It is worth noting that \ac{fl} has many open problems (see \cite{kairouz2021advances} and the references therein).
Efficient client sampling has recently emerged as one of the key solutions to resource constraints \cite{cho202towards,chen2022optimal,wang2024a}.  
These works advocate for a subset of clients selection from the entire client pool in order to mitigate the straggler effect \cite{yang2021achieving}.
Owing to the limited resources in wireless communication, many recent works also suggest client scheduling based on the available resources of the clients and the \ac{bs} that acts as the \ac{cs} \cite{pervej2023Resource, du2023gradient,chen2023exploring,ni2024joint}.
These works, however, deal with general wireless networks and do not consider a wireless video caching platform.

\subsection{Our Contributions}

\noindent
While \cite{kang2023content, li2023deep, nguyen2021user, lin2023joint, fang2023drl, zhou2023recommendation, xiong2023reinforcement} showed \ac{ml} improves caching performance, these centralized \ac{ml} methods are not suitable due to privacy concerns.
Besides, although some studies \cite{qiao2022adaptive, wang2023popularity, jiang2021federated, li2023community, khanal2022route, lin2023feedback, maale2023deepfesl, cao2023mobility, feng2023proactive} only considered users' data privacy, collaboration among the three entities involved, i.e., \ac{ue}, \ac{bs} and \ac{csp}, was not addressed. 
Furthermore, the above works do not consider sporadic content request patterns and the important fact that the users' training datasets are not readily available: they need to process their requested content's information to prepare their training datasets.
Additionally, an \ac{hfl} based solution is preferable due to the hierarchical architecture of practical networks.
Moreover, while clients have limited computational power and energy resources, the \ac{isp} has limited radio resources.
As such, coordination among these parties is needed to ensure an efficient video caching platform.
Due to these reasons, we propose a novel \ac{rawhfl} algorithm where the clients train their local model to predict their to-be-requested content.
More specifically, our key contributions are summarized as follows:

\begin{itemize}
    \item Considering a realistic multi-cell wireless network consisting of multiple \acp{bs} and clients, where clients request content from a \ac{csp} using encrypted tagged IDs, we leverage privacy-preserving collaboration among these three parties to bring an efficient \ac{fl} solution for predicting users' to-be-requested content in wireless video caching networks. 
    \item Due to resource scarcity, the proposed \ac{rawhfl} only selects a subset of the clients for model training. 
    Furthermore, acknowledging the well-known system and data heterogeneity, we derive \ac{rawhfl}'s convergence bound, which reveals that the global gradient norm depends on the successful reception of the selected clients' trained accumulated gradients and their local training rounds.
    \item To that end, we jointly optimize client selection and clients' (i) local training rounds, (ii) \ac{cpu} cycles and (iii) transmission power to minimize a weighted utility function that facilitates \ac{rawhfl}'s convergence and minimizes energy expense under delay, energy and radio resource constraints.  
    As the original problem is non-convex, we transform it into a relaxed difference of convex programming problems and use a low-complexity iterative solution.
    \item Our extensive simulation results validate that the proposed solution outperforms existing baselines in terms of test accuracy and energy expense in a resource-constrained setting\footnote{Note that the predicted content information can be utilized to design content placement and evaluate other metrics, such as cache hit ratio, latency, revenue, etc., to name a few. 
    However, the evaluation of such metrics depends on system parameters and is beyond the scope of this paper.
    These metrics are generally directly proportional to the prediction accuracy.}.
    Moreover, the proposed \ac{rawhfl} performs nearly identically to the ideal case performance of \ac{hfedavg} \cite{liu2020client} in all examined scenarios. 
\end{itemize}

The rest of the paper is organized as follows. 
Section \ref{sysModel} presents our system model, content request model and preliminaries of \ac{hfl}.
In Section \ref{rawHFLconvergence}, we first summarize the proposed \ac{rawhfl} and then present our detailed theoretical analysis.
Besides, Section \ref{jointProblemFormulation} describes our joint problem formulation, followed by problem transformations and solutions.
Section \ref{simulationResults} presents our extensive simulation results, followed by the concluding remarks in Section \ref{conclusion}.
Moreover, Table \ref{tableofVariable} summarizes the important notations used in this paper.

\begin{table}[!t]
\caption{Summary of important variables}
\centering
\fontsize{8}{8} \selectfont 
\begin{tabular}{|C {1.3cm}|C{6.5cm}|}
\hline
\textbf{Parameter}  & \textbf{Definitions}  \\ \hline 
$u$, $\mathcal{U}$ & User $u$, all user set \\ \hline 
    $b$, $\mathcal{B}$ & base station $b$, all BS set \\ \hline 
    $l$, $\mathrm{L}$  & $l^{\mathrm{th}}$ SGD round, upper bound for local SGD round \\ \hline 
    $e$, $E$   & $e^{\mathrm{th}}$ edge round, total edge round  \\ \hline 
    $k$, $K$  & $k^{\mathrm{th}}$ global round, total global round \\ \hline 
    $t$ & $t^{th}$ discrete slot at which UE may request content \\\hline
    $\mathcal{U}_b$, $\mathcal{U}_b^{k,e}$ & BS $b$'s UE set; selected UE/client set of BS $b$ during edge round $e$ of global round $k$ \\ \hline 
    $\mathrm{L}_u^{k,e}$ & UE/client $u$'s local SGD round during edge round $e$ of global round $k$ \\ \hline 
    $g$, $G$ & Genre $g$; total genres \\ \hline
    $c_g$, $\mathcal{C}_g$, $\mathcal{C}$ & $c^{\mathrm{th}}$ content of genre $g$; all content set in genre $g$; entire content catalog \\ \hline
    $\Bar{C}_g$, $C$ & Total content in genre $g$; total content in the catalog \\ \hline
    $z$, $Z$, $\mathcal{Z}$ & $z^{\mathrm{th}}$ pRB; total pRBs; pRB set \\ \hline 
    $\mathcal{D}_{u,\mathrm{raw}}^0$ & UE $u$'s initial historical dataset \\ \hline
    $p_{u,\mathrm{ac}}$ & UE $u$'s probability of being active (making a content request) \\ \hline
    $\mathrm{1}_{u,c_g}^t$ & Binary indicator function that defines whether $u$ requests content $c_g$ during slot $t$ \\ \hline
    $p_{u,g}$ & UE $u$'s preference to genre $g$ \\ \hline
    $\Upsilon$ & Dirichlet distribution's concentration parameter for the genre preference \\ \hline
    $\mathcal{D}_{u,\mathrm{proc}}^{t}$, $\mathrm{D}_{u,\mathrm{proc}}^{t}$ & UE $u$'s processed dataset; total samples in UE $u$'s processed dataset \\ \hline   
    $f_u(\cdot)$; $f_b(\cdot)$, $f(\cdot)$ & UE $u$'s loss function; BS $b$'s loss function; global loss function \\ \hline
    $\mathbf{w}_u^{k,e,l}$, $\mathbf{w}_b^{k,e}$, $\mathbf{w}^{k}$ & UE/Client $u$'s local model during SGD round $l$ of edge round $e$ of global round $k$; BS $b$'s edge model during edge round $e$ of global round $k$; global model during round $k$ \\ \hline 
    $g_u(\mathbf{w}_{u}^{k,e,l})$ & UE $u$'s gradient during $l^{\mathrm{th}}$ local round of $e^{\mathrm{th}}$ edge round of $k^{\mathrm{th}}$ global round \\ \hline
    $\tilde{g}_u^{k,e}$ & UE $u$'s accumulated gradients during $e^{\mathrm{th}}$ edge round of $k^{\mathrm{th}}$ global round \\ \hline
    $\eta$ & Learning rate \\ \hline
    $\alpha_u$, $\alpha_b$ & UE $u$'s trained model/accumulated gradients' weight; BS $b$'s model/accumulated gradients' weight \\ \hline
    $\mathrm{1}_{u, \mathrm{sl}}^{k,e}$; $\mathrm{p}_{u, \mathrm{sl}}^{k,e}$ & Binary indicator function to define whether $u$ is selected in edge round $e$ of global round $k$; success probability of $\mathrm{1}_{u,\mathrm{sl}}^{k,e}$ \\ \hline
    $\mathrm{t}_{u,\mathrm{cp}}^{k,e}$; $\mathrm{e}_{u,\mathrm{cp}}^{k,e}$ & UE $u$'s local model training time and energy overheads during edge round $e$ of global round $k$ \\ \hline
    $\mathrm{t}_{u,\mathrm{up}}^{k,e}$; $\mathrm{e}_{u,\mathrm{up}}^{k,e}$ & UE $u$'s accumulated gradient offloading time and energy overheads during edge round $e$ of global round $k$ \\ \hline
    $\mathrm{t_{th}}$ & Deadline threshold to finish one edge round \\ \hline
    $P_{u,\mathrm{tx}}^{k,e}$; $P_{u, \mathrm{max}}$ & Transmission power of $u$; maximum allowable transmission power of $u$ \\ \hline 
    $\mathrm{e}_{u,\mathrm{bud}}$ & Energy budget of $u$ for each edge round \\ \hline
    $\mathrm{1}_{u,\mathrm{sc}}^{k,e}$; $\mathrm{p}_{u,\mathrm{sc}}^{k,e}$ & Binary indicator function to define whether accumulated gradient of $u$ is received by the \ac{bs}; success probability of $\mathrm{1}_{u,\mathrm{sc}}^{k,e}$ \\ \hline
    $\phi$; $s$ & Floating point precision; client's uplink payload size for the accumulated gradients \\ \hline
    $\beta$ & Smoothness of the loss function \\ \hline
    $\sigma^2$ & Bounded variance of the gradients \\ \hline 
    $\epsilon_{0}^2$ & Bounded divergence between the local and the edge loss functions \\ \hline
    $\epsilon_{1}^2$ & Bounded divergence between the edge and the global loss functions \\ \hline
    $\varphi_b^{k,e}$ & Utility function for the joint optimization problem \\ \hline
\end{tabular}
\label{tableofVariable}
\end{table}

\begin{figure}[!t]
    \centering
    \includegraphics[trim=72 10 120 10, clip,
    width=0.4\textwidth]{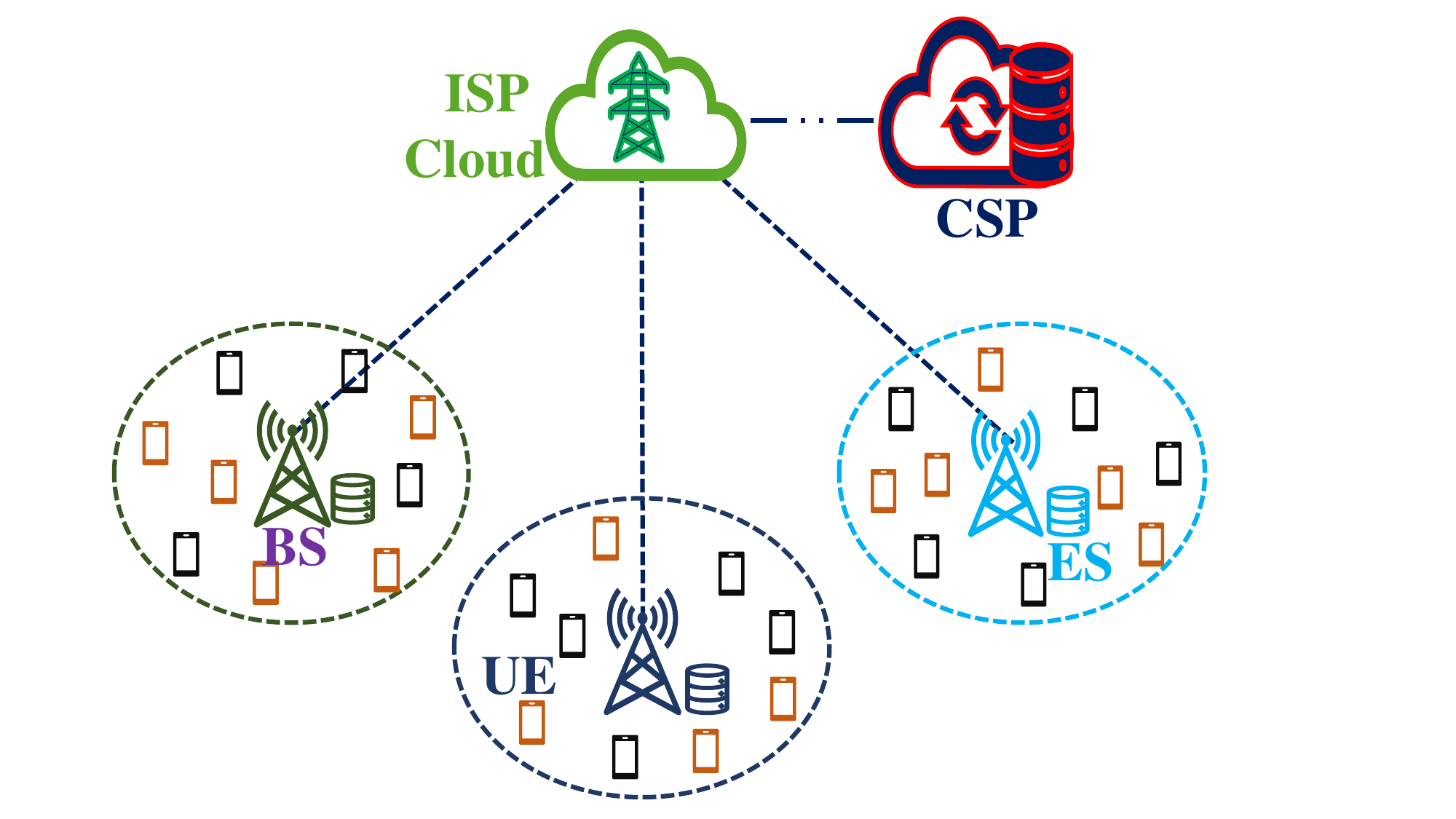}
    \caption{Network system model: \acp{ue} are connected to their serving \acp{bs}, and these \acp{bs} are connected to their \ac{isp} cloud network, while the \ac{csp} has an agreement with the \ac{isp} that allows placing one \ac{es} to each \ac{bs} } 
    \label{netSysMod}
\end{figure}

\section{System Model}
\label{sysModel}

\subsection{Network Model} 
\noindent
This paper considers a video caching wireless network, where distributed \acp{ue}, denoted by $\mathcal{U}=\{u\}_{u=0}^{U-1}$, can request content from a \ac{csp} using their serving \acp{bs}.
Let us denote the \ac{bs} set by $\mathcal{B}=\{b\}_{b=0}^{B-1}$ and the user set associated with BS $b$ by $\mathcal{U}_b$.
Besides, a \ac{ue} is only associated with a single \ac{bs} and $\mathcal{U} = \bigcup_{b=0}^{B-1} \mathcal{U}_b$.
For simplicity, we assume that all \acp{bs} are operated and controlled by the same \ac{isp}.
Denote the content catalog of the \ac{csp} by $\mathcal{C}=\{\mathcal{C}_g\}_{g=0}^{G-1}$, where $\mathcal{C}_g = \{c_g\}_{c=0}^{\bar{C}_g-1}$ represents the content set of genre $g$.
Therefore, the content catalog has a fixed set of $C = \sum_{g=0}^{G-1} \Bar{C}_g$ content\footnote{While new content may arrive/generate at the \ac{csp}, given that the \ac{csp} has a limited storage capacity, following standard practice \cite{maale2023deepfesl, cao2023mobility, feng2023proactive}, we assume the number of content that the \ac{csp} can store remains fixed.}.
The \ac{csp} and the \ac{isp} have mutual agreements that enable the \ac{csp} to install one \ac{es} to every \ac{bs} of the \ac{isp}. 
This lets the \ac{csp} leverage the \ac{es}' limited storage and computation power to devise an efficient \ac{fl}-based caching solution.
The considered system model is shown in Fig. \ref{netSysMod}. 
Furthermore, we assume that the \ac{csp} and the \ac{isp} are entirely different entities under two different business organizations and, therefore, always keep their operations secrets from each other.
As such, the \ac{csp} neither reveals the information of the \ac{ue}'s requested content nor shares what content it stores in the \acp{es} with the \acp{bs}.
The \acp{ue}, on the other hand, know the exact content information and can place their content requests using encrypted temporal tag IDs via their associated \acp{bs}.
Fig. \ref{privacyProtectionSysMod} shows the privacy protection procedures at different nodes.
Furthermore, the \ac{csp} can frequently change these encrypted tag IDs and their mapping with the original content to further enhance its operational secrets.

\begin{figure}[!t]
\centering
    \includegraphics[trim=50 8 20 8, clip, width=0.5\textwidth]{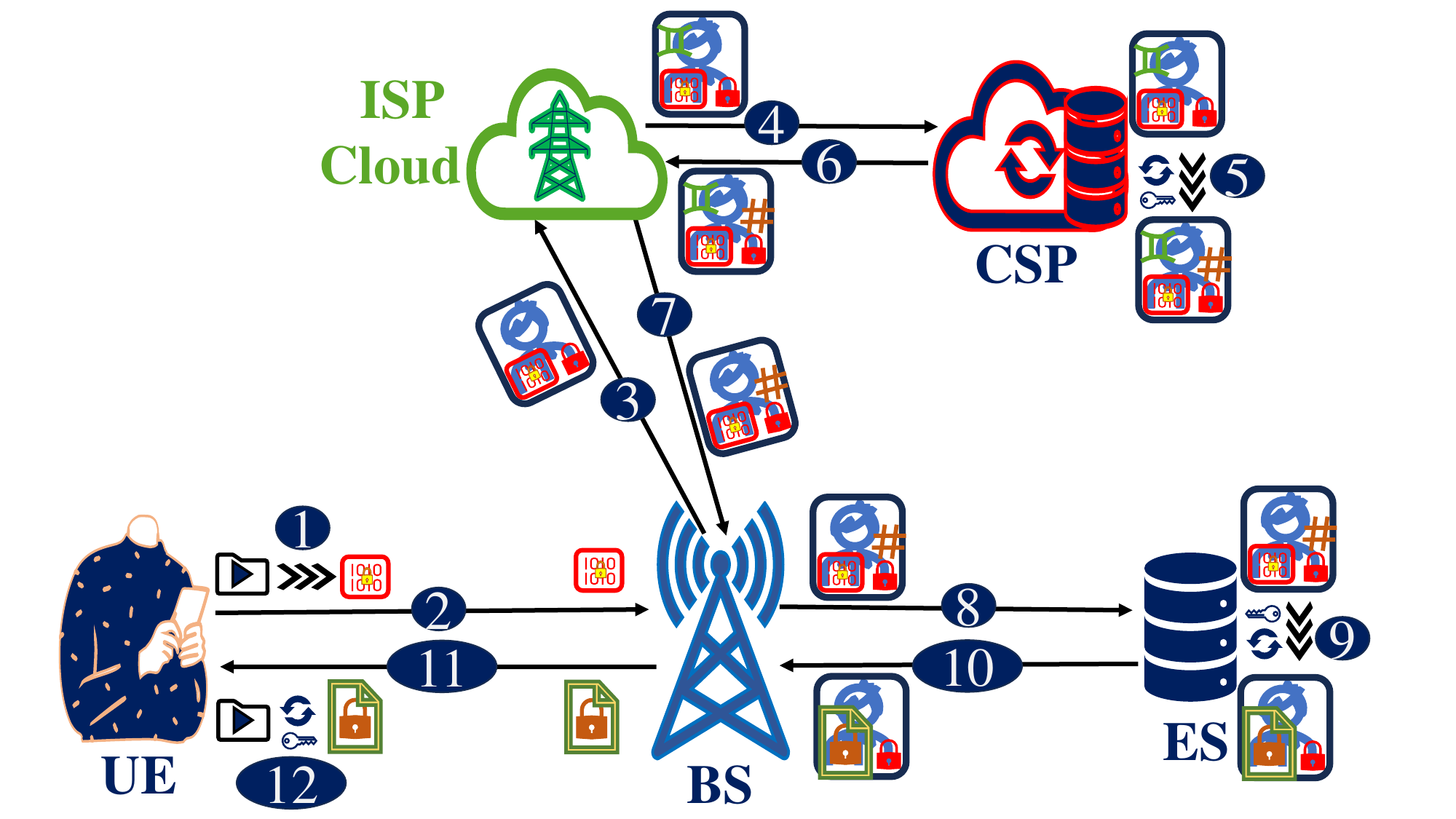}
    \caption{Privacy protection in different nodes: ($1$) \ac{ue} uses tag ID, ($2$) \ac{ue} sends encrypted content request to the serving \ac{bs}, ($3$) \ac{bs} sends the encrypted content request to its cloud for charging/authentication, ($4$) \ac{isp} cloud sends encrypted content request information to \ac{csp}, ($5$) \ac{csp} decode actual content ID from the tagged ID, ($6$) \ac{csp} sends encrypted information for the \ac{es} to the \ac{isp} cloud, ($7$) \ac{isp} sends the encrypted information to serving \ac{bs}, ($8$) serving \ac{bs} forwards encrypted information to the \ac{es}, ($9$) \ac{es} decode the \ac{csp}'s information, ($10$) \ac{es} sends encrypted video payload to serving \ac{bs}, ($11$) \ac{bs} sends the packet to the \ac{ue}, and ($12$) \ac{ue} decrypt the packet}
    \label{privacyProtectionSysMod}
\end{figure}

The network operates in discrete time slots, denoted by $t$, and the duration between two consecutive slots is $\kappa$ seconds. 
The \ac{isp} dedicates $\Bar{\mathrm{Z}}$ \ac{hz} bandwidth to perform the \ac{fl} training.
More specifically, each \ac{bs} of the \ac{isp} utilizes $\Bar{\mathrm{Z}}$ \ac{hz} bandwidth, which they divide into $Z$ orthogonal \acp{prb}, for sharing the \ac{fl}-related payloads with the participating \acp{ue} and the \acp{es}.
Denote the pRB set by $\mathcal{Z}=\{z\}_{z=0}^{Z-1}$.
We assume that node association, radio resource allocations and \ac{csi} are known at the \ac{bs}.
Besides, the neighboring \acp{bs} can utilize $\Bar{\mathrm{Z}}$ \ac{hz} bandwidth for the \ac{fl} task from different portions of their entire operating bandwidth to avoid inter-cell interference\footnote{If the \acp{bs} have to use the same portion of the bandwidth, they can coordinate and assign the \acp{prb} in neighboring cells and/or use advanced beamforming techniques such as zero-forcing for interference-free communication.} \cite[Chap. 21]{molisch2023wireless}.
Moreover, since orthogonal \acp{prb} are used within each cell, there is no intra-tier interference.
Furthermore, we consider that the wireless channel between the \ac{ue} and the \ac{bs} is dominated by large-scale fading since users are stationary in our system model, and practical networks can exploit enough diversity to mitigate small-scale fading.
As such, we model the large-scale path loss and log-Normal shadowing following $3$GPP's urban macro model \cite[Section $7.4$]{3GPPTR38_901}.

\subsection{Content Request Model and Dataset Acquisition}
\label{contReqModSection}


\subsubsection{Content Request Model}
\noindent 
We assume that each \ac{ue}'s content request model follows two steps. 
Firstly, a \ac{ue} may or may not request content during a slot $t$, depending on its need.
Let $p_{u,\mathrm{ac}}$ be the probability that the $u^{\mathrm{th}}$ \ac{ue} makes a content request, which can also be interpreted as the \textit{activity level} of the \ac{ue}.
In other words, the content request arrival is essentially modeled as a Bernoulli random variable with success probability $p_{u,\mathrm{ac}}$, which is widely used in the literature \cite{somuyiwa2018reinforcement,bharath2018caching,pervej2024efficient}.
Given that the \ac{ue} is \emph{active}, in the next step, we model \emph{which} content it wants to request. 
To model this, we consider a popularity-preference tradeoff owing to the fact that both individual preference and global content popularity can influence the \ac{ue} in choosing the content to request.
In particular, we assume that each \ac{ue} has a genre preference, denoted by $p_{u,g}$, such that $0\leq p_{u,g} \leq 1$ and $\sum_{g=0}^{G-1} p_{u,g}=1$.
These genre preferences are modeled following a symmetric Dirichlet distribution $\mathrm{Dir}(\pmb{\Upsilon})$, where the $\pmb{\Upsilon}$ is the concentration parameter \cite{pervej2023Resource}.

We assume that the \ac{ue} requests the most popular content $c_g$ from its preferred genre $g$ when it makes the first content request. 
Let $\mathrm{1}_{u,c_g}^{t}$ be a binary variable that takes value $1$ if the \ac{ue} requests content $c_g$ in slot $t$ and value $0$ otherwise. 
In the subsequent content request, the \ac{ue} can either exploit the most similar content\footnote{Each content has its own distinctive feature set that can be used to calculate its similarity with other content in the same genre.} from the same genre with probability $\upsilon_u$ or explore the most popular content from a different genre, $g' \neq g$, with probability $(1-\upsilon_u)$. 
It is worth noting that the genre preference $p_{u,g}$ does not influence $p_{u,\mathrm{ac}}$ in the above content request model\footnote{However, other content request strategies can also be easily incorporated, which will only change the clients' datasets. Our \ac{rawhfl} solution is general, and our theoretical analysis in the sequel will still hold.}.

\subsubsection{Dataset Acquisition Method}
We assume that each \ac{ue} has a small initial historical raw dataset, denoted by $\mathcal{D}_{u,\mathrm{raw}}^0$ that the client updates based on the information of its requested content.
Following the above content request model, for $t>0$, each \ac{ue} updates its local raw dataset using the requested content's information as 
\begin{equation}
\label{datasetEvol}
\begin{aligned}
    \mathcal{D}_{u,\mathrm{raw}}^t &\coloneqq 
    \begin{cases}
        \mathcal{D}_{u,\mathrm{raw}}^{t-1} \bigcup \big\{ \mathbf{x}(\mathrm{1}_{u,c_g}^t), \bblue{y(\mathrm{1}_{u,c_g}^t)} \big\}, \rs\rs\rs  & \text{if $u$ is active}, \\
        \mathcal{D}_{u,\mathrm{raw}}^{t-1}, & \text{otherwise},
    \end{cases}\rs,\rs\rs
\end{aligned}
\end{equation}
where $\mathbf{x}(\mathrm{1}_{u,c_g}^t)$ and \bblue{$y(\mathrm{1}_{u,c_g}^t)$} are the feature \bblue{vector and label index} of the requested content, respectively.
Note that (\ref{datasetEvol}) makes \ac{ue}'s dataset size time-varying and essentially captures natural data sensing methods in many practical applications \cite{pervej2022mobility, hosseinalipour2023parallel}. 
The \acp{ue} then process their local raw datasets acquired by (\ref{datasetEvol}) to train the \ac{ml} model locally.
Let us denote the processed dataset by $\mathcal{D}_{u,\mathrm{proc}}^t = \{\mathbf{x}_{u}^a, \bblue{y_u^a}\}_{a=0}^{\mathrm{D}_{u,\mathrm{proc}}^t - 1}$, where $(\mathbf{x}_{u}^a, \bblue{y_u^a})$ is the $a^{\mathrm{th}}$ (processed) training samples\footnote{The exact processing of the raw data samples shall depend on the application/simulation. More details are provided in the Section \ref{simulationResults}.}, and $\mathrm{D}_{u,\mathrm{proc}}^t$ is the total training samples.
Note that the probability that a \ac{ue} is active, i.e., $p_{u,\mathrm{ac}}$, directly governs its data acquisition policy in (\ref{datasetEvol}).
Particularly, the \ac{ue} receives a new training sample with probability $p_{u,\mathrm{ac}}$.
\bblue{It is worth noting that while a larger $p_{u,\mathrm{ac}}$ means the client has more training samples, the impact of $p_{u,\mathrm{ac}}$ may not be directly measured in the theoretical analysis since all clients select $\mathrm{n}$ mini-batches (discussed in the sequel) from their respective datasets to perform the local model training.}

\subsection{Hierarchical Federated Learning: Preliminaries}

\noindent
In this paper, we consider that the \ac{cs}, \acp{es}, \acp{bs} and the distributed clients collaboratively train an \ac{ml} model, parameterized by $\mathbf{w} \in \mathbb{R}^d$.
The task is to predict the to-be-requested content of these \acp{ue} so that the video caching network can make the most out of the limited storage of the \acp{es}\footnote{Our solution is general and can easily be extended for other tasks.}. 
We consider an \ac{hfl} \cite{liu2020client} framework consisting of two tiers: ($1$) client-\ac{es} and ($2$) \ac{es}-\ac{cs}.
The \acp{ue}/clients perform local rounds, while the \ac{es} and \ac{cs} conduct \emph{edge rounds} and \emph{global rounds}, respectively.   
In each \emph{local round}, the \acp{ue} train the \ac{ml} model using their processed datasets $\mathrm{D}_{u,\mathrm{proc}}^t$'s to minimize the following local loss function.
\begin{equation}
\label{localLossFunc}
\begin{aligned}
    f_u (\mathbf{w} | \mathcal{D}_{u,\mathrm{proc}}^t) \coloneqq [1/\mathrm{D}_{u,\mathrm{proc}}^t] \sum\nolimits_{(\mathbf{x}_{u}^a, \bblue{y_u^a}) \in \mathcal{D}_{u,\mathrm{proc}}^t} \mathrm{l}\left(\mathbf{w} | (\mathbf{x}_{u}^a, \bblue{y_u^a}) \right),
\end{aligned}
\end{equation}
where $\mathrm{l}\left(\mathbf{w} | (\mathbf{x}_{u}^a, \bblue{y_u^a})\right)$ is the loss associated with the $a^{\mathrm{th}}$ data sample.

Besides, each \ac{es} minimizes the following loss function in each \emph{edge round}.
\begin{equation}
\label{esLossAllClientsParticipate}
\begin{aligned}
    f_{b} (\mathbf{w} | \mathcal{D}_{b}^t) \coloneqq \sum\nolimits_{u \in \mathcal{U}_b} \alpha_u f_u (\mathbf{w} | \mathcal{D}_{u,\mathrm{proc}}^t),
\end{aligned}
\end{equation}
where $\alpha_u$ is the weight of the $u^{\mathrm{th}}$ client in \ac{bs} $b$ and $\mathcal{D}_b^t \coloneqq \bigcup_{u \in \mathcal{U}_b} \mathcal{D}_{u,\mathrm{proc}}^t$.
Note that we use the same notation $b$ in the subscript to represent the \ac{es} of the $b^{\mathrm{th}}$ \ac{bs} for brevity. 
Furthermore, the \ac{cs} wants to minimize the following global loss function in each \emph{global round}.
\begin{equation}
\label{centralLossAllClietnsParticipate}
\begin{aligned}
    f(\mathbf{w} | \mathcal{D}^t) \coloneqq \sum\nolimits_{b=0}^{B-1} \alpha_b f_b (\mathbf{w} | \mathcal{D}_{b}^t), 
\end{aligned}
\end{equation}
where $\alpha_b$ is the weight of the \ac{es} of the $b^{\mathrm{th}}$ \ac{bs} at the \ac{cs} and $\mathcal{D}^t \coloneqq \bigcup_{b=0}^{B-1} \mathcal{D}_b^t$.
Recall that clients' datasets are not stationary and keep evolving following (\ref{datasetEvol}).
Therefore, the global optimal model $\mathbf{w}^{*}$ may not necessarily remain stationary \cite{hosseinalipour2023parallel}.
The \ac{cs} thus wants to find the optimal $\mathbf{w}^{*}$ in each $t$ as follows.
\begin{equation}
\begin{aligned}
    \mathbf{w}^{*} &= \underset{\mathbf{w} }{\text{arg min}} \quad f(\mathbf{w} | \mathcal{D}^t), \quad \forall t.
\end{aligned}
\end{equation}

\section{Resource-Aware Hierarchical Federated Learning: Convergence Analysis}
\label{rawHFLconvergence}
\noindent

\begin{figure}[!t]
    \centering
    \includegraphics[trim=175 260 150 130, clip, width=0.5\textwidth]{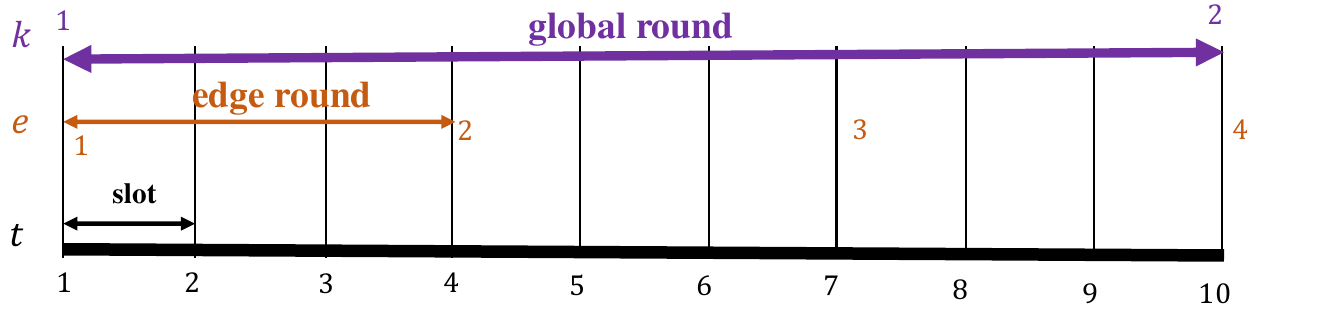}
    \caption{Time flow in the proposed system model: each edge round can have multiple content request slots, while local training happens in between two consecutive edge rounds, and each global round consists of multiple edge rounds}
    \label{timeFlow}
\end{figure}

\subsection{Resource-Aware Hierarchical Federated Learning Model}

\noindent
The proposed \ac{rawhfl} is designed on top of the general \ac{hfl} \cite{liu2020client,pervej2024hierarchical,wang2022demystifying} framework under explicit consideration of the available computational, radio and energy resources.   
In the proposed \ac{rawhfl}, the clients perform local training that consists of taking multiple local \ac{sgd} steps within an edge round. 
Similarly, the \acp{es} perform $E>1$ edge rounds in each global round. 
Fig. \ref{timeFlow} describes the flow of the local, edge, and global rounds. 
Besides, we consider a fully synchronous updating procedure at the upper tiers, i.e., in the \ac{es} and the \ac{cs}, which leads to a deadline-constrained case.
More specifically, this necessitates receiving the trained model parameters of the client at the \ac{es} during the \ac{es}' model update.
Our proposed \ac{rawhfl} has $6$ key steps, which are summarized below.

\subsubsection*{\textbf{Step $1$ - Global Round $k$ Initialization}}
At the beginning of each global round $k$, the \ac{cs} broadcasts the latest available global model, denoted by $\mathbf{w}^k$, to all \acp{es} via the \ac{isp}'s infrastructure.

\subsubsection*{\textbf{Step $2$ - Edge Round $e$ Initialization}}
Let us denote the edge round by $e$ and the edge model of the $b^{\mathrm{th}}$ \ac{es} by $\mathbf{w}_b^{k,e}$. 
During the first edge round, i.e., $e=0$, of every global round $k$, each \ac{es} synchronizes its edge model with the latest model received from the \ac{cs}, i.e., $\mathbf{w}_b^{k,e} \gets \mathbf{w}^k$, $\forall b \in \mathcal{B}$.
Due to limited resources, \ac{rawhfl} considers partial client participation by only selecting $\bar{\mathcal{U}}_b^{k,e} \subseteq \mathcal{U}_b$ clients for each \ac{es} (of each \ac{bs}) in all edge rounds.
It is worth noting that we assume the \acp{es} and the \acp{bs} collaborate to optimize these subset client selections and selected clients' ($1$) local training rounds and ($2$) transmit powers jointly in the sequel. 
Let $\mathrm{1}_{u,\mathrm{sl}}^{k,e}$ denote client's participation during $e^{\mathrm{th}}$ edge round of the $k^{\mathrm{th}}$ global round, which is defined as
\begin{equation}
    \begin{aligned}
        \mathrm{1}_{u,\mathrm{sl}}^{k,e} &= 
        \begin{cases}
            1, &\text{with probability } \mathrm{p}_{u, \mathrm{sl}}^{k,e},\\
            0, &\text{otherwise},
        \end{cases}.
    \end{aligned}
\end{equation}
It is worth noting that the objective of each \ac{es} in \ac{rawhfl} differs from (\ref{esLossAllClientsParticipate}) when $\bar{\mathcal{U}}_b^{k,e} \subset \mathcal{U}_b$.
More specifically, each \ac{es} minimizes the following loss in \ac{rawhfl}.
\begin{equation}
\label{esLossRawHFL}
    f_{b} (\mathbf{w} | \cup_{u \in \bar{\mathcal{U}}_b^{k,e}} \mathcal{D}_{u, \mathrm{proc}}^t ) \coloneqq \sum\nolimits_{u \in \bar{\mathcal{U}}_b^{k,e}} \alpha_u f_u (\mathbf{w} | \mathcal{D}_{u,\mathrm{proc}}^t),
\end{equation}
where $\sum_{u \in \bar{\mathcal{U}}_b^{k,e}} \alpha_u = 1$ and $t$ is the time slot at which the $e^{\mathrm{th}}$ edge round starts.
The \ac{es} then forwards its current local model to its \ac{bs}, which broadcasts\footnote{Following existing literature \cite{pervej2024hierarchical,pervej2023Resource, hosseinalipour2023parallel}, downlink communication overheads are ignored in this paper.} the model to the selected clients.  

\subsubsection*{\textbf{Step 3 - Local Model Training}}
Firstly, each $u \in \bar{\mathcal{U}}_b^{k,e}$ synchronizes its local model as
\begin{equation}
    \mathbf{w}_u^{k,e,0} \gets \mathbf{w}_b^{k,e}.
\end{equation}
The client then trains its local model $\mathbf{w}_u^{k,e,0}$ for $\mathrm{L}_{u}^{k,e}$ \ac{sgd} rounds to minimize the loss function defined in (\ref{localLossFunc}).
As such, the local updated model is 
\begin{equation}
\label{localModUp}
\begin{aligned}
    \mathbf{w}_{u}^{k,e,\mathrm{L}_{u}^{k,e}} 
    &= \mathbf{w}_{u}^{k, e, 0} - \eta \sum\nolimits_{l=0}^{\mathrm{L}_{u}^{k,e}-1} g_u(\mathbf{w}_{u}^{k,e,l}),
\end{aligned}
\end{equation} 
where $\eta$ is the learning rate and $g_u(\mathbf{w}_{u}^{k,e,l})$ is the stochastic gradient.
Note that $1 \leq \mathrm{L}_{u}^{k,e} \leq \mathrm{L}$, where $\mathrm{L}$ is the maximum local \ac{sgd} rounds.
Besides, $\mathrm{L}_u^{k,e}$ can be different for different clients because each client has a deadline of $\mathrm{t_{th}}$ seconds and an energy budget of $\mathrm{e}_{u,\mathrm{bd}}$ Joules during each edge round.

Note that, in each \ac{sgd} step, we let each client randomly sample $\mathrm{n}$ mini-batches, which yields the following local computation time overhead 
\begin{equation}
\label{localCompTime}
    \mathrm{t}_{u,\mathrm{cp}}^{k,e} = \mathrm{L}_{u}^{k,e} \times \mathrm{n} \bar{\mathrm{n}} \mathrm{c}_u \mathrm{D}_u / f_u^{k,e},
\end{equation}
where $\bar{\mathrm{n}}$, $\mathrm{c}_u$, $\mathrm{D}_u$ and $f_u^{k,e}$ are the batch size, number of \ac{cpu} cycles to compute $1$-bit data, data sample size in bits, and the \ac{cpu} frequency.
The corresponding energy expense is calculated as \cite{qiao2022adaptive,pervej2024hierarchical,hosseinalipour2023parallel}
\begin{equation}
\begin{aligned}
    \mathrm{e}_{u,\mathrm{cp}}^{k,e} 
    &= \mathrm{L}_{u}^{k,e} \times 0.5\zeta \mathrm{n} \bar{\mathrm{n}} \mathrm{c}_u \mathrm{D}_u (f_u^{k,e})^2,
\end{aligned}
\end{equation}
where $0.5\zeta$ is the effective capacitance of the CPU chip.

\subsubsection*{\textbf{Step $4$ - Trained Accumulated Gradients Offloading}}
Once the local training is finished, each client sends its accumulated gradients $\tilde{g}_u^{k,e} \coloneqq \sum_{l=0}^{\mathrm{L}_u^{k,e}-1} g_u(\mathbf{w}_{u}^{k,e,l})$ to its associated \ac{bs} as an encrypted payload.
This uplink communication causes time and energy overheads for the clients.
The required time to complete the uplink transmission is calculated as
\begin{equation}
\label{gradientOffload}
    \mathrm{t}_{u,\mathrm{up}}^{k,e} = \mathrm{s}/ \big(\omega \log_2 \big[1 + \gamma_{u}^{k,e}\big] \big),
\end{equation}
where $\mathrm{s} = d \times (\phi+1)$ bits \cite{pervej2023Resource} is the payload size, $\phi$ is the \ac{fpp}, $\omega$ is the bandwidth of a \ac{prb} and $\gamma_u^{k,e}$ is the \ac{snr}, which is derived as\footnote{The \emph{small-scale} fading channel $h_u^{k,e}$ can also be accommodated in the \ac{snr} calculation by modifying (\ref{snr}) as $\gamma_{u}^{k,e} = \beta_{u}^{k,e} \zeta_u^{k,e} P_{u,\mathrm{tx}}^{k,e} \left\vert h_u^{k,e} \right\vert^2 / (\omega \varsigma^2)$. 
Moreover, when the \ac{csi} is imperfect, the UE can use a lower \ac{mcs} with proper \ac{arq} that may scale down the data rate.}
\begin{equation}
\label{snr}
    \gamma_{u}^{k,e} = \beta_{u}^{k,e} \zeta_u^{k,e} P_{u,\mathrm{tx}}^{k,e} / (\omega \varsigma^2), 
\end{equation}
where $P_{u,\mathrm{tx}}^{k,e}$ is the uplink transmission power, $\beta_u^{k,e}$ is the large-scale path loss, $\zeta_u^{k,e}$ is the log-Normal shadowing and $\varsigma^2$ is the noise variance. 
Besides, we calculate the energy expense for the client's uplink transmission as
\begin{equation}
    \mathrm{e}_{u,\mathrm{up}}^{k,e} = \mathrm{s} P_{u,\mathrm{tx}}^{k,e} / \big(\omega \log_2 \big[1 + \gamma_{u}^{k,e}\big]\big).
\end{equation}

\subsubsection*{\textbf{Step $5$ - Edge Round Completion and Aggregation}}
Once the \ac{bs} receives client's $\tilde{g}_u^{k,e}$, it forwards the payload to its \ac{es}. 
Note that the overheads in forwarding the received $\tilde{g}_u^{k,e}$ to the \ac{es} are ignored since the \ac{es} is embedded in the same \ac{bs}.
When the deadline $\mathrm{t_{th}}$ expires, the \ac{es} aggregates the received gradients and updates its edge model as 
\begin{equation}
\label{edgeUpdateRule}
\begin{aligned}
    \mathbf{w}_b^{k, e+1} 
    &= \mathbf{w}_{b}^{k,e} - \eta \sum\nolimits_{u \in \bar{\mathcal{U}}_b^{k,e}} \alpha_u \big[\mathrm{1}_{u,\mathrm{sc}}^{k,e}/\mathrm{p}_{u,\mathrm{sc}}^{k,e}\big]
    \tilde{g}_u^{k,e},
\end{aligned}
\end{equation}
where $\alpha_u \coloneqq 1/|\bar{\mathcal{U}}_b^{k,e}|$ and $\mathrm{1}_{u,\mathrm{sc}}^{k,e}$ is defined as 
\begin{equation}
\begin{aligned}
    \mathrm{1}_{u,\mathrm{sc}}^{k,e} &= 
    \begin{cases}
        1, &\text{with probability } \mathrm{p}_{u, \mathrm{sc}}^{k,e},\\
        0, &\text{otherwise},
    \end{cases}.
\end{aligned}
\end{equation}
We use $\mathrm{1}_{u,\mathrm{sc}}^{k,e}$ to capture whether $\tilde{g}_u^{k,e}$ is received by the \ac{bs} successfully within the allowable deadline.
It is worth noting that the client selection strategy can allow us to select similar computationally capable clients who can perform nearly identical local \ac{sgd} rounds.
Besides, we consider mini-batch \ac{sgd}, and each client selects $\mathrm{n}$ mini-batches.
As such, putting equal weight on the client's accumulated gradients is practical\footnote{Different weighting strategies can also be easily incorporated.}.

To that end, each \ac{es} checks whether they have completed $E$ edge rounds, i.e., if $(e+1)=E$. 
If the \acp{es} do not complete $E$ edge rounds, they repeat \emph{\textbf{Step} $2$} to \emph{\textbf{Step} $5$}. 
Upon completing $E$ edge rounds, each \ac{es} forwards its updated edge model to the \ac{cs}\footnote{The transmission between the \ac{es} and the \ac{cs} happens in the network's backhaul, and the corresponding time and energy overheads incur on the \ac{es} and the \ac{bs}. 
We kept our focus on the overheads for the clients and thus ignored these overheads.} using the \ac{isp}'s infrastructure.

\subsubsection*{\textbf{Step $6$ - Global Round Completion and Aggregation}}
Upon receiving the updated edge models, the \ac{cs} performs global aggregation and updates the global model as 
\begin{equation}
\label{globalUpdateRule}
\begin{aligned}
    \rs\rs \rs \rs \mathbf{w}^{k+1} 
    \rs= \mathbf{w}^k \rs - \eta \sum\nolimits_{e=0}^{E-1} \sum\nolimits_{b=0}^{B-1} \rs \alpha_b \sum\nolimits_{u \in \bar{\mathcal{U}}_b^{k,e}} \alpha_u \big[\mathrm{1}_{u,\mathrm{sc}}^{k,e} / \mathrm{p}_{u,\mathrm{sc}}^{k,e}\big] \tilde{g}_u^{k,e} \rs. \rs\rs\rs\rs
\end{aligned}   
\end{equation}
It is worth noting that the \ac{cs} minimizes the following global loss function in our proposed \ac{rawhfl} algorithm.
\begin{equation}
\label{centralLossRawHFL}
    f(\mathbf{w}^k | \cup_{b=0}^{B-1} \cup_{u \in \bar{\mathcal{U}}_b^{k,e}} \mathcal{D}_{u,\mathrm{proc}}^t) \coloneqq \sum\nolimits_{b=0}^{B-1} \alpha_b f_{b} \big(\mathbf{w}^k | \cup_{u \in \bar{\mathcal{U}}_b^{k,e}} \mathcal{D}_{u, \mathrm{proc}}^t \big).
\end{equation}

\emph{\textbf{Step 1}} to \emph{\textbf{Step 6}} are repeated for $K$ global rounds.
These steps are summarized in Algorithm \ref{rawHFLAlg}.
Note that from hereon onward, we have used $f_u(\mathbf{w})$, $f_b(\mathbf{w})$ and $f(\mathbf{w})$ to represent $f_u(\mathbf{w}| \mathcal{D}_{u,\mathrm{proc}}^t)$, $f_b(\mathbf{w} | \cup_{u \in \bar{\mathcal{U}}_b^{k,e}} \mathcal{D}_{u,\mathrm{proc}}^t)$ and $f(\mathbf{w} | \cup_{b=0}^{B-1} \cup_{u \in \bar{\mathcal{U}}_b^{k,e}} \mathcal{D}_{u,\mathrm{proc}}^t)$, respectively, for notational simplicity.

\begin{algorithm}[!t]
\small
\SetAlgoLined 
\DontPrintSemicolon
\KwIn{Global model: $\mathbf{w}^0$; total global round $K$, number of edge rounds $E$ }
\For{$k=0$ to $K-1$}{
    \For {$b=1$ to $B$ in parallel}{
        Receives updated global model $\mathbf{w}_b^{k,0} \gets \mathbf{w}^k$ \;
        \For {$e=0$ to $E-1$} {
            BS $b$ receives optimized $\mathrm{1}_{u, \mathrm{sl}}^{k,e}, \mathrm{L}_{u}^{k,e}$ and $f_{u}^{k,e}$ \tcp*{\textit{c.f.} (\ref{originalOptimProb})}
            \For{$u$ in $\bar{\mathcal{U}}_b^{k,e}$ in parallel}{
                Get updated edge model $\mathbf{w}_{u}^{k,e,0} \gets \mathbf{w}_b^{k,e}$ \;
                Perform $\mathrm{L}_u^{k,e}$ mini-batch SGD rounds and get updated local model based on (\ref{localModUp}) \;
                Offload accumulated gradients $\tilde{g}_u^{k,e}$ to the \ac{bs}\;
            }
            Update edge model $\mathbf{w}_b^{k, e+1}$ using update rule in (\ref{edgeUpdateRule})
        }
    }
    Update global model $\mathbf{w}^{k+1}$ based on the update rule in (\ref{globalUpdateRule}) \;
}
\KwOut{Trained global model $\mathbf{w}^K$}
\caption{RawHFL Algorithm}
\label{rawHFLAlg}
\end{algorithm}


\subsection{Convergence of RawHFL}
\label{convAnalysis_Section}

\noindent
We make the following assumptions for our theoretical analysis, which are standard in the literature \cite{ pervej2024hierarchical, wang2022demystifying, hosseinalipour2023parallel,pervej2023Resource}.
\begin{Assumption}
\label{betaSmoothAssump}
    $\beta$-smoothness: The loss functions in all nodes are $\beta$-smooth, i.e., $\Vert \nabla f(\mathbf{w}) - \nabla f(\mathbf{w}')\Vert \leq \beta \Vert \mathbf{w} - \mathbf{w}' \Vert$, where $\Vert \cdot \Vert$ is the $L_2$ norm. 
\end{Assumption}
\begin{Assumption}
    Unbiased \ac{sgd}: mini-batch gradients are unbiased, i.e., $\mathbb{E}_{\xi \sim \mathcal{D}_{u,\mathrm{proc}}} \left[ g_u(\mathbf{w}) \right] = \nabla f_u(\mathbf{w})$, where $\mathbb{E} [\cdot]$ is the expectation operator, and $\xi$ is client's randomly sampled mini-batch. 
\end{Assumption}
\begin{Assumption}
    Bounded variance: variance of the gradients is bounded, i.e., $\Vert g_u(\mathbf{w}) - \nabla f_u(\mathbf{w}) \Vert^2 \leq \sigma^2$.
\end{Assumption}
\begin{Assumption}
    Independence: a) the \textit{stochastic gradients} are independent of each other in different episodes and b) \textit{accumulated gradient offloading} is independent of client selection and each other in each edge round $e$. 
\end{Assumption}
\begin{Assumption}
    Bounded divergence: divergence between the a) local and edge and b) edge and global loss functions are bounded, i.e., for all $u$, $b$ and $\mathbf{w}$
    \begin{align}
        &\sum\nolimits_{u \in \bar{\mathcal{U}}_b^{k,e}} \alpha_u \Vert \nabla f_u(\mathbf{w}) - \nabla f_b(\mathbf{w}) \Vert^2 \leq \epsilon_0^2,\\
        &\sum_{b=0}^{B-1} \alpha_b \Big\Vert \sum_{u \in \bar{\mathcal{U}}_b^{k,e}} \rs \rs \alpha_u \nabla \tilde{f}_{u} (\mathbf{w}) - \sum_{b'=0}^{B-1} \alpha_{b'} \rs \rs \sum_{u' \in \bar{\mathcal{U}}_{b'}^{k,e}} \rs\rs \alpha_{u'} \nabla \tilde{f}_{u'} (\mathbf{w}) \Big\Vert^2 \leq \epsilon_1^2, \label{ES_Central_Loss_Divergence}
    \end{align}
    where $\nabla \tilde{f}_u (\mathbf{w}) \coloneqq \sum_{l=0}^{\mathrm{L}_u^{k,e} - 1} \nabla f_u (\mathbf{w})$.
\end{Assumption}

Note that our proposed \ac{rawhfl} algorithm acknowledges data and system heterogeneity. 
While partial client selection may handle the system heterogeneity, clients' data heterogeneity still exists\footnote{Note that tackling heterogeneous data distributions requires special measures (see \cite{zhu2021datafree} and the references therein), which are beyond the scope of this paper.}.
Owing to these two types of heterogeneity, we derive the converge bound of the proposed \ac{rawhfl} algorithm below.


\begin{Theorem}
\label{theorem1}
Suppose $\eta < \mathrm{min}\left\{\frac{1}{2\sqrt{5} \beta \mathrm{L}}, \frac{1}{\beta E \mathrm{L}} \right\}$ and the above assumptions hold.
Then, the average global gradient norm from $K$ global rounds of \ac{rawhfl} is upper-bounded as
\begin{align}
\label{theorem1_eqn}
    & \frac{1}{K} \sum_{k=0}^{K-1} \mathbb{E} \big[\Vert \nabla f(\mathbf{w}^k) \Vert^2 \big]
    \leq \frac{2}{\eta K} \sum_{k=0}^{K-1} \frac{1}{\Omega^k} \Big\{ \mathbb{E} [ f(\mathbf{w}^k) ] - \mathbb{E} [ f(\mathbf{w}^{k+1}) ] \Big\} \nonumber\\
    & + \frac{2 \beta \eta \mathrm{L} \sigma^2}{K} \sum_{k=0}^{K-1} \frac{\mathrm{N}_1^k}{\Omega^k} + \frac{18 E \beta^2 \epsilon_0^2 \eta^2 \mathrm{L}^3}{K} \sum_{k=0}^{K-1} \frac{\mathrm{N}_2}{\Omega^k} + \nonumber\\
    &\frac{20 \mathrm{L} \beta^2 \epsilon_1^2 \eta^2 E^3}{K} \sum_{k=0}^{K-1} \frac{1}{\Omega^k} + \frac{2 \beta \eta \mathrm{L}}{K} \sum_{k=0}^{K-1} \frac{1}{\Omega^k} \sum_{e=0}^{E-1} \sum_{b=0}^{B-1} \alpha_b \times \nonumber\\
    & \rs \rs \sum\nolimits_{u \in \bar{\mathcal{U}}_b^{k,e}} \alpha_u \mathrm{N}_u  \big[(1/\mathrm{p}_{u,\mathrm{sc}}^{k,e}) - 1 \big] \sum\nolimits_{l=0}^{\mathrm{L}_u^{k,e} - 1} \mathbb{E} \big[\big\Vert g_u (\mathbf{w}_u^{k, e, l}) \big\Vert^2 \big],
\end{align}
where the expectations depend on clients' randomly selected mini-batches and $\mathrm{1}_{u,\mathrm{sc}}^{k,e}$'s. 
Besides, $\Omega^k \coloneqq \sum_{e=0}^{E-1} \sum_{b=0}^{B-1} \alpha_b \sum_{u \in \bar{\mathcal{U}}_b^{k,e}} \alpha_u \mathrm{L}_{u}^{k,e}$, $\mathrm{N}_1^k \coloneqq  60 \beta^3 \eta^3 E^3 \mathrm{L}^3 + 3 \beta \eta E \mathrm{L} +  \sum_{e=0}^{E-1} \sum_{b=0}^{B-1} \alpha_b  \left(\alpha_b + 4 E \mathrm{L} \beta \eta \right) \sum_{u \in \bar{\mathcal{U}}_b^{k,e}} \left(\alpha_u\right)^2$, $\mathrm{N}_2 \coloneqq 1 + 20 \beta^2 \eta^2 E^2 \mathrm{L}^2$ and $\mathrm{N}_{u} \coloneqq  E + 3 \beta \eta \mathrm{L} + 4 \beta \eta E \left(\alpha_u + 15 E \beta^2 \eta^2 \mathrm{L}^3 \right)$.
\end{Theorem}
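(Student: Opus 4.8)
The plan is to apply the $\beta$-smoothness descent lemma (Assumption \ref{betaSmoothAssump}) across a single global round and then telescope over the $K$ global rounds. First I would write $\mathbb{E}[f(\mathbf{w}^{k+1})] \leq \mathbb{E}[f(\mathbf{w}^k)] + \mathbb{E}[\langle \nabla f(\mathbf{w}^k), \mathbf{w}^{k+1} - \mathbf{w}^k\rangle] + \tfrac{\beta}{2}\mathbb{E}[\Vert \mathbf{w}^{k+1} - \mathbf{w}^k\Vert^2]$ and substitute the global update rule (\ref{globalUpdateRule}) for $\mathbf{w}^{k+1} - \mathbf{w}^k$. The independence assumption together with $\mathbb{E}[\mathrm{1}_{u,\mathrm{sc}}^{k,e}/\mathrm{p}_{u,\mathrm{sc}}^{k,e}] = 1$ and unbiased \ac{sgd} lets me replace the stochastic accumulated gradients by their expectations $\sum_{l} \nabla f_u(\mathbf{w}_u^{k,e,l})$ inside the inner-product term, while the second moment $\mathbb{E}[(\mathrm{1}_{u,\mathrm{sc}}^{k,e}/\mathrm{p}_{u,\mathrm{sc}}^{k,e})^2] = 1/\mathrm{p}_{u,\mathrm{sc}}^{k,e}$ generates the characteristic $(1/\mathrm{p}_{u,\mathrm{sc}}^{k,e} - 1)$ factors in the final line of (\ref{theorem1_eqn}).

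The core of the argument is the inner-product term, because the local gradients are evaluated at the drifted iterates $\mathbf{w}_u^{k,e,l}$ rather than at $\mathbf{w}^k$. I would insert $\pm \nabla f(\mathbf{w}^k)$ and use standard inner-product identities (e.g., $2\langle a, b\rangle = \Vert a\Vert^2 + \Vert b\Vert^2 - \Vert a - b\Vert^2$) to extract a dominant negative term $-\tfrac{\eta}{2}\Omega^k \Vert \nabla f(\mathbf{w}^k)\Vert^2$, which is exactly what produces the normalizing factor $\Omega^k$ and the leading $\tfrac{2}{\eta K}\sum_k (1/\Omega^k)\{\cdots\}$ after rearranging. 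The residual $\Vert \nabla f_u(\mathbf{w}_u^{k,e,l}) - \nabla f(\mathbf{w}^k)\Vert^2$ would then be split, via $\beta$-smoothness and the two bounded-divergence inequalities, into a client-to-edge piece yielding the $\epsilon_0^2$ term, an edge-to-global piece yielding the $\epsilon_1^2$ term, and a model-drift piece $\beta^2 \mathbb{E}[\Vert \mathbf{w}_u^{k,e,l} - \mathbf{w}^k\Vert^2]$ handled separately. The second-order term $\mathbb{E}[\Vert \mathbf{w}^{k+1} - \mathbf{w}^k\Vert^2]$ is expanded analogously, contributing the bounded-variance $\sigma^2$ terms.

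Controlling the drift is the main obstacle and the source of the high-order coefficients $\mathrm{N}_1^k$, $\mathrm{N}_2$, $\mathrm{N}_u$ in powers of $E$, $\mathrm{L}$, and $\beta\eta$. I would establish two nested recursions: an inner one bounding $\mathbb{E}[\Vert \mathbf{w}_u^{k,e,l} - \mathbf{w}_b^{k,e}\Vert^2]$ across the $l = 0, \dots, \mathrm{L}_u^{k,e}-1$ local \ac{sgd} steps (each step contributing a variance piece $\eta^2\sigma^2$ and a gradient-norm piece), and an outer one bounding $\mathbb{E}[\Vert \mathbf{w}_b^{k,e} - \mathbf{w}^k\Vert^2]$ across the $e = 0, \dots, E-1$ edge rounds. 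Unrolling these recursions and using $\mathrm{L}_u^{k,e} \leq \mathrm{L}$ yields the cubic terms such as $60\beta^3\eta^3 E^3\mathrm{L}^3$ and $15 E \beta^2 \eta^2 \mathrm{L}^3$; the stated condition $\eta < \min\{1/(2\sqrt{5}\beta\mathrm{L}), 1/(\beta E\mathrm{L})\}$ is precisely what forces each recursion to contract (so the geometric sums converge) and keeps the coefficient of $\Vert \nabla f(\mathbf{w}^k)\Vert^2$ positive. Finally I would substitute the drift bounds back, telescope $\sum_k \{\mathbb{E}[f(\mathbf{w}^k)] - \mathbb{E}[f(\mathbf{w}^{k+1})]\}$, divide through by $\tfrac{\eta}{2}\Omega^k$, and average over $k$ to obtain (\ref{theorem1_eqn}). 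The delicate part is not any single conceptual step but the bookkeeping of constants propagating through the two coupled recursions across the local, edge, and global tiers.
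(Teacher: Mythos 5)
Your proposal follows essentially the same route as the paper's proof: the $\beta$-smoothness descent step applied to the global update rule, separation into an inner-product term (handled via unbiasedness, $\mathbb{E}[\mathrm{1}_{u,\mathrm{sc}}^{k,e}/\mathrm{p}_{u,\mathrm{sc}}^{k,e}]=1$, and the polarization identity that extracts $-\tfrac{\eta}{2}\Omega^k\Vert\nabla f(\mathbf{w}^k)\Vert^2$) and a quadratic term (whose second moments produce the $\sigma^2$ and $(1/\mathrm{p}_{u,\mathrm{sc}}^{k,e}-1)$ factors), followed by bounding the client-to-edge and edge-to-global drift terms in two separate lemmas whose self-bounding inequalities are resolved exactly by the stated learning-rate conditions, and finally telescoping over $k$. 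This matches the paper's argument, including the role of the two divergence assumptions and where each constant ($\mathrm{N}_1^k$, $\mathrm{N}_2$, $\mathrm{N}_u$) originates.
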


\begin{proof}[\textbf{Sketch of Proof}]
Using the aggregation rule defined in (\ref{globalUpdateRule}) and $\beta$-smoothness assumption, we start with 
\begin{align}
    &f(\mathbf{w}^{k+1}) 
    \leq f(\mathbf{w}^k) - \nonumber \\
    & \eta \bigg<\nabla f(\mathbf{w}^k),  \sum_{e=0}^{E-1} \sum_{b=0}^{B-1} \alpha_b\sum_{u \in \bar{\mathcal{U}}_b^{k,e}} \alpha_u \frac{\mathrm{1}_{u,\mathrm{sc}}^{k,e}}{\mathrm{p}_{u,\mathrm{sc}}^{k,e}} \sum_{l=0}^{\mathrm{L}_{u}^{k,e}-1} g_u\big(\mathbf{w}_u^{k, e,l}\big) \bigg> + \nonumber\\
    & \frac{\beta \eta^2}{2} \bigg\Vert \sum_{e=0}^{E-1} \sum_{b=0}^{B-1} \alpha_b\sum_{u \in \bar{\mathcal{U}}_b^{k,e}} \alpha_u \frac{\mathrm{1}_{u,\mathrm{sc}}^{k,e}}{\mathrm{p}_{u,\mathrm{sc}}^{k,e}} \tilde{g}_u^{k,e} \bigg\Vert^2,
\end{align}
where $\left< \mathbf{a}, \mathbf{b} \right>$ is the inner product of the vectors $\mathbf{a}$ and $\mathbf{b}$.

Then, we derive the upper bounds of the second and third terms. 
Plugging these results and assuming $\eta \leq \frac{1}{\beta E \mathrm{L}}$, we get the following after doing some algebraic manipulations.
\begin{align}
\label{theorem1_Mid_Bound}
    &\frac{1}{K} \sum_{k=0}^{K-1} \mathbb{E} \left[\left\Vert \nabla f(\mathbf{w}^k)\right\Vert^2 \right]
    \leq \frac{2}{\eta K} \sum_{k=0}^{K-1} \left[ \frac{\mathbb{E} [f(\mathbf{w}^k)] - \mathbb{E} \left[ f(\mathbf{w}^{k+1}) \right] }{\Omega^k} \right] \nonumber\\
    &+\frac{2 \beta \eta \sigma^2}{K} \sum_{k=0}^{K-1} \left[\frac{\sum_{e=0}^{E-1} \sum_{b=0}^{B-1} \left(\alpha_b\right)^2 \sum_{u \in \bar{\mathcal{U}}_b^{k,e}} \left(\alpha_u \right)^2 \mathrm{L}_u^{k,e} }{\Omega^k} \right] + \nonumber\\
    & \frac{2 \beta \eta E}{K} \sum_{k=0}^K \frac{1}{\Omega^k} \sum_{e=0}^{E-1} \sum_{b=0}^{B-1} \alpha_b \rs\rs \sum_{u \in \bar{\mathcal{U}}_b^{k,e}} \rs\rs\rs \alpha_u \mathrm{L}_u^{k,e} \sum_{l=0}^{\mathrm{L}_u^{k,e} - 1} \rs \bigg[\frac{1} {\mathrm{p}_{u,\mathrm{sc}}^{k,e}} - 1 \bigg] \times \nonumber\\
    &\mathbb{E} \big[\big\Vert g_u (\mathbf{w}_u^{k, e, l}) \big\Vert^2 \big] +  \frac{2 \mathrm{L} \beta^2}{K} \sum_{k=0}^{K-1} \frac{1}{\Omega^k} \sum_{e=0}^{E-1} \sum_{b=0}^{B-1} \alpha_b \mathbb{E} \Big[ \big\Vert \mathbf{w}^k - \mathbf{w}_{b}^{k,e} \big\Vert^2 \Big]  \nonumber\\
    &+ \frac{2 \beta^2}{K} \sum_{k=0}^{K-1} \frac{1}{\Omega^k} \sum_{e=0}^{E-1} \sum_{b=0}^{B-1} \rs \alpha_b \rs \rs \rs \sum_{u \in \bar{\mathcal{U}}_b^{k,e}} \rs \rs \rs \alpha_u \rs \sum_{l=0}^{\mathrm{L}_u^{k,e} - 1} \rs \mathbb{E} \Big[ \big\Vert \mathbf{w}_{b}^{k,e} - \mathbf{w}_{u}^{k,e,l} \big\Vert^2 \Big]. \rs
\end{align}

When learning rate $\eta \leq \mathrm{min} \left\{\frac{1}{\beta E \mathrm{L}}, \frac{1}{3\sqrt{2} \beta \mathrm{L}} \right\}$, we can derive the difference between the client's model and the \ac{es}'s model in (\ref{theorem1_Mid_Bound}) as
\begin{align}
\label{lemama1_Eqn}
    &\frac{1}{K} \sum_{k=0}^{K-1} \frac{1}{\Omega^k} \sum_{e=0}^{E-1} \sum_{b=0}^{B-1} \alpha_b\sum_{u \in \bar{\mathcal{U}}_b^{k,e}} \alpha_u \sum_{l=0}^{\mathrm{L}_u^{k,e} - 1} \mathbb{E} \left[ \left\Vert \mathbf{w}_{b}^{k,e} - \mathbf{w}_{u}^{k,e,l} \right\Vert^2 \right] \nonumber\\
    &\leq \frac{3 E \mathrm{L}^2 \eta^2 \sigma^2}{K} \rs \sum_{k=0}^{K-1} \rs \frac{1}{\Omega^k} + \frac{9 E \epsilon_0^2 \eta^2 \mathrm{L}^3}{K} \rs \sum_{k=0}^{K-1} \rs \frac{1}{\Omega^k} + \frac{3 \mathrm{L}^2 \eta^2}{K} \rs \sum_{k=0}^{K-1} \rs \frac{1}{\Omega^k} \times \nonumber\\
    &\sum_{e=0}^{E-1} \sum_{b=0}^{B-1} \rs \alpha_b \rs \rs \sum_{u \in \bar{\mathcal{U}}_b^{k,e}} \rs\rs  \alpha_u \rs \bigg[\frac{1}{\mathrm{p}_{u,\mathrm{sc}}^{k,e}} - 1 \bigg]\sum_{l=0}^{\mathrm{L}_u^{k,e} - 1} \mathbb{E} \left[ \left\Vert g_u (\mathbf{w}_u^{k, e, l})  \right\Vert^2 \right].
\end{align}
Similarly, when $\eta < \mathrm{min}\left\{\frac{1}{2\sqrt{5} \beta \mathrm{L}}, \frac{1}{\beta E \mathrm{L}} \right\}$, we have 
\begin{align}
\label{lemma2_eqn}
    &\frac{1}{K} \sum_{k=0}^{K-1} \frac{1}{\Omega^k} \sum_{e=0}^{E-1} \sum_{b=0}^{B-1} \alpha_b \mathbb{E} \left[ \left\Vert \mathbf{w}^k - \mathbf{w}_{b}^{k,e} \right\Vert^2 \right] \nonumber\\
    &\leq \frac{4 E \eta^2 \sigma^2} {K} \sum_{k=0}^{K-1} \frac{1}{\Omega^k} \sum_{e=0}^{E-1} \sum_{b=0}^{B-1} \alpha_b \rs\rs \sum_{u \in \bar{\mathcal{U}}_b^{k,e}} \rs \left(\alpha_u\right)^2 \mathrm{L}_u^{k,e} + \nonumber\\
    &\frac{60 \beta^2 \sigma^2 E^3 \mathrm{L}^3 \eta^4}{K} \sum_{k=0}^{K-1} \frac{1}{\Omega^k} + \frac{180 \beta^2 \epsilon_0^2 E^3 \mathrm{L}^4 \eta^4}{K} \sum_{k=0}^{K-1} \frac{1}{\Omega^k} + \nonumber\\
    &\frac{10 \epsilon_1^2 \eta^2 E^3}{K} \sum_{k=0}^{K-1} \frac{1}{\Omega^k} +\frac{4 E \eta^2}{K} \sum_{k=0}^{K-1} \frac{1}{\Omega^k} \sum_{e=0}^{E-1} \sum_{b=0}^{B-1} \rs \alpha_b \rs\rs \sum_{u \in \bar{\mathcal{U}}_b^{k,e}} \alpha_u \big(\alpha_u + \nonumber\\
    &15 E \beta^2 \eta^2 \mathrm{L}^3 \big) \rs \bigg[\frac{1}{\mathrm{p}_{u,\mathrm{sc}}^{k,e}} - 1 \bigg] \sum_{l=0}^{\mathrm{L}_u^{k,e} - 1} \mathbb{E} \left[\left\Vert g_u (\mathbf{w}_u^{k, e, l}) \right\Vert^2\right].
\end{align}
Finally, we plug (\ref{lemama1_Eqn}) and (\ref{lemma2_eqn}) into (\ref{theorem1_Mid_Bound}) to find the final bound in (\ref{theorem1_eqn}).  
The detailed proof is left to the supplementary material.
\end{proof}


\begin{Remark}
\label{remTheorem}
The first diminishing term with $\{\cdot\}$ on the right-hand side of (\ref{theorem1_eqn}) captures the change in the global loss function in two consecutive global rounds. 
The second term with $\sigma^2$ arises due to the bounded variance assumption of the stochastic gradients.
The third term stems from the assumption that the divergence between the local loss function and \ac{es}' loss function due to statistical data heterogeneity is upper bounded by $\epsilon_0^2$. 
Similarly, the fourth term arises due to the bounded divergence assumption between the \ac{es}' loss function and the global loss function due to data heterogeneity. 
Moreover, the fifth term stems from the wireless links between the clients and their serving \acp{bs}.
\end{Remark}


\begin{Corollary}
\label{corol_1}
When all clients perform the same number of \ac{sgd} rounds, i.e., $\mathrm{L}_u^{k,e}=\mathrm{L}$, $\forall u, k$ and $e$, the convergence bound in Theorem \ref{theorem1} boils down to 
\begin{align*}
\label{corr0}
    & \frac{1}{K} \sum_{k=0}^{K-1} \mathbb{E} \big[\Vert \nabla f(\mathbf{w}^k) \Vert^2 \big]
    \leq \frac{2 ( \mathbb{E} [f(\mathbf{w}^0)] - \mathbb{E} [ f(\mathbf{w}^{K}) ])} {\eta E K \mathrm{L}} + \nonumber\\
    &[2 \beta \eta \sigma^2 / (E K) ] \sum\nolimits_{k=0}^{K-1} \mathrm{N}_1^k + 18 \beta^2 \epsilon_0^2 \eta^2 \mathrm{L}^2 \mathrm{N}_2 + 20 \beta^2 \epsilon_1^2 \eta^2 E^2 +\nonumber\\
    & \frac{2 \beta \eta}{E K} \sum_{k=0}^K \sum_{e=0}^{E-1} \sum_{b=0}^{B-1} \rs \alpha_b \rs \rs \rs \sum_{u \in \bar{\mathcal{U}}_b^{k,e}} \rs\rs\rs \alpha_u \mathrm{N}_u \rs\rs \sum_{l=0}^{\mathrm{L}_u^{k,e} - 1} \rs \bigg[\frac{1}{\mathrm{p}_{u,\mathrm{sc}}^{k,e}} - 1 \bigg] \mathbb{E} \big[\big\Vert g_u (\mathbf{w}_u^{k, e, l}) \big\Vert^2 \big],
\end{align*}
Moreover, when $\mathrm{p}_{u,\mathrm{sc}}^{k,e}$'s are $1$'s, i.e., all selected clients' accumulated gradients are received successfully, we have 
\begin{equation}
\label{cor1Eqn}
\begin{aligned}
    & \frac{1}{K} \sum_{k=0}^{K-1} \mathbb{E} \big[\Vert \nabla f(\mathbf{w}^k) \Vert^2 \big]
    \leq \frac{2 ( \mathbb{E} [f(\mathbf{w}^0)] - \mathbb{E} [ f(\mathbf{w}^{K}) ])} {\eta E K \mathrm{L}} + \\
    & \frac{2 \beta \eta \sigma^2}{E K} \sum\nolimits_{k=0}^{K-1} \mathrm{N}_1^k + 18 \beta^2 \epsilon_0^2 \eta^2 \mathrm{L}^2 \mathrm{N}_2 + 20 \beta^2 \epsilon_1^2 \eta^2 E^2,
\end{aligned}
\end{equation}
which is analogous to existing studies \cite{pervej2024hierarchical, wang2022demystifying}.
\end{Corollary}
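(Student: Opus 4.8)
The plan is to obtain both displayed bounds by direct specialization of Theorem~\ref{theorem1}, exploiting the fact that imposing $\mathrm{L}_u^{k,e}=\mathrm{L}$ for all $u,k,e$ collapses the round-dependent normalizer $\Omega^k$ to a $k$-independent constant. First I would substitute $\mathrm{L}_u^{k,e}=\mathrm{L}$ into $\Omega^k \coloneqq \sum_{e=0}^{E-1}\sum_{b=0}^{B-1}\alpha_b\sum_{u\in\bar{\mathcal{U}}_b^{k,e}}\alpha_u\mathrm{L}_u^{k,e}$ and invoke the normalization $\sum_{u\in\bar{\mathcal{U}}_b^{k,e}}\alpha_u=1$ (stated just after (\ref{esLossRawHFL})) together with the analogous weighting $\sum_{b=0}^{B-1}\alpha_b=1$, so that the two inner sums collapse and the surviving sum over $e$ yields $\Omega^k=E\mathrm{L}$ for every $k$.

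With $\Omega^k=E\mathrm{L}$ the five terms of (\ref{theorem1_eqn}) simplify one by one. The leading term becomes $\frac{2}{\eta K E\mathrm{L}}\sum_{k=0}^{K-1}\{\mathbb{E}[f(\mathbf{w}^k)]-\mathbb{E}[f(\mathbf{w}^{k+1})]\}$, which telescopes to $\frac{2(\mathbb{E}[f(\mathbf{w}^0)]-\mathbb{E}[f(\mathbf{w}^K)])}{\eta E K\mathrm{L}}$. In the second term the factor $\mathrm{L}/\Omega^k=1/E$ is pulled out, leaving $\frac{2\beta\eta\sigma^2}{EK}\sum_{k=0}^{K-1}\mathrm{N}_1^k$. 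For the third and fourth terms the summand is independent of $k$ (recall $\mathrm{N}_2$ is constant), so $\sum_{k=0}^{K-1}1/\Omega^k=K/(E\mathrm{L})$ cancels the explicit $1/K$ and one power each of $E$ and $\mathrm{L}$, producing $18\beta^2\epsilon_0^2\eta^2\mathrm{L}^2\mathrm{N}_2$ and $20\beta^2\epsilon_1^2\eta^2E^2$, respectively. The fifth (wireless-link) term retains its double sum but again absorbs $\mathrm{L}/\Omega^k=1/E$, with the inner SGD sum now running to $\mathrm{L}-1$, reproducing the last line of the first displayed bound.

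For the final simplification I would set every $\mathrm{p}_{u,\mathrm{sc}}^{k,e}=1$, so that the multiplier $[(1/\mathrm{p}_{u,\mathrm{sc}}^{k,e})-1]$ in the fifth term is identically zero; the entire wireless-link contribution then vanishes, leaving exactly (\ref{cor1Eqn}). There is no substantive obstacle here, as the result is a pure bookkeeping specialization; the only care needed is in tracking how the single power of $\mathrm{L}$ in front of terms two, three, and five combines with $1/\Omega^k=1/(E\mathrm{L})$, and in confirming that the telescoping genuinely relies on $\Omega^k$ being $k$-independent, which is precisely what the constant-$\mathrm{L}$ hypothesis secures and what would fail if the $\mathrm{L}_u^{k,e}$ were permitted to differ across clients and rounds.
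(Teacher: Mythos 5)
Your proposal is correct and matches the paper's (implicit) argument exactly: the corollary is obtained by direct specialization of Theorem~\ref{theorem1}, using $\sum_{u \in \bar{\mathcal{U}}_b^{k,e}} \alpha_u = 1$ and $\sum_{b=0}^{B-1} \alpha_b = 1$ to collapse $\Omega^k$ to the constant $E\mathrm{L}$, telescoping the loss-difference term, and cancelling powers of $E$ and $\mathrm{L}$ in the remaining terms, with the wireless-link term vanishing when every $\mathrm{p}_{u,\mathrm{sc}}^{k,e}=1$. Your bookkeeping of each of the five terms is accurate, so there is nothing to add.
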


\begin{Remark}
    The second, third and fourth terms of (\ref{cor1Eqn}) have some constant terms that may not depend on the global round $k$.
    However, the learning rate $\eta$ is typically significantly smaller than $1$.
    Since these terms are multiplied by the higher-order terms of $\eta$, they are negligible. 
    Therefore, the global gradient may converge to a neighborhood of the optimal solution.
\end{Remark}

\begin{Remark}[Choice of the learning rate $\eta$]
Based on the above analysis, we write the dominating terms of the bound from Corollary 1 as follows by further assuming $\Bar{\mathcal{U}}_{b}^{k,e} = \mathcal{U}_{b}$.
\begin{align}
\label{solve4convrate}
    \varpi 
    &\coloneqq \frac{2 F^{inf}} {\eta E K \mathrm{L}} + 2 \beta \eta \sigma^2 \sum\nolimits_{b=0}^{B-1} \left(\alpha_b \right)^2 \sum\nolimits_{u \in \mathcal{U}_b} \left(\alpha_u\right)^2,
\end{align}
where $F^{inf} \coloneqq \mathbb{E} [f(\mathbf{w}^0)] - \mathbb{E} [ f(\mathbf{w}^{K})]$.
This indicates that the optimal learning rate $\eta^{*} = \sqrt{\frac{F^{inf}}{EKL \times \beta \sigma^2 \sum_{b=0}^{B-1} \left(\alpha_b \right)^2 \sum_{u \in \mathcal{U}_b} \left(\alpha_u\right)^2}}$.
Besides, using equal weights and equal number of clients in each \ac{es}, i.e., $\alpha_u=\frac{1}{|\mathcal{U}_b|}=\frac{1}{\bar{U}}$ and $\alpha_b = \frac{1}{B}$, we have $\eta^{*} = \sqrt{\frac{F^{inf}}{EKL \times \beta \sigma^2 \sum_{b=0}^{B-1} \left(\frac{1}{B} \right)^2 \sum_{u \in \mathcal{U}_b} \left(\frac{1}{\bar{U}}\right)^2}} = \sqrt{\frac{B \bar{U}}{EKL} \times \frac{F^{inf}}{\beta \sigma^2}}$.
Plugging $\eta^{*}$ into (\ref{solve4convrate}) gives the convergence rate $\mathcal{O}\left(\frac{1}{\sqrt{EKL \times B \bar{U}}}\right)$, which indicates that we achieve linear speed-up with respect to the number of clients in all \acp{es} when $T \coloneqq EKL$ is sufficiently large.
This is due to the fact that we need $\mathcal{O}\left(\frac{1}{\tilde{\epsilon}^2 \times B\bar{U}} \right)$ steps as opposed to $\mathcal{O}\left(\frac{1}{\tilde{\epsilon}^2} \right)$ steps to achieve $\tilde{\epsilon}$ accuracy with algorithms that have the convergence rates of $\mathcal{O}\left(\frac{1}{\sqrt{T \times B\bar{U}} }\right)$ and $\mathcal{O}\left(\frac{1}{\sqrt{T}}\right)$, respectively.
\end{Remark}

\begin{Remark}
From the above Theorem and Corollary, it is quite evident that the convergence bound depends on $\bar{\mathcal{U}}_b^{k,e}$'s, $\Omega^k$'s and $\mathrm{p}_{u,\mathrm{sc}}^{k,e}$'s. 
\bblue{Besides, since the client's local loss function depends on the local dataset $\mathcal{D}_{u,\mathrm{proc}}^t$ by definition, the theoretical bounds also rely on the evolution of the dataset, which is governed by the activation probability $p_{u,\mathrm{ac}}$. 
However, given the fact that the servers have no control over the data distribution of the clients, we cannot govern clients' data acquisitions.}
As such, in the sequel, we optimize the intertwined \ac{fl} and wireless networking parameters jointly in order to facilitate \ac{rawhfl}'s convergence.
\end{Remark}

\section{RawHFL: Joint Problem Formulation and Solutions}
\label{jointProblemFormulation}


\subsection{Joint Problem Formulation}
\noindent
Based on the theoretical analysis in Section \ref{convAnalysis_Section}, we want to focus on the controllable terms of the convergence bound in (\ref{theorem1_eqn}), which are $\bar{\mathcal{U}}_b^{k,e}$'s, $\Omega^k$'s and $\mathrm{p}_{u,\mathrm{sc}}^{k,e}$'s, such that the average global gradient norm after $K$ global rounds is minimized.
Intuitively, some clients may have some predicaments in performing local model training due to multifarious factors, such as residing at the cell edge, poor wireless channel quality, limited energy budget and low computational resources, to name a few.
Besides, while a large $\mathrm{L}_u^{k,e}$ may increase $\Omega^k$, which improves the convergence rate, this is more complex in a constrained wireless network for the following reasons.
More \ac{sgd} rounds mean more local training overheads, leaving less time and energy for the uplink transmission once the local training is over.
Therefore, if the clients do not have sufficient time and energy left to transmit their trained accumulated gradients, the \ac{bs} and, hence, the \ac{es} cannot utilize these clients' contributions in the edge model aggregation.
Moreover, if $\mathrm{p}_{u,\mathrm{sc}}^{k,e}=0$, the last term of (\ref{theorem1_eqn}) can blow up, making the theoretical bound $\infty$. 
As such, it is critical to jointly optimize client selection and clients' local \ac{sgd} rounds, \ac{cpu} frequencies and transmission powers to make the convergence rate faster.

\begin{Remark}
In order to minimize the right-hand side of (\ref{theorem1_eqn}), we need to optimize the controllable parameters for all edge rounds in every global round before \ac{rawhfl} training begins. 
However, since the wireless links between the clients and the \acp{bs} vary and are unknown beforehand, it is impossible to know the optimal parameters for all edge rounds of all global rounds at the beginning of model training.
Owing to these complexities, a myopic perspective is necessary to sub-optimally minimize the right-hand side of (\ref{theorem1_eqn}). 
\end{Remark}

Concretely, we want to optimize the controllable parameters in every edge round so that the weighted summation of the local \ac{sgd} steps $\mathrm{L}_u^{k,e}$'s of the selected clients in $\bar{\mathcal{U}}_b^{k,e}$ is maximized. 
In other words, we want to maximize $\sum_{b=0}^{B-1} \alpha_b \sum_{u \in \bar{\mathcal{U}}_b^{k,e}} \alpha_u \mathrm{L}_u^{k,e} \equiv \sum_{b=0}^{B-1} \alpha_b \sum_{u \in \mathcal{U}_b^{k,e}} \mathrm{1}_u^{k,e} \cdot \big[\alpha_u \mathrm{L}_u^{k,e} \big]$.
To that end, each client transmits its accumulated gradients $\tilde{g}_u^{k,e}$ as a single wireless packet. 
Besides, as modern wireless networks have countermeasures for packet loss\footnote{Almost all practical wireless networks have cyclic redundancy check, error correction coding, and (hybrid) \ac{arq} in place \cite[Chap. $13$]{molisch2023wireless}.}, we assume that the transmitted packet can be decoded error-free if received during the aggregation time.
In particular, we define $\mathrm{p}_{u,\mathrm{sc}}^{k,e} = \mathrm{Pr}\{[ \mathrm{t}_{u,\mathrm{cp}}^{k,e} + \mathrm{t}_{u,\mathrm{up}}^{k,e} \leq \mathrm{t_{th}}\}$, which is also a common practice in the literature \cite{zeng2022federated,pervej2024hierarchical}. 
Then, we enforce $\mathrm{t}_{u,\mathrm{cp}}^{k,e} + \mathrm{t}_{u,\mathrm{up}}^{k,e} \leq \mathrm{t_{th}}$ as a constraint to ensure $\mathrm{p}_{u,\mathrm{sc}}^{k,e} \approx 1$ so that the last term in (\ref{theorem1_eqn}) becomes $0$.
Moreover, from (\ref{localCompTime}) and (\ref{gradientOffload}), it is evident that $\mathrm{p}_{u,\mathrm{sc}}^{k,e}$ depends on $\mathrm{L}_{u}^{k,e}$, $f_u^{k,e}$ and $P_{u,\mathrm{tx}}^{k,e}$.  
Due to these facts, we pose the following optimization problem that can be solved in each edge round to jointly configure the intertwined optimization variables.
\begin{subequations}
\label{originalOptimProb}
\begin{align}
    &\tag{\ref{originalOptimProb}}
    \underset{\pmb{\mathrm{1}}_{\mathrm{sl}}^{k,e}, \pmb{\mathrm{L}}^{k,e}, \pmb{f}^{k,e}, \pmb{P}_{\mathrm{tx}}^{k,e}}{ \mathrm{min} } \quad \varphi_b^{k,e} \coloneqq \frac{1}{\sum\nolimits_{b=0}^{B-1}  \alpha_b \sum\nolimits_{u \in \mathcal{U}_b} \mathrm{1}_{u, \mathrm{sl}}^{k,e} \cdot [\alpha_u \mathrm{L}_u^{k,e} ] } \\
    &\mathrm{s.t.} \quad C_1:  \quad \mathrm{1}_{u,\mathrm{sl}}^{k,e} \in \{0,1\}, \quad \forall u, k, e\\
    &\quad C_2: \quad \sum\nolimits_{u \in \mathcal{U}_b} \mathrm{1}_{u,\mathrm{sl}}^{k,e} = Z, \quad \forall b, k, e \\
    &\quad C_3: \begin{cases}
        \left(\pmb{\mathrm{1}}_{b,\mathrm{sl}}^{k,e}\right)^T \pmb{\mathrm{1}}_{b, \mathrm{sl}}^{k,e-1} \leq \Psi, & \text{if } k=0 ~\&~ 1\leq e \leq E-1, \\
        \left(\pmb{\mathrm{1}}_{b, \mathrm{sl}}^{k,e}\right)^T \pmb{\mathrm{1}}_{b,\mathrm{sl}}^{k-1,E-1} \leq \Psi, & \text{if } k \neq 0 ~\&~ (e+1)=E,
    \end{cases} \label{cons_C3} \\ 
    &\quad C_4: \quad 1 \leq \mathrm{L}_u^{k,e} \leq \mathrm{L}, ~ \mathrm{L}_u^{k,e} \in \mathbb{Z}^{+}, \quad \forall u, k, e \\
    &\quad C_5: \quad f_u^{k,e} \leq f_{u, \mathrm{max}}, 
    \quad \forall u, k, e\\
    &\quad C_6: \quad 0 \leq P_{u,\mathrm{tx}}^{k,e} \leq P_{u, \mathrm{max}}, \quad \forall u, k, e\\
    &\quad C_7: \quad \mathrm{1}_{u, \mathrm{sl}}^{k,e} \cdot [ \mathrm{t}_{u,\mathrm{cp}}^{k,e} + \mathrm{t}_{u,\mathrm{up}}^{k,e} ] \leq \mathrm{1}_{u, \mathrm{sl}}^{k,e} \cdot \mathrm{t_{th}}, \quad \forall u, k, e \\
    &\quad C_8: \quad \mathrm{1}_{u, \mathrm{sl}}^{k,e} \cdot [\mathrm{e}_{u,\mathrm{cp}}^{k,e} + \mathrm{e}_{u,\mathrm{up}}^{k,e} ] \leq \mathrm{1}_{u, \mathrm{sl}}^{k,e} \cdot \mathrm{e}_{u, \mathrm{bd}}, \forall u, k, e
\end{align}
\end{subequations}
where $\pmb{\mathrm{1}}_{b,\mathrm{sl}}^{k,e} = \big\{\mathrm{1}_{u,\mathrm{sl}}^{k,e}\big\}_{u\in\mathcal{U}_b}$, $\pmb{\mathrm{1}}_{\mathrm{sl}}^{k,e}=\big\{\pmb{\mathrm{1}}_{b,\mathrm{sl}}^{k,e}\big\}_{b=0}^{B-1}$, $\pmb{\mathrm{L}}^{k,e}=\Big\{\big\{\mathrm{L}_u^{k,e}\big\}_{u \in \mathcal{U}_b}\Big\}_{b=0}^{B-1}$, $\pmb{f}^{k,e}=\Big\{\big\{f_u^{k,e}\big\}_{u \in \mathcal{U}_b}\Big\}_{b=0}^{B-1}$, $\pmb{P}_{\mathrm{tx}}^{k,e}=\Big\{\big\{P_{u,\mathrm{tx}}^{k,e}\big\}_{u \in \mathcal{U}_b}\Big\}_{b=0}^{B-1}$ and $\Psi$ is a positive constant.
Note that $C_1$ is the binary client selection constraint, while $C_2$ ensures only $Z$ clients are selected in each \ac{bs}.
Besides, $C_3$ ensures that at least $\Psi$ new clients are selected in two consecutive edge rounds.
Furthermore, constraints $C_4$, $C_5$ and $C_6$ confirm that the client selects its local iteration, \ac{cpu} frequency and transmission power within the upper bounds, respectively. 
Finally, constraints $C_7$ and $C_8$ satisfy the deadline and energy constraints, respectively.

\begin{Remark}
The clients can share their $P_{u, \mathrm{max}}$'s, $f_{u,\mathrm{max}}$'s and $\mathrm{e}_{u,\mathrm{bd}}$'s with their associated \ac{bs}.
Then, the \acp{bs} can collaborate since they are under the same \ac{isp}, which allows the \ac{isp} to solve (\ref{originalOptimProb}) centrally.
Once the problem is solved, each \ac{bs} can broadcast the optimized parameters to the selected clients.
However, (\ref{originalOptimProb}) is a binary mixed-integer non-linear optimization problem and is NP-hard.
Therefore, in the following, we first transform this complex problem into a weighted utility minimization problem, followed by approximating the non-convex constraints and solving the transformed problem using an iterative solution.
It is worth noting that as \ac{ai}/\ac{ml} is deemed a potential solution to many problems in wireless communication (see \cite{du2020machine} and the reference therein), problem (\ref{originalOptimProb}) may also be solved using \ac{ai}/\ac{ml}. 
However, since the problem is NP-hand, an \ac{ai}/\ac{ml}-based solution may also only lead to a sub-optimal solution.
\end{Remark}


\subsection{Problem Transformations and Solution}
\noindent
Intuitively, one may equivalently choose $\varphi_b^{k,e} \coloneqq - \sum\nolimits_{b=0}^{B-1}\alpha_b \sum\nolimits_{u \in \mathcal{U}_b} \mathrm{1}_{u, \mathrm{sl}}^{k,e} \cdot [\alpha_u \mathrm{L}_u^{k,e} ]$ as the objective function in (\ref{originalOptimProb}), which should be minimized if the clients are selected to maximize the weighted summation of their $\mathrm{L}_u^{k,e}$'s. 
However, neither the equivalent objective nor the original one in (\ref{originalOptimProb}) seeks an energy-efficient solution, as the clients with the maximum possible $\mathrm{L}_u^{k,e}$'s are likely to be selected with these two objective functions.
Besides, it is practical to consider that clients may request new content only after watching the previously requested video content.
As such, a new training sample may only be available at the client's local dataset after a while. 
Due to these factors, we consider that the interval between two edge rounds, i.e., $\mathrm{t_{th}}$, is at least as long as the duration of the video.
Therefore, we concentrate on the energy costs.
More specifically, we revise $\varphi_b^{k,e}$ to balance the weight between $\mathrm{L}_{u}^{k,e}$'s and the corresponding energy expenses of the clients as
\begin{align}
\label{utilFunc}
    \varphi_b^{k,e} \coloneqq & - \theta \sum\nolimits_{b=0}^{B-1}\alpha_b \sum\nolimits_{u \in \mathcal{U}_b} \mathrm{1}_{u, \mathrm{sl}}^{k,e} \cdot [\alpha_u \mathrm{L}_u^{k,e} ] + \nonumber\\
    &\qquad (1-\theta)\sum\nolimits_{b=0}^{B-1}\alpha_b \sum\nolimits_{u \in \mathcal{U}_b} \mathrm{1}_{u, \mathrm{sl}}^{k,e} \cdot [\alpha_u \mathrm{e}_{u,\mathrm{tot}}^{k,e} ], 
\end{align}
where $\mathrm{e}_{u, \mathrm{tot}}^{k,e} = \mathrm{e}_{u, \mathrm{cp}}^{k,e} + \mathrm{e}_{u, \mathrm{up}}^{k,e}$ and $\theta \in [0,1]$.


Since the problem is still non-convex and NP-hard, we introduce relaxations that allow the problem to be solved more efficiently.
At first, we drop the integer constraint on $\mathrm{L}_{u}^{k,e}$.
Let $\bar{\mathrm{L}}_{u}^{k,e} \coloneqq \mathrm{1}_{u,sl}^{k,e} \cdot \mathrm{L}_{u}^{k,e}$ be a new variable.
Then, we introduce the following constraints to replace the multiplication of the binary and continuous variables.
\begin{align}
    1 \cdot \mathrm{1}_{u, \mathrm{sl}}^{k,e} &\leq \bar{\mathrm{L}}_{u}^{k,e} \leq \mathrm{L} \cdot \mathrm{1}_{u, \mathrm{sl}}^{k,e}; \quad 0 \leq \bar{\mathrm{L}}_{u}^{k,e} \leq \mathrm{L}. \label{sgdIterC1}\\
    1 \cdot (1 - \mathrm{1}_{u, \mathrm{sl}}^{k,e}) &\leq \mathrm{L}_{u}^{k,e} - \bar{\mathrm{L}}_{u}^{k,e} \leq \mathrm{L} \cdot (1 - \mathrm{1}_{u,\mathrm{sl}}^{k,e}) \label{sgdIterC2}.\\
    \bar{\mathrm{L}}_{u}^{k,e} &\leq \mathrm{L}_u^{k,e} + (1 - \mathrm{1}_{u,\mathrm{sl}}^{k,e}) \mathrm{L} \label{sgdIterC3}.
\end{align}
After this, we equivalently transform the binary client selection constraint as 
\begin{align}
    &\sum_{b=0}^{B-1}\sum_{u \in \mathcal{U}_b} \mathrm{1}_{u,\mathrm{sl}}^{k,e} - \sum_{b=0}^{B-1}\sum_{u \in \mathcal{U}_b} (\mathrm{1}_{u,\mathrm{sl}}^{k,e} )^2 \leq 0, \label{transformBinaryClientSel} \\ 
    & 0 \leq \mathrm{1}_{u,\mathrm{sl}}^{k,e} \le 1.
\end{align}

We then use (\ref{transformBinaryClientSel}) in objective function (\ref{utilFunc}) to add a penalty when the client selection decision is not $0$ or $1$ as  
\begin{align}
    &\tilde{\varphi}_b^{k,e} \coloneqq - \beta \sum\nolimits_{b=0}^{B-1} \rs \alpha_b \rs \rs \sum_{u \in \mathcal{U}_b} \rs\rs \alpha_u \bar{\mathrm{L}}_u^{k,e} + (1-\beta)\sum\nolimits_{b=0}^{B-1} \rs \alpha_b \rs \rs \sum_{u \in \mathcal{U}_b} \rs \rs \alpha_u [\Tilde{\mathrm{e}}_{u,\mathrm{cp}}^{k,e} + \nonumber\\
    &~ \Tilde{\mathrm{e}}_{u,\mathrm{up}}^{k,e} ] + \varrho \big( \sum\nolimits_{b=0}^{B-1} \sum\nolimits_{u \in \mathcal{U}_b} \mathrm{1}_{u,\mathrm{sl}}^{k,e} - \sum\nolimits_{b=0}^{B-1}\sum\nolimits_{u \in \mathcal{U}_b} (\mathrm{1}_{u,\mathrm{sl}}^{k,e} )^2 \big), 
\label{utilFuncWithPenalty}
\end{align}
where $\Tilde{\mathrm{e}}_{u,\mathrm{cp}}^{k,e} = \bar{\mathrm{L}}_u^{k,e} \times 0.5\zeta \mathrm{n} \bar{\mathrm{n}} \mathrm{c}_u \mathrm{D}_u (f_u^{k,e})^2$, $\Tilde{\mathrm{e}}_{u,\mathrm{up}}^{k,e} = \mathrm{s} P_{u,\mathrm{tx}}^{k,e} \mathrm{1}_{u,\mathrm{sl}}^{k,e} / \big(\omega \log_2 \big[1 + \gamma_{u}^{k,e}\big]\big)$ and $\varrho > 0$ is a positive constant.
However, the last quadratic term in (\ref{utilFuncWithPenalty}) is still a problem and is thus approximated using a first-order Taylor series, resulting in the following modified objective function. 
\begin{align}
\label{objFunc}
    &\rs\rs \rs \Bar{\varphi}_b^{k,e} \coloneqq - \beta \sum_{b=0}^{B-1} \alpha_b \rs\rs \sum_{u \in \mathcal{U}_b} \rs \rs \alpha_u \bar{\mathrm{L}}_u^{k,e} + (1-\beta)\sum_{b=0}^{B-1} \rs \alpha_b \rs \rs \sum_{u \in \mathcal{U}_b} \rs \rs \alpha_u [\Bar{\mathrm{e}}_{u,\mathrm{cp}}^{k,e} + \Bar{\mathrm{e}}_{u,\mathrm{up}}^{k,e} ] + \nonumber\\
    & \varrho \big( \sum\nolimits_{b=0}^{B-1} \sum_{u \in \mathcal{U}_b} \rs [1 - 2 \cdot \mathrm{1}_{u,\mathrm{sl}}^{k,e,(i)} ] \mathrm{1}_{u,\mathrm{sl}}^{k,e} + \rs \sum\nolimits_{b=0}^{B-1} \sum_{u \in \mathcal{U}_b} \rs (\mathrm{1}_{u,\mathrm{sl}}^{k,e, (i)} )^2 \big), \rs\rs \rs
\end{align}
where $\bar{\mathrm{e}}_{u,\mathrm{cp}}^{k,e} = \zeta \mathrm{n} \bar{\mathrm{n}} \mathrm{c}_u \mathrm{D}_u f_u^{k,e,(i)} \big[0.5 f_u^{k,e,(i)} \bar{\mathrm{L}}_u^{k,e} + \bar{\mathrm{L}}_u^{k,e,(i)} f_u^{k,e} - f_u^{k,e,(i)} \bar{\mathrm{L}}_u^{k,e,(i)} \big]$ and $\mathrm{1}_{u,\mathrm{sl}}^{k,e,(i)}$, $\bar{\mathrm{L}}_u^{k,e,(i)}$ and $f_u^{k,e,(i)}$ are some initial feasible points.

We now turn the focus on the non-convex constraints. 
First, we approximate the energy requirement for the accumulated gradient offloading as 
\begin{align}
\label{offloadEnergyApprox}
    \Tilde{\mathrm{e}}_{u,\mathrm{up}}^{k,e} 
    &\approx \frac{\mathrm{s} \log(2)}{\omega} \bigg[ P_{u,\mathrm{tx}}^{k,e,(i)} \mathrm{1}_{u,\mathrm{sl}}^{k,e} \Big/ \log\bigg[1 + \frac{\beta_u^{k,e} \zeta_u^{k,e} P_{u,\mathrm{tx}}^{k,e,(i)} }{\omega \varsigma^2} \bigg] + \nonumber\\
    &\frac{\mathrm{1}_{u,\mathrm{sl}}^{k,e,(i)} \log\bigg[1 + \frac{\beta_u^{k,e} \zeta_u^{k,e} P_{u,\mathrm{tx}}^{k,e,(i)} }{\omega \varsigma^2} \bigg] - \frac{\beta_u^{k,e} \zeta_u^{k,e} P_{u,\mathrm{tx}}^{k,e,(i)} \mathrm{1}_{u,\mathrm{sl}}^{k,e,(i)} }{\omega\varsigma^2 + \beta_u^{k,e}\zeta_u^{k,e} P_{u,\mathrm{tx}}^{k,e,(i)} } }{\left(\log\bigg[1 + \frac{\beta_u^{k,e} \zeta_u^{k,e} P_{u,\mathrm{tx}}^{k,e,(i)} }{\omega \varsigma^2} \bigg] \right)^2} \times \nonumber\\
    & \left(P_{u,\mathrm{tx}}^{k,e} - P_{u,\mathrm{tx}}^{k,e,(i)}\right) \bigg]
    \coloneqq \Bar{\mathrm{e}}_{u,\mathrm{up}}^{k,e}
\end{align}
Similarly, we approximate the non-convex local computation time as 
\begin{align}
    \bar{\mathrm{t}}_{u,\mathrm{cp}}^{k,e} 
    \rs \approx [\mathrm{n} \bar{\mathrm{n}} \mathrm{c}_u \mathrm{D}_u/f_u^{k,e,(i)}] \big(\bar{\mathrm{L}}_u^{k,e,(i)} \rs - \bar{\mathrm{L}}_{u}^{k,e,(i)} f_u^{k,e} / f_u^{k,e,(i)} \rs + \bar{\mathrm{L}}_u^{k,e}  \big)\!,\rs\rs 
\end{align}
Furthermore, the non-convex offloading time constraint is approximated as follows:
\begin{align}
    &\tilde{\mathrm{t}}_{u,\mathrm{up}}^{k,e} \coloneqq \mathrm{1}_{u,\mathrm{sl}}^{k,e} \cdot \mathrm{s} / \bigg(\omega \log_2 \bigg[1 + \frac{\beta_{u}^{k,e} \zeta_u^{k,e} P_{u,\mathrm{tx}}^{k,e}}{\omega \varsigma^2} \bigg] \bigg) \nonumber\\
    &\approx \frac{\mathrm{s} \log(2)}{\omega}\bigg[  \mathrm{1}_{u,\mathrm{sl}}^{k,e}/ \log \bigg[1 + \frac{\beta_{u}^{k,e} \zeta_u^{k,e} P_{u,\mathrm{tx}}^{k,e,(i)}}{\omega \varsigma^2} \bigg]  - \nonumber\\
    & \frac{\mathrm{1}_{u, \mathrm{sl}}^{k,e,(i)} \beta_{u}^{k,e} \zeta_u^{k,e} }{\left(\omega\varsigma^2 + \beta_{u}^{k,e} \zeta_u^{k,e} P_{u,\mathrm{tx}}^{k,e,(i)} \right) \left( \log \bigg[1 + \frac{\beta_{u}^{k,e} \zeta_u^{k,e} P_{u,\mathrm{tx}}^{k,e,(i)}}{\omega \varsigma^2} \bigg] \right)^2} \times \nonumber\\
    & \left(P_{u,\mathrm{tx}}^{k,e} - P_{u,\mathrm{tx}}^{k,e,(i)} \right) \bigg] 
    \coloneqq \bar{\mathrm{t}}_{u,\mathrm{up}}^{k,e}
\end{align}



To that end, we pose the transformed problem as 
\begin{subequations}
\label{transFormedOptimProb}
\begin{align}
    &\tag{\ref{transFormedOptimProb}}
    \underset{\pmb{\mathrm{1}}_{\mathrm{sl}}^{k,e}, \bar{\pmb{\mathrm{L}}}^{k,e}, \pmb{\mathrm{L}}^{k,e}, \pmb{f}^{k,e}, \pmb{P}^{k,e} }{ \mathrm{min} } \quad \bar{\varphi}_b^{k,e} \\
    &\mathrm{s.t.} \quad  \quad 0 \leq \mathrm{1}_{u,\mathrm{sl}}^{k,e} \leq 1, \quad \forall u,k,e \\
    &\qquad \quad C_2, C_3, ~ 1 \leq \mathrm{L}_u^{k,e} \leq \mathrm{L}, ~ (\ref{sgdIterC1}), (\ref{sgdIterC2}), (\ref{sgdIterC3}), C_5, C_6 \\ 
    &\qquad \quad \bar{\mathrm{t}}_{u,\mathrm{cp}}^{k,e} + \bar{\mathrm{t}}_{u,\mathrm{up}}^{k,e} \leq \mathrm{1}_{u, \mathrm{sl}}^{k,e} \cdot \mathrm{t_{th}}, \quad \forall u, k, e \\
    &\qquad \quad \bar{\mathrm{e}}_{u,\mathrm{cp}}^{k,e} + \bar{\mathrm{e}}_{u,\mathrm{up}}^{k,e} \leq \mathrm{1}_{u, \mathrm{sl}}^{k,e} \cdot \mathrm{e}_{u, \mathrm{bd}}, \forall u, k, e
\end{align}
\end{subequations}
where the constraints are similar as in (\ref{originalOptimProb}).


\begin{algorithm}[!t]
\small
\SetAlgoLined 
\DontPrintSemicolon
\KwIn{Initial points $\mathrm{1}_{u,\mathrm{sl}}^{k,e,(i)}$'s, $\bar{\mathrm{L}}_{u}^{k,e,(i)}$, $\mathrm{L}_{u}^{k,e,(i)}$, $f_u^{k,e,(i)}$'s, $P_{u,\mathrm{tx}}^{k,e,(i)}$'s, $i=0$, total iteration $I$, precision level $\bar{\varepsilon}$ and $\varrho$ \;}
\nl{\textbf{Repeat}: \;} 
\Indp { $i \gets i+1$ \;
        Solve (\ref{transFormedOptimProb}) using $\mathrm{1}_{u,\mathrm{sl}}^{k,e,(i-1)}$'s, $\mathrm{L}_{u}^{k,e,(i-1)}$, $\bar{\mathrm{L}}_{u}^{k,e,(i-1)}$, $f_u^{k,e,(i-1)}$'s, $P_{u,\mathrm{tx}}^{k,e,(i-1)}$'s and $\varrho$, and get optimized $\mathrm{1}_{u,\mathrm{sl}}^{k,e}$'s, $\mathrm{L}_{u}^{k,e}$'s, $\bar{\mathrm{L}}_{u}^{k,e}$'s, $f_u^{k,e}$'s and $P_{u,\mathrm{tx}}^{k,e}$'s \;
        $\mathrm{1}_{u,\mathrm{sl}}^{k,e,(i)} \gets \mathrm{1}_{u,\mathrm{sl}}^{k,e}$; $\mathrm{L}_{u}^{k,e,(i)} \gets \mathrm{L}_u^{k,e}$; $\bar{\mathrm{L}}_{u}^{k,e,(i)} \gets \bar{\mathrm{L}}_{u}^{k,e}$; $f_u^{k,e,(i)} \gets f_u^{k,e}$; $P_{u,\mathrm{tx}}^{k,e,(i)} \gets P_{u,\mathrm{tx}}^{k,e}$ \;
        }
\Indm \textbf{Until} converge with precision $\bar{\varepsilon}$ or $i=I$\;
\KwOut{Optimal $\mathrm{1}_{u,\mathrm{sl}}^{k,e}$'s, $\mathrm{L}_{u}^{k,e}$'s, $\bar{\mathrm{L}}_{u}^{k,e}$'s, $f_u^{k,e}$'s and $P_{u,\mathrm{tx}}^{k,e}$'s}
\caption{Iterative Solution for (\ref{transFormedOptimProb})}
\label{iterAlg}
\end{algorithm}

This transformed problem (\ref{transFormedOptimProb}) now belongs to the class of ``difference of convex programming" problems.
Starting with some initial feasible points, we can iteratively solve the problem using existing tools such as CVX \cite{diamond2016cvxpy}.  
In particular, we use Algorithm \ref{iterAlg} to solve this problem.  
Note that the above iterative solution is well known to converge to a local stationary point of the original problem in polynomial time \cite{sun2019optimal}. 
Problem (\ref{transFormedOptimProb}) has $5U$ decision variables and $(2B+12U)$ constraints.
As such, the worst-case time complexity of running Algorithm \ref{iterAlg} for $I$ iterations is $\mathcal{O}\left(I \times 125U^3 [12U+2B]\right)$ \cite{pervej2023Resource}.
It is worth noting that while the proposed algorithm yields a sub-optimal solution, the empirical performance of the proposed \ac{rawhfl} is nearly identical to the performance of the ideal case \ac{hfedavg} \cite{liu2020client}. 


\section{Simulation Results and Discussions}
\label{simulationResults}

\begin{figure}
\centering
    \includegraphics[trim=10 10 45 10, clip, width=0.35\textwidth]{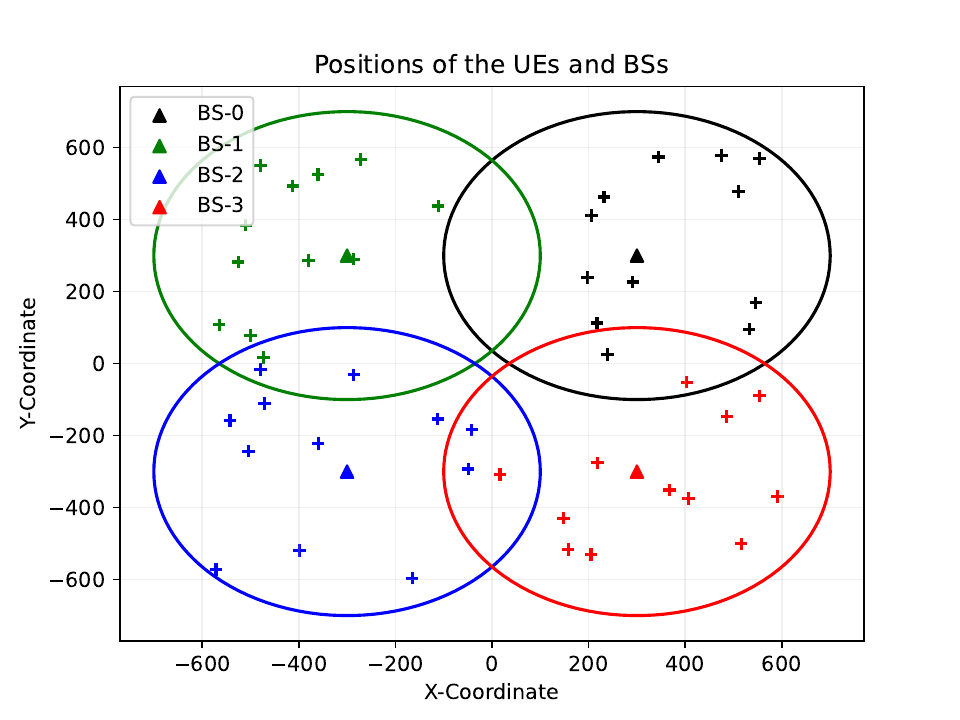}
    \caption{Locations of the \acp{ue} and \acp{bs} in one realization}
    \label{nodeLoc}
\end{figure}

\subsection{Simulation Setting}
\noindent
We perform extensive simulations with different parameter settings to show the performance of the proposed solution. 
We consider a total $B=4$ \acp{bs}, each serving $|\mathcal{U}_b|=12$ \acp{ue} with a coverage radius of $400$ meters, as shown in Fig. \ref{nodeLoc}. 
Therefore, there are total $U=48$ clients.
The \ac{isp}'s carrier frequency is $2.4$ GHz.
The \ac{prb} size is $\omega=3 \times 180$ kHz, while we vary the number of \acp{prb} based on $|\Bar{\mathcal{U}}_b^{k,e}|$'s. 
We model the path loss and line-of-sight probabilities following the $3$GPP urban macro model \cite[Section $7.4$]{3GPPTR38_901}.
Besides, $G=8$, $\bar{C}_g=32, \forall g$, $C=256$, $I=50$, $\varrho=1$, $\theta=0.4$, $\phi=32$ \cite{pervej2024hierarchical}, $\zeta=2\times10^{-28}$ \cite{liu2020client} and $\kappa=\mathrm{t_{th}}=150$ seconds.
The $c_u$'s, $f_{u,\mathrm{max}}$'s, $\mathrm{e}_{u,\mathrm{bd}}$'s and $P_{u, \mathrm{max}}$'s are uniformly randomly drawn from $[25, 40]$ cycles, $[1.2, 2.0]$ GHz, $[0.8, 1.5]$ Joules and $[20,30]$ dBm, respectively. 
Furthermore, we uniformly randomly select the activity levels $p_{u,\mathrm{ac}}$'s and the probability of exploiting similar content in the same genre $\upsilon_u$'s from $[0.2, 0.8]$ and $[0.1, 0.8]$, respectively.
Moreover, for the genre preferences $p_{u,g}$'s, we use $\mathrm{Dir}(\pmb{0.3})$.

To our best knowledge, \bblue{there exists no extensive} real-world video-caching-related dataset \bblue{containing} spatial and temporal information about the users and their requested content set that is suitable for \ac{rawhfl}. 
As such, we use the procedures described in Section \ref{contReqModSection} to let the \acp{ue} make content requests and acquire their (synthetic) dataset. 
Given that the \ac{ue} requests content $c_g$ during slot $t$, the extracted features for this content, i.e., $\mathbf{x}(\mathrm{1}_{u,c_g}^t)$ contains the following information: $\left\{\upsilon_u, \{p_{u,g}\}_{g=0}^{G-1}, \frac{g}{G}, \pmb{\mathrm{sim}}(c_g), \frac{c_g}{\Bar{C}_g} \right\}$, where $\pmb{\mathrm{sim}}(c_g)$ is a vector of \emph{cosine} similarities between the features of the requested content $c_g$ and the features of the rest of the content in genre $g$. 
Besides, \bblue{the label $y(\mathrm{1}_{u,c_g}^t)$ is the index of the requested content ID}\footnote{The number of labels equals the number of total files. 
We can stack the files from each genre one after another to prepare the label indices.}.

We use a simple fully connected ({\tt{FC}}) neural network\footnote{Our proposed \ac{rawhfl} solution is general.
Other \ac{ml} models, e.g., \acp{rnn} or transformers, can also be used.} with rectified linear ({\tt{ReLU}}) activation functions.
More specifically, the model has the following architecture: ${\tt{FC}} (\mathrm{input\_shape}, 512)  \rightarrow {\tt{ReLU}} () \rightarrow  {\tt{FC}}(512, 256) \rightarrow {\tt{ReLU}}() \rightarrow {\tt{FC}}(256, \mathrm{output\_shape})$. 
This model has $d=219648$ trainable parameters in our implementation, which makes the wireless payload size $s=7248384$ bits.
Moreover, a sliding window technique is used to prepare the processed datasets $\mathcal{D}_{u,\mathrm{proc}}^t$'s for model training. 
Particularly, we consider that each client processes its raw dataset such that the {\em previous} slot's requested content's feature set and the {\em current} slot's requested content ID are used as the feature and label, respectively.
The clients use batch size $\bar{\mathrm{n}}=32$, $\mathrm{n}=10$ mini-batches, at max $\mathrm{L}=50$ local rounds and {\tt\ac{sgd}} optimizer with a learning rate of $0.01$ for model training.
\bblue{Besides, cross-entropy loss function\footnote{\url{https://pytorch.org/docs/stable/generated/torch.nn.CrossEntropyLoss.html}} is used for model training. 
The accuracy is determined by calculating the fraction of total correctly predicted labels over the total labels.}
Furthermore, we use $E=4$ edge rounds in every global round and train for a total of $K=100$ global rounds\footnote{\bblue{We only performed computer simulations, which were run on our HPC cluster (\url{https://www.carc.usc.edu/}) with different $32$ gigabytes RAM and GPU-enabled compute nodes.}}.

Finally, as the clients are randomly dropped, and their system configurations are generated randomly, we completely repeat our simulations $10$ times in order to get an average performance.
In the following, unless mentioned otherwise, the average performance from these $10$ trials is reported.
It is worth noting that although the wireless channel between a client and the serving \ac{bs} can vary significantly, the choice of the \ac{isp}'s \ac{prb} size should generally impact the accumulated gradient offloading time when the model has a large number of trainable parameters.
As such, the \ac{csp} and \ac{isp} should collaborate to choose the model size and \ac{prb} size accordingly in order to facilitate the \ac{fl} process in video caching networks under deadline and resource constraints.

\begin{figure*}
\centering
\begin{subfigure}{0.25\textwidth}
    \centering
    \includegraphics[trim=3 08 30 28, clip, width=\textwidth, height=0.22\textheight]{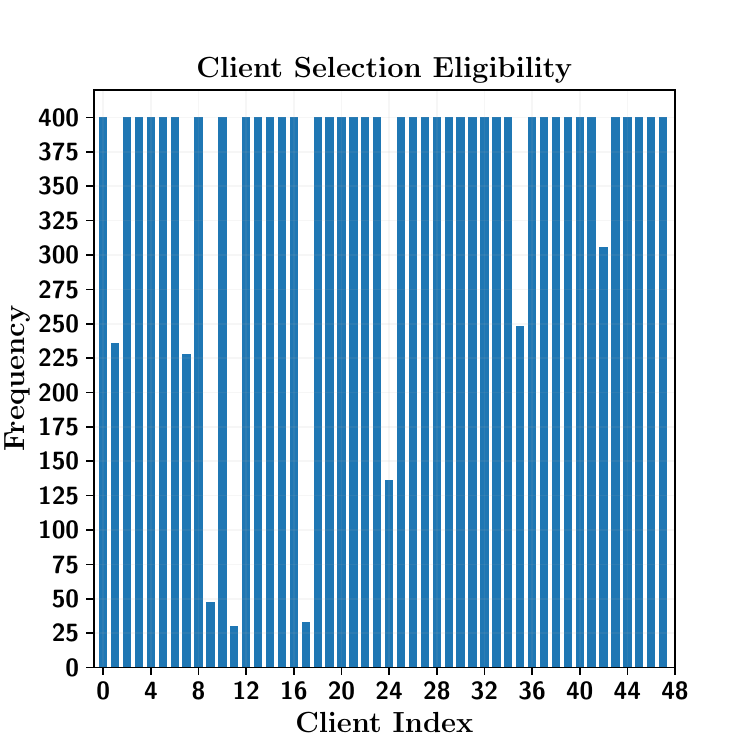}
    \caption{Client's eligible frequency}
    \label{clientSelElig}
\end{subfigure}
\begin{subfigure}{0.74\textwidth}
    \centering
    \includegraphics[trim=45 5 30 20, clip, width=\textwidth, height=0.22\textheight]{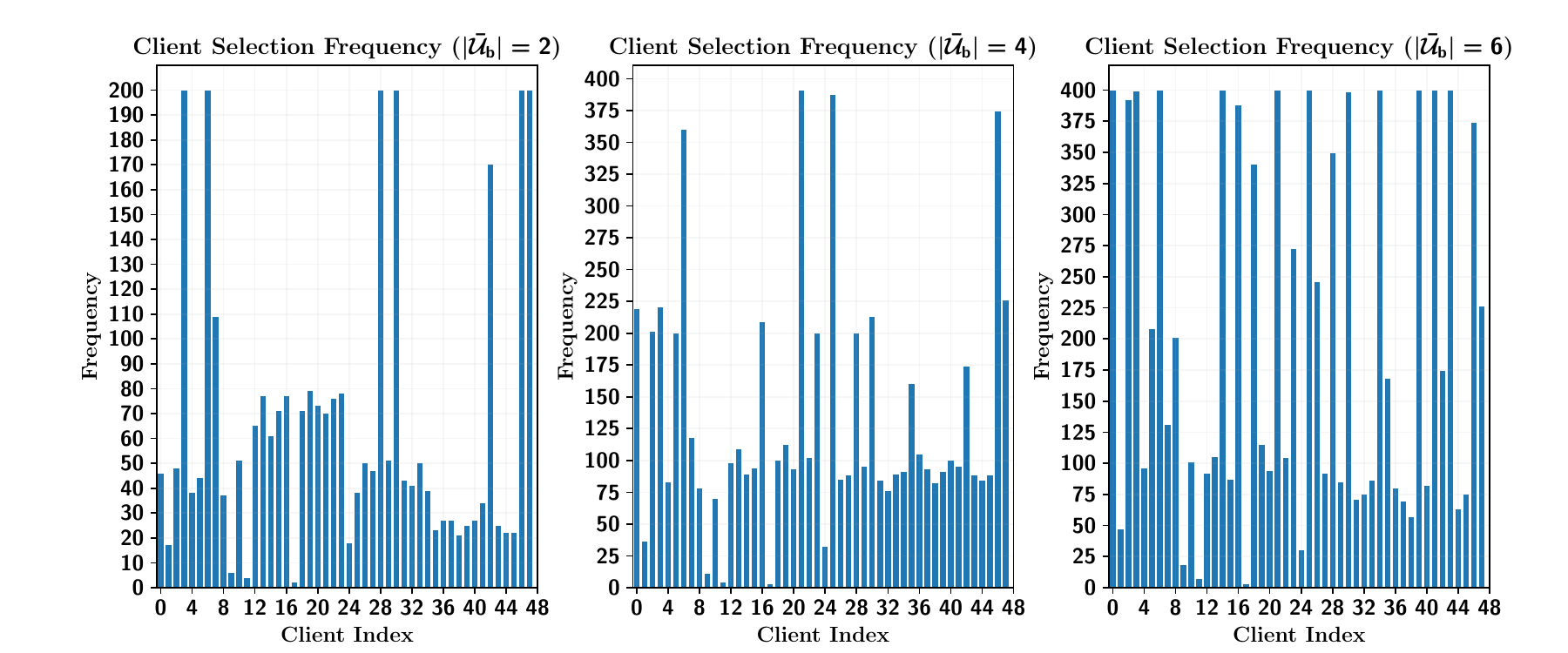}
    \caption{Optimized client selection frequencies}
    \label{clientSelFreq}
\end{subfigure}
\caption{\bblue{Clients' participation eligibility and number of selections in $400$ edge rounds}}
\end{figure*}

\subsection{Performance Analysis}


\noindent 
In a resource-constrained environment, client selection is necessary due to the following factors. 
Firstly, the wireless network has limited radio resources. 
In many cases, $Z < |\mathcal{U}_b|$ is realistic as a practical network has many other tasks to perform simultaneously.
Secondly, since wireless channels vary, some clients may have poor link quality that can prolong the accumulated gradient offloading. 
Thirdly, some clients may not have sufficient energy budgets to offload their accumulated gradients.
Finally, depending on the system configurations, some clients may need extended time to complete the local training even though the link quality is good. 
As such, appropriate client selection based on the available resources is essential.
Moreover, since we have a constrained environment, some clients may not even be eligible to participate in model training when their required time and/or energy overheads are beyond their budgets.

We observe a similar trend in our simulation results. 
In Fig. \ref{clientSelElig}, we show the eligible clients, i.e., the clients who can perform at least one local \ac{sgd} round, for $K=100$ global rounds in $1$ of the $10$ simulation runs.
Recall that client selection happens in each edge round, and each global round has $E=4$ edge rounds, leading to a total $4 \times 100$ edge rounds in Fig. \ref{clientSelElig}.
As expected, some clients are only eligible to be selected in a few edge rounds.
For example, client \bblue{$11$} is eligible to be selected only in \bblue{$28$} edge rounds.
Moreover, we plotted the client selection frequency with our proposed solution of (\ref{transFormedOptimProb}) in Fig. \ref{clientSelFreq} for different $|\mathcal{U}_b|$'s.
It is worth pointing out that since Algorithm \ref{iterAlg} yields a sub-optimal solution, the client selection strategies may not be optimal. 
Our results in Fig. \ref{clientSelFreq} validate that Algorithm \ref{iterAlg} finds the subset client set (and other decision variables) to minimize the objective function to satisfy all constraints.
As such, in the following, we utilize Algorithm \ref{iterAlg} and report the average performance of the $10$ simulation repeats.

Now, we investigate how the convergence performance of \ac{rawhfl} varies with different $|\bar{\mathcal{U}}_b^{k,e}|$'s.
Intuitively, the number of available training data should affect the model performance.
When only a few clients are selected, the model is trained on the few samples of these clients.
This can be detrimental if appropriate clients are not selected, especially when data heterogeneity is severe.
As such, \ac{rawhfl} may require more training to achieve a certain level of accuracy when $|\bar{\mathcal{U}}_b^{k,e}|$ is small.
Moreover, a client's content request probability from the same genre, $\upsilon_u$, determines how frequently the client requests similar content from the same genre.
A high $\upsilon_u$ means less exploration, i.e., more exploitation from the same genre.
As such, a high $\upsilon_u$ is expected to yield better prediction performance since the client has fewer variations in its content requests.


Our simulation results also show similar trends in Figs. \ref{flRoundVsTestLossAcc_0.1_0.8} - \ref{flRoundVsTestLossAcc_0.7}.
Note that these test results are the average of all clients' individual test datasets.
We observe that the value of $\upsilon_u$ governs the test accuracy and test loss.
For example, when all clients randomly select their respective $\upsilon_u$ from $0.1$ to $0.8$, the test accuracy after $K=100$ global rounds is around $45\%$.
When all clients have the same $\upsilon_u$, in Fig. \ref{flRoundVsTestLossAcc_0.3} and Fig. \ref{flRoundVsTestLossAcc_0.7}, we also observe that a large $\upsilon_u$ leads to faster convergence.
Moreover, the test loss and test accuracy performances improve when $|\bar{\mathcal{U}}_b^{k,e}|$ increases regardless of the value of $\upsilon_u$'s.
For example, with $|\bar{\mathcal{U}}_b^{k,e}|=2$, the test accuracy fluctuates, \bblue{even for $\upsilon_u=0.7$. 
However, the test accuracy reaches a plateau after about $K=14$ global rounds with $|\mathcal{U}_b| \geq 4$ selected clients per \ac{bs} when $\upsilon_u=0.7$.}

\begin{figure*}
\begin{subfigure}{0.33\textwidth}
    \centering
    \includegraphics[trim=15 17 10 15, clip, width=\textwidth]{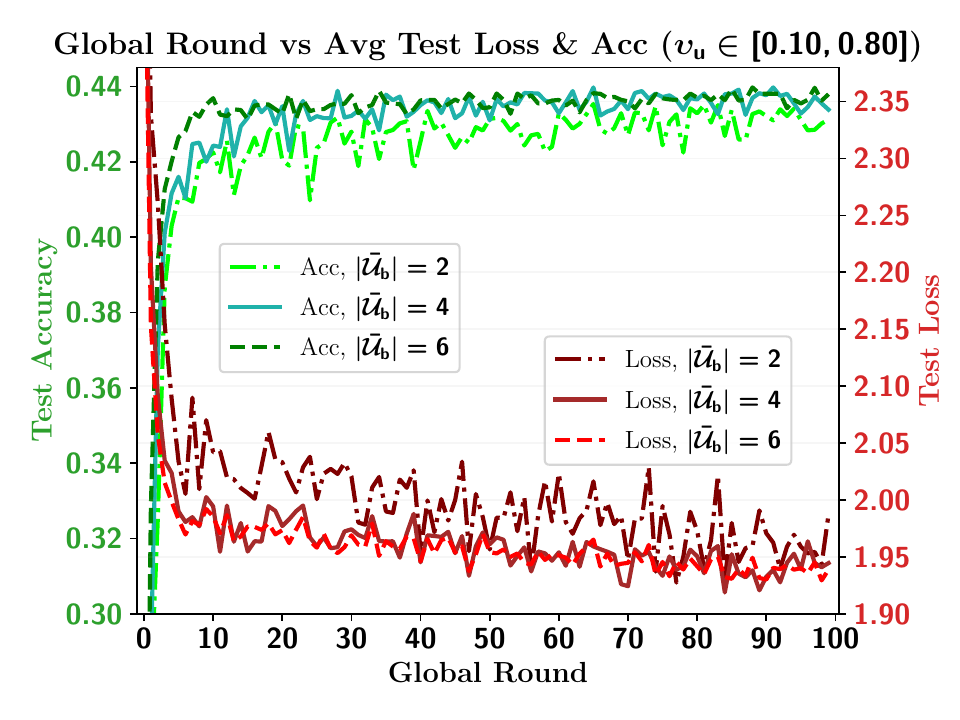}
    \caption{\bblue{Global round vs test metrics: $\upsilon_u \in [0.1, 0.8]$}}
    \label{flRoundVsTestLossAcc_0.1_0.8}
\end{subfigure}
\begin{subfigure}{0.33\textwidth}
\centering
    \includegraphics[trim=15 17 10 15, clip, width=\textwidth]{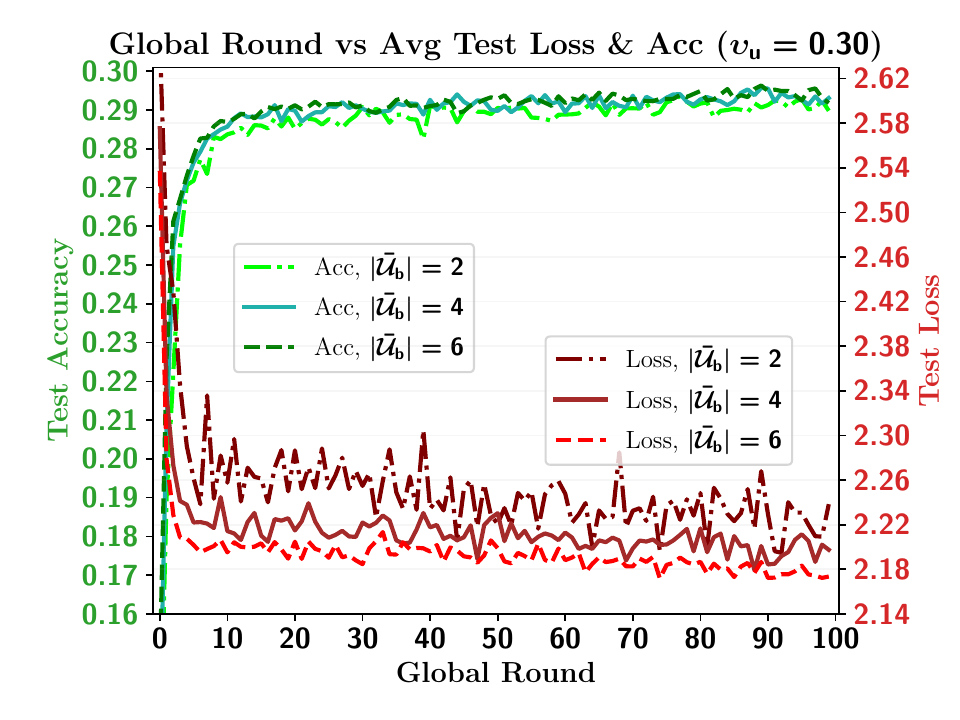}
    \caption{\bblue{Global round vs test metrics: $\upsilon_u = 0.3$}}  
    \label{flRoundVsTestLossAcc_0.3}
\end{subfigure}
\begin{subfigure}{0.33\textwidth}
    \centering
    \includegraphics[trim=15 17 10 15, clip, width=\textwidth]{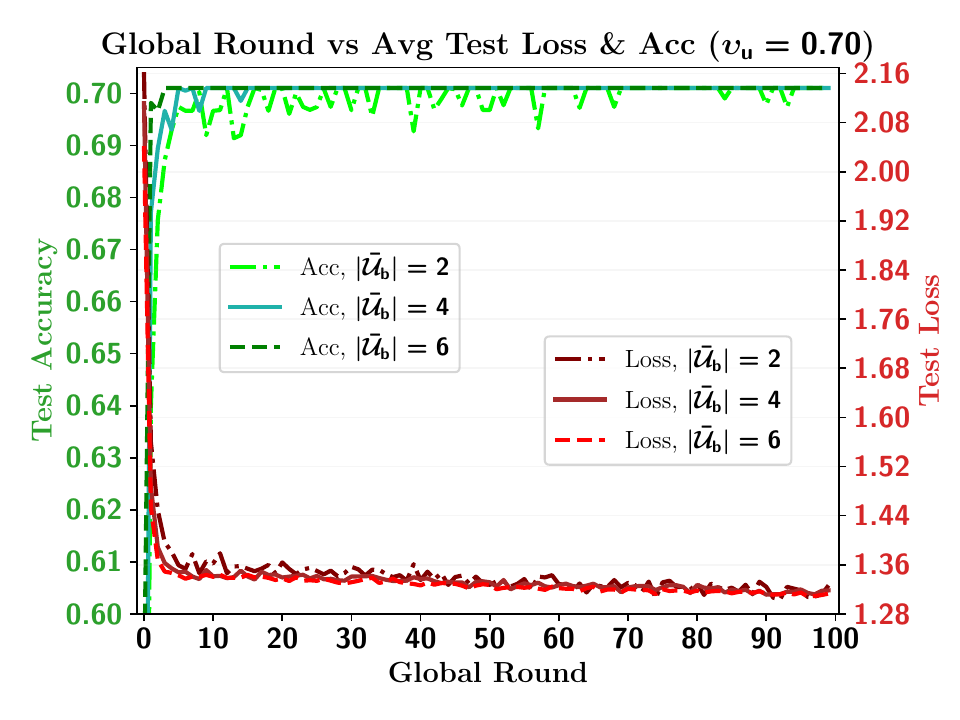}
    \caption{\bblue{Global round vs test metrics: $\upsilon_u = 0.7$}}
    \label{flRoundVsTestLossAcc_0.7}
\end{subfigure}
\caption{\bblue{Global round vs. test accuracy and test loss in \ac{rawhfl}}}
\end{figure*}


However, while more clients, i.e., large $|\bar{\mathcal{U}}_b^{k,e}|$, may help RawHFL converge faster, that comes with higher bandwidth and energy overheads.
On the one hand, the \ac{isp} will require more \acp{prb} to schedule the clients for wireless transmission.
On the other hand, the total energy cost per edge round also increases since more clients participate in the training process.
However, regardless of the number of participating clients, the per-client energy cost in each edge round can be non-trivial. 
More specifically, when $|\bar{\mathcal{U}}_b^{k,e}|$ increases, (\ref{transFormedOptimProb}) must choose the defined number of clients so that the utility function is minimized. 
In doing so, it may select some clients with higher energy expenses that can be a byproduct of poor wireless links. 
However, the total energy costs per client can still be similar due to varying wireless links, different local \ac{sgd} rounds and sub-optimal solutions of (\ref{transFormedOptimProb}) in different edge rounds.
Our simulation results in Fig. \ref{per_BS_cdfEnergyExpense} and Fig. \ref{per_UE_cdfEnergyExpense} also validate these claims.
Note that Fig. \ref{per_BS_cdfEnergyExpense} shows the \ac{cdf} of the total energy expenses per \ac{bs} during each edge round $e$ for different client set sizes.
Besides, Fig. \ref{per_UE_cdfEnergyExpense} shows the \ac{cdf} of per client total energy expense during each edge round for different $|\mathcal{U}_b|$'s.
For example, the probability of total energy expense of $50\%$ of \acp{bs} are more than \bblue{$3.7$} Joules, \bblue{$7.1$} Joules, and \bblue{$10.3$} Joules for $|\bar{\mathcal{U}}_b^{k,e}|=2$, $|\bar{\mathcal{U}}_b^{k,e}|=4$ and $|\bar{\mathcal{U}}_b^{k,e}|=6$, respectively.
Moreover, per client energy expense per edge round is quite similar in Fig. \ref{per_UE_cdfEnergyExpense}.

\begin{figure}[!t]
    \centering
    \includegraphics[trim=5 1 35 25, clip, width=0.4\textwidth]{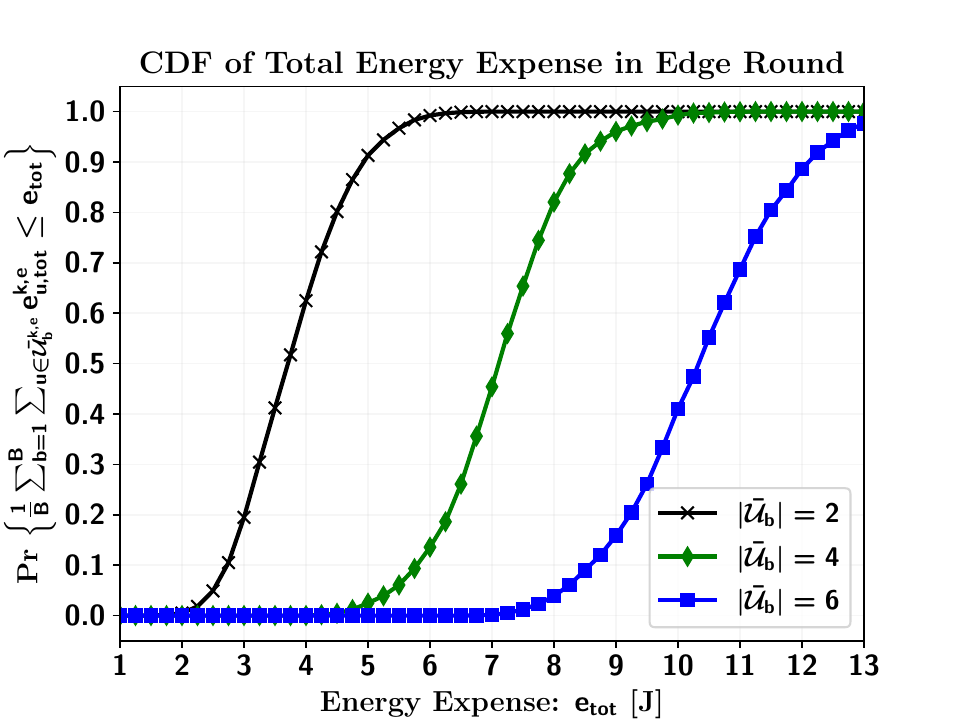}
    \caption{\bblue{CDF of total energy expense in the BS}}
    \label{per_BS_cdfEnergyExpense}
\end{figure}
\begin{figure}[!t]
    \centering
    \includegraphics[trim=5 1 25 25, clip, width=0.4\textwidth]{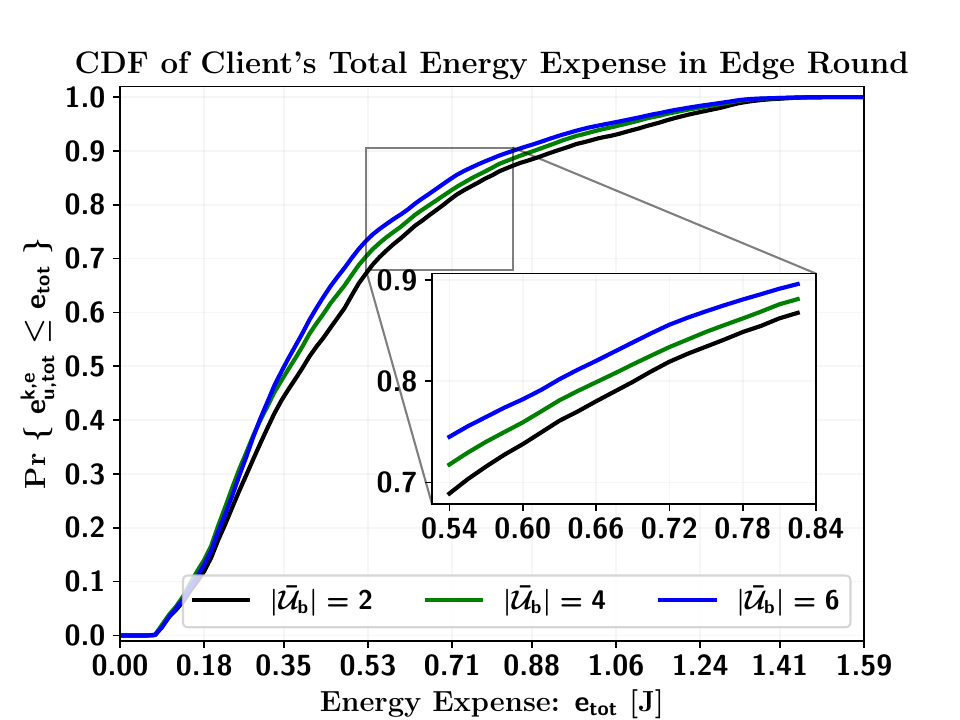}
    \caption{\bblue{CDF of client's total energy expense}}
    \label{per_UE_cdfEnergyExpense}
\end{figure}

\subsection{Performance Comparisons}


\noindent
To the best of our knowledge, no previous studies exactly considered our system design and used \ac{hfl} for video caching.
Therefore, we use the traditional \ac{hfedavg} algorithm \cite{liu2020client} and modify it to accommodate our system model for comparison. 
We consider two modifications. 
The first one calculates the smallest \ac{sgd} rounds that all clients can perform without violating any constraints.
We call this modified baseline \ac{hfedavg}-M$1$.
There are few clients who cannot even perform a single \ac{sgd} round.
These clients are dropped, and then the minimum \ac{sgd} rounds that the retained clients can all perform without violating any constraints are calculated in the second modification.
We call this baseline \ac{hfedavg}-M$2$.
Furthermore, we also consider the ideal case, where no constraints are enforced. 
This one is termed \ac{hfedavg}-UB and is indeed the upper bound of \ac{hfedavg} \cite{liu2020client}.
Finally, we also consider the traditional single server-based \ac{fl}, namely, \ac{fedavg} \cite{mcmahan2017communication}. 
Concretely, in this baseline, the \acp{bs} merely act as relays between the central server and the clients.
That is in every global round, the clients receive the same global model from the central server via their respective serving \acp{bs}.
Upon finishing the local \ac{sgd} steps, the clients offload the accumulated gradients to the central server via their serving \acp{bs}.  
Similar to \ac{hfedavg}-UB, we do not enforce any constraints in this baseline, and this baseline is termed as \ac{fedavg}-UB.
Note that these baselines require $Z=|\mathcal{U}_b|$ pRBS for clients' gradient sharing.
Besides, \ac{fedavg}-UB causes $s \times |\mathcal{U}_b|$ bits payloads in the \ac{bs}-\ac{cs} backhaul in every global round.
Note that the \ac{hfl} algorithm has only $s$ bits of payload for the \ac{bs}-\ac{cs} backhaul.


\ac{hfedavg} should work poorly in constrained cases, since some clients may be unable to participate in the training process due to time or energy overheads beyond the allowable budgets. 
This is particularly an impediment for the \ac{hfedavg}-M$1$ baseline, as all clients are required to train the same number of local \ac{sgd} rounds.
Therefore, if any clients fail to participate in model training during an edge round, we discard performing model training during that edge round.
This extreme case is alleviated in the \ac{hfedavg}-M$2$ baseline, where we first drop the stragglers.
As such, it is expected that \ac{hfedavg}-M$1$ will perform poorly compared to the \ac{hfedavg}-M$2$ baseline.
Furthermore, the \ac{hfedavg}-UB baseline should provide the best possible performance, as all clients would train their local models for the maximum allowable number of \ac{sgd} rounds in a perfect environment.
Moreover, \ac{fedavg}-UB is expected to take additional rounds to catch up on the performance of \ac{hfedavg}-UB since it does not have any intermediate edge aggregations. 
However, in reality, the corresponding energy overheads can be much higher than some clients' energy budgets, which we ignore to obtain the upper-bounded performances of \ac{fedavg}-UB and \ac{hfedavg}-UB. 
Compared to these baselines, the proposed \ac{rawhfl} selects a subset of the clients and minimizes the energy expense, since our joint objective function is designed as the weighted combination of energy expense and total local iterations of the clients.


Our simulation results in Fig. \ref{accEnergy_epsMin_0.1_epsMax_0.8} also reflect these trends. 
More specifically, we observe that the test accuracies of \ac{hfedavg} baselines in the constrained case lag significantly compared to the \ac{hfedavg}-UB.
Besides, \ac{hfedavg}-M$1$ and \ac{hfedavg}-M$2$ baselines do not reach a plateau even after $K=100$ global rounds.
The \ac{fedavg}-UB baseline, on the other hand, reaches the \ac{hfedavg}-UB's test accuracy after around \bblue{$60$} global rounds. 
Furthermore, the proposed \ac{rawhfl}---with $|\bar{\mathcal{U}}_b^{k,e}|=4$ clients selected per \ac{bs}---performs nearly identically to the \ac{hfedavg}-UB baselines in terms of test accuracy.
However, it is evident that the proposed \ac{rawhfl} requires significantly lower energy to finish $K=100$ global rounds.
Particularly, the energy expense grows linearly as the global round increases with significantly steep slopes for the \ac{hfedavg}-M$2$ and \ac{hfedavg}-UB baselines.
Besides, it is expected that \ac{hfedavg}-M$1$ has the lowest energy expense since there are many stragglers due to extreme resource constraints that lead to no model training in many edge rounds.
Furthermore, the energy expense of \ac{fedavg}-UB is also significantly lower than the \ac{hfedavg}-UB since \ac{fedavg}-UB does not have any edge aggregations.
However, recall that the \ac{fedavg}-UB causes additional communication burdens for the \ac{bs}-\ac{cs} backhaul. 
For example, when \bblue{$K=50$}, the test accuracies of \ac{hfedavg}-M$1$, \ac{hfedavg}-M$1$, \ac{hfedavg}-UB, \ac{fedavg}-UB and \ac{rawhfl} are about \bblue{$7.3\%$, $17.3\%$, $44.55\%$, $44.55\%$ and $44.30\%$}, respectively, with energy overheads of \bblue{$283$, $6845$, $8439$, $2109$ and $1741$} Joules.
Moreover, the proposed \ac{rawhfl} delivers nearly identical test accuracy as of the \ac{hfedavg}-UB after $K=100$ global rounds with \bblue{about $4.85$ times} lower energy overheads of \ac{hfedavg}-UB.


We now also compare the results with centralized \ac{sgd} and naive popularity-based Top-popular baselines.
Note that in centralized \ac{sgd}, we assume all datasets are available centrally to show the performance gap with \ac{rawhfl}.
In particular, we consider the Top-$M$ accuracy metric, which we define as the probability that the true requested content is within the Top-$M$ predicted content. 
Intuitively, the Top-$M$ accuracy should improve with the increase of $M$, as the actual requested content has a higher chance of being one of the Top-$M$ predicted content.
Moreover, while the \ac{fl} algorithms are expected to exhibit the performances as observed in the above discussions, the centralized \ac{sgd} is the ideal performance since the \ac{ml} model gets exposed to the datasets of all clients.
Furthermore, since the clients request content following a popularity-preference tradeoff, the Top-Popular baseline is expected to perform worse.


Our simulation results in Fig. \ref{topKacc_epsMin_0.1_epsMax_0.8} also validate the above claims. 
Note that the solid lines show the mean Top-$M$ accuracy while the shaded areas show the standard deviations of the test accuracies across all $U=48$ clients in Fig. \ref{topKacc_epsMin_0.1_epsMax_0.8}.
Particularly, our solution yields about \bblue{$36.94\%$, $25.67\%$ and $35.69\%$} higher (Top-$1$) test accuracies than \ac{hfedavg}-M$1$, \ac{hfedavg}-M$2$ and Top-Popular baselines, respectively.
Besides, \ac{rawhfl}, \ac{hfedavg}-UB and \ac{fedavg}-UB have nearly identical test accuracies for all $M$'s.
Furthermore, centralized \ac{sgd} has about \bblue{$6.95\%$} higher (Top-$1$) test accuracy than \ac{rawhfl}.
Moreover, for $M \leq 7$, we see the superior performance of the centralized \ac{sgd} as expected.
The \ac{rawhfl}, \ac{hfedavg}-UB and \ac{fedavg}-UB baselines can catch up with the performance of the centralized \ac{sgd} when $M \geq 8$.
The results also reveal that the actual content is almost surely within the Top-$8$ predicted contents, which can significantly help in designing the cache placement.
Moreover, for all $M$'s, the other \ac{hfedavg} baselines perform poorly as those are not suitable for a resource-constrained environment. 
Finally, naive popularity-based prediction does not work since the content requests are modeled following an exploration-exploitation tradeoff.

\begin{figure}[!t]
\centering
    \includegraphics[trim=15 15 15 15, clip, width=0.45\textwidth]{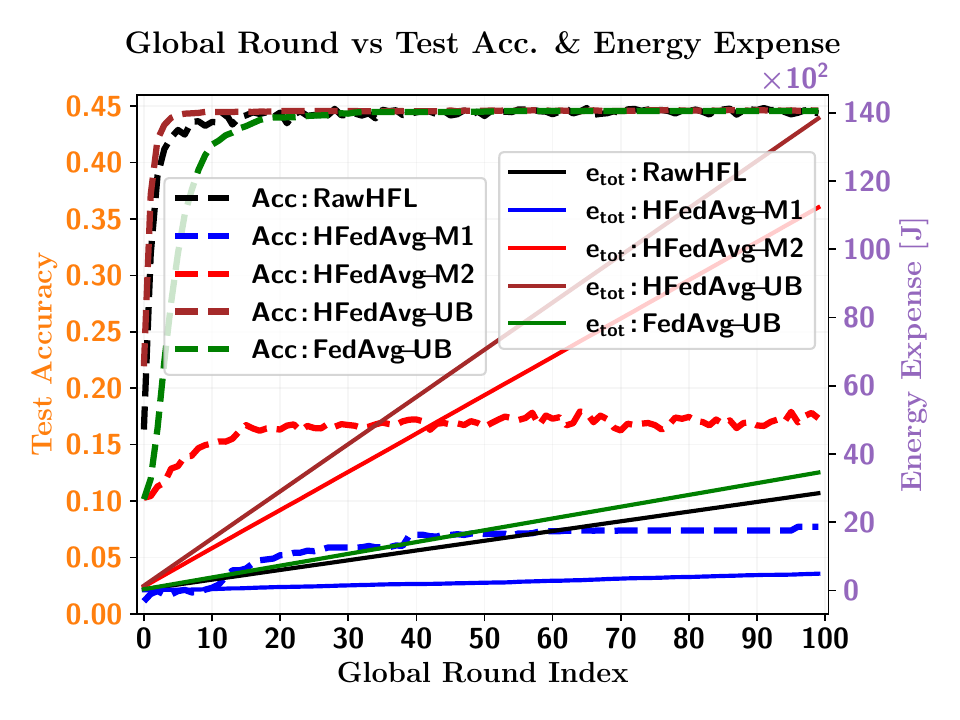}
    \caption{\bblue{Baseline comparisons: FL round vs. accuracy and energy expense ($\upsilon_u \in [0.1, 0.8]$)}}
    \label{accEnergy_epsMin_0.1_epsMax_0.8}
\end{figure}
\begin{figure}[!t]
\centering
    \includegraphics[trim=12 8 45 30, clip, width=0.42\textwidth]{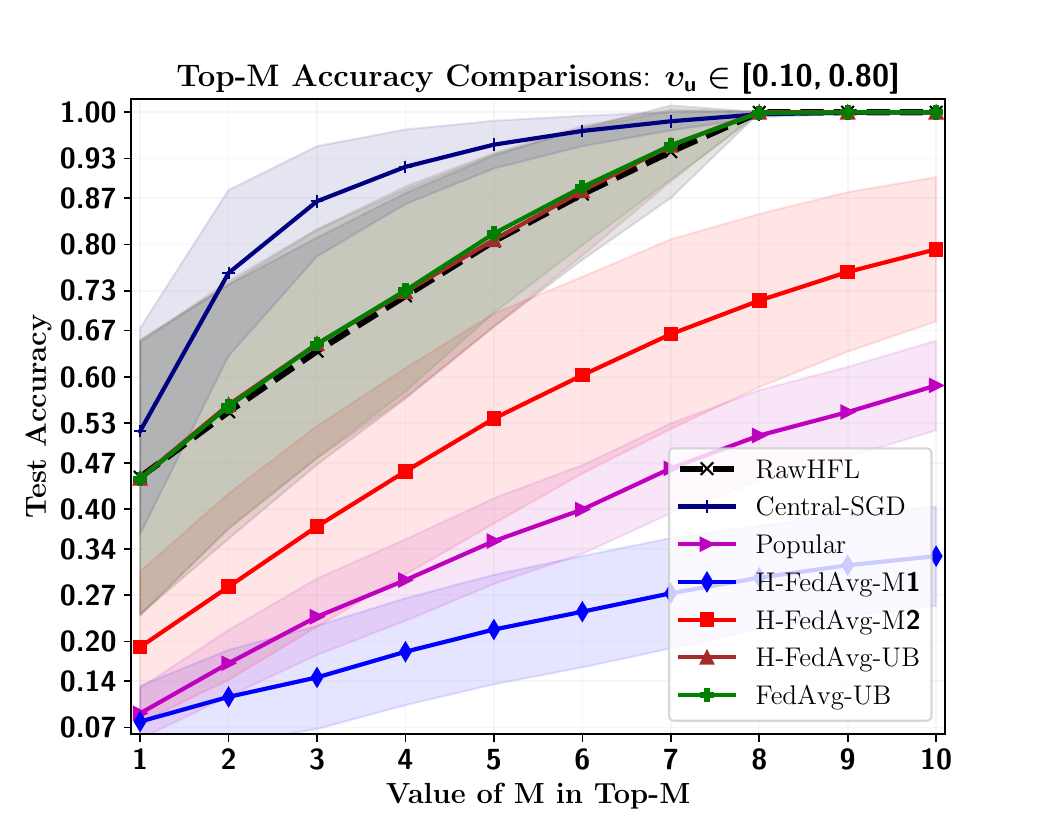}
    \caption{\bblue{Top-$M$ accuracy comparisons: probability that requested content is in the Top-$M$ predicted content ($\upsilon_u \in [0.1, 0.8]$)}}
    \label{topKacc_epsMin_0.1_epsMax_0.8}
\end{figure}
\begin{table}[!t]
\centering
\caption{Performance comparison: $\!K\!\!=\!400$, $E\!=\!4$, $\mathrm{t_{th}}\!\!=\!150$ s, $\upsilon_u=0.3$}
\small
\begin{tabular}{|C{2.5cm}|C{2.4cm}|C{2.7cm}|} \hline 
    \textbf{FL Algorithm} & \textbf{Test Accuracy} & \textbf{Energy Expense [J]}  \\ \hline 
    RawHFL-$|\Bar{\mathcal{U}}_b|=2$ & $0.2994 \pm 0.0143$ & $1508.33 \pm 205.47$ \\ \hline 
    RawHFL-$|\Bar{\mathcal{U}}_b|=4$ & $0.2998 \pm 0.0146$ &  $2848.16 \pm 273.85$ \\ \hline 
    RawHFL-$|\Bar{\mathcal{U}}_b|=6$ &  $0.2991 \pm 0.0155$ & $4137.94 \pm 401.28$ \\ \hline 
    Central-SGD & $0.3947 \pm 0.0856$ & N/A \\ \hline  
    H-FedAvg-M$1$ & $0.0663 \pm 0.0352$ & $490.23 \pm 1391.70$ \\ \hline 
    H-FedAvg-M$2$ & $0.1591 \pm 0.0906$ & $11226.51 \pm 413.97$ \\ \hline
    H-FedAvg-UB \cite{liu2020client} & $0.2977 \pm 0.0161$ & $13836.46 \pm 519.91$ \\ \hline
    FedAvg-UB \cite{mcmahan2017communication} & $0.2964 \pm 0.0170$ & $3455.56 \pm 131.95$ \\ \hline
    Top-Popular & $0.0978 \pm 0.0395$ & N/A  \\ \hline
\end{tabular}
\label{perCompTable_epsMin_0.3_epsMax_0.3}
\end{table}
\begin{table}[!t]
\centering
\caption{Performance comparison: $\!K\!\!=\!400$, $E\!=\!4$, $\mathrm{t_{th}}\!\!=\!150$ s, $\upsilon_u=0.7$}
\small
\begin{tabular}{|C{2.5cm}|C{2.4cm}|C{2.7cm}|} \hline 
    \textbf{FL Algorithm} & \textbf{Test Accuracy} & \textbf{Energy Expense [J]}  \\ \hline 
    RawHFL-$|\Bar{\mathcal{U}}_b|=2$ & $0.7013 \pm 0.0142$ &  $1508.33 \pm 205.47$ \\ \hline 
    RawHFL-$|\Bar{\mathcal{U}}_b|=4$ &  $0.7013 \pm 0.0142$  & $2848.16 \pm 273.85$ \\ \hline 
    RawHFL-$|\Bar{\mathcal{U}}_b|=6$ & $0.7013 \pm 0.0142$ & $4137.94 \pm 401.28$ \\ \hline 
    Central-SGD & $0.7013 \pm 0.0141$ & N/A \\ \hline  
    H-FedAvg-M$1$ & $0.1289 \pm 0.0309$ & $490.23 \pm 1391.70$ \\ \hline 
    H-FedAvg-M$2$ & $0.3483 \pm 0.0636$ & $11226.51 \pm 413.97$ \\ \hline
    H-FedAvg-UB \cite{liu2020client} & $0.7013 \pm 0.0142$ & $13836.46 \pm 519.91$ \\ \hline
    FedAvg-UB \cite{mcmahan2017communication} & $0.7013 \pm 0.0142$ & $3455.56 \pm 131.95$ \\ \hline
    Top-Popular & $0.0591 \pm 0.0322$ & N/A  \\ \hline
\end{tabular}
\label{perCompTable_epsMin_0.7_epsMax_0.7}
\end{table}
\begin{table}[!t]
\centering
\caption{Performance comparison: $\!K\!\!=\!400$, $E\!=\!4$, $\mathrm{t_{th}}\!\!=\!150$ s, $\upsilon_u \in [0.1, 0.8]$}
\small
\begin{tabular}{|C{2.5cm}|C{2.4cm}|C{2.7cm}|} \hline 
    \textbf{FL Algorithm} & \textbf{Test Accuracy} & \textbf{Energy Expense [J]}  \\ \hline 
    RawHFL-$|\Bar{\mathcal{U}}_b|=2$ & $0.4485 \pm 0.2072$ & $1508.33 \pm 205.47$ \\ \hline 
    RawHFL-$|\Bar{\mathcal{U}}_b|=4$ & $0.4485 \pm 0.2075$ & $2848.16 \pm 273.85$ \\ \hline 
    RawHFL-$|\Bar{\mathcal{U}}_b|=6$ & $0.4484 \pm 0.2075$ & $4137.94 \pm 401.28$ \\ \hline 
    Central-SGD & $0.5180 \pm 0.1558$ & N/A \\ \hline  
    H-FedAvg-M$1$ & $0.0791 \pm 0.0542$ & $490.23 \pm 1391.70$ \\ \hline 
    H-FedAvg-M$2$ & $0.1918 \pm 0.1142$ & $11226.51 \pm 413.97$\\ \hline
    H-FedAvg-UB \cite{liu2020client} & $0.4468 \pm 0.2069$ & $13836.46 \pm 519.91$ \\ \hline
    FedAvg-UB \cite{mcmahan2017communication} & $0.4457 \pm 0.2061$ & $3455.56 \pm 131.95$ \\ \hline
    Top-Popular &  $0.0916 \pm 0.0386$ & N/A  \\ \hline
\end{tabular}
\label{perCompTable_epsMin_0.1_epsMax_0.8}
\end{table}


To that end, we examine how these baselines perform for different $\upsilon_u$.
As such, we list the test accuracy and energy expense of these baselines in Tables \ref{perCompTable_epsMin_0.3_epsMax_0.3} -  \ref{perCompTable_epsMin_0.1_epsMax_0.8}.
These results reveal that while \ac{rawhfl} achieves nearly identical test accuracies compared to the \ac{hfedavg}-UB baseline, it is more energy efficient.
Besides, when $\vert \bar{\mathcal{U}}_b \vert$ is high, \ac{fedavg}-UB may require less energy than \ac{rawhfl}. 
However, recall that \ac{fedavg}-UB has $(\vert \mathcal{U}_b \vert - 1)$ times higher communication overhead in the \ac{bs}-\ac{cs} backhaul links. 
Moreover, although centralized \ac{sgd} outperforms the \ac{fl} algorithms in all examined scenarios, the test accuracy gap with our proposed \ac{rawhfl} depends on the value of the $\upsilon_u$.
Particularly, this gap is higher when $\upsilon_u$ is small and decreases as $\upsilon_u$ increases. 
This is due to the fact that a higher $\upsilon_u$ increases more personalized content requests from the client's preferred genre, leading to a clear trend in their datasets.
However, a smaller $\upsilon_u$ leads to more exploration of different genres, leading to irregular request patterns that are hard to predict.
Since the centralized \ac{sgd} can access all clients' training datasets, the severity of these irregularities can be predicted better with centralized \ac{sgd}. 
Furthermore, it is worth noting that the performance gaps between the centralized \ac{sgd} and the \ac{fl} algorithms in the above results indicate the impact of the clients' non-IID data distributions. 
Moreover, since the non-IID data distribution problem typically requires special treatments and belongs to a different research direction \cite{zhu2021datafree}, the solution to this problem in wireless video caching networks deserves a separate study, which we leave as our future works.

\section{Conclusion}
\label{conclusion}
\noindent
Considering a realistic dataset acquisition method, this paper proposed a privacy-preserving \ac{rawhfl} solution for predicting users' to-be-requested content in resource-constrained wireless video caching networks.
Our theoretical analysis revealed that the convergence bound of the proposed algorithm depends on the \ac{fl} and wireless networking parameters, as these parameters determine whether the \ac{es} receives the trained gradients from the clients. 
Besides, the formulated weighted utility function enabled energy-efficient training of the proposed \ac{rawhfl} by jointly selecting clients, local training rounds, \ac{cpu} frequencies and uplink wireless transmission power.
Furthermore, our extensive simulation results indicate that more clients yield better test performances at the expense of higher energy costs.
Moreover, our solution delivers nearly identical performance to the ideal case with significantly lower energy costs.
Finally, we will leverage the prediction results of \ac{rawhfl} in our future work and design an appropriate cache placement policy.

\section*{Acknowledgement}
\noindent
The authors thank Prof. Minseok Choi, Omer Gokalp Serbetci and Yijing Zhang for helpful discussions.

\medskip

\noindent
The authors acknowledge the Center for Advanced Research Computing (CARC) at the University of Southern California for providing computing resources that have contributed to the research. URL: \url{https://carc.usc.edu}.

\bibliography{Reference.bib}

\begin{thebibliography}{10}
\providecommand{\url}[1]{#1}
\csname url@samestyle\endcsname
\providecommand{\newblock}{\relax}
\providecommand{\bibinfo}[2]{#2}
\providecommand{\BIBentrySTDinterwordspacing}{\spaceskip=0pt\relax}
\providecommand{\BIBentryALTinterwordstretchfactor}{4}
\providecommand{\BIBentryALTinterwordspacing}{\spaceskip=\fontdimen2\font plus
\BIBentryALTinterwordstretchfactor\fontdimen3\font minus
  \fontdimen4\font\relax}
\providecommand{\BIBforeignlanguage}[2]{{%
\expandafter\ifx\csname l@#1\endcsname\relax
\typeout{** WARNING: IEEEtran.bst: No hyphenation pattern has been}%
\typeout{** loaded for the language `#1'. Using the pattern for}%
\typeout{** the default language instead.}%
\else
\language=\csname l@#1\endcsname
\fi
#2}}
\providecommand{\BIBdecl}{\relax}
\BIBdecl

\bibitem{pervej2024resource}
M.~F. Pervej and A.~F. Molisch, ``Resource-aware hierarchical federated
  learning for video caching in wireless networks,'' in \emph{ICC 2024-IEEE
  International Conference on Communications}.\hskip 1em plus 0.5em minus
  0.4em\relax IEEE, 2024, pp. 4723--4729.

\bibitem{loh2022youtube}
F.~Loh, F.~Wamser, F.~Poign{\'e}e, S.~Gei{\ss}ler, and T.~Ho{\ss}feld,
  ``Youtube dataset on mobile streaming for internet traffic modeling and
  streaming analysis,'' \emph{Scientific Data}, vol.~9, no.~1, p. 293, 2022.

\bibitem{golrezaei2013femtocaching}
N.~Golrezaei, A.~F. Molisch, A.~G. Dimakis, and G.~Caire, ``Femtocaching and
  device-to-device collaboration: A new architecture for wireless video
  distribution,'' \emph{IEEE Commun. Magaz.}, vol.~51, no.~4, pp. 142--149,
  2013.

\bibitem{liu2016caching}
D.~Liu, B.~Chen, C.~Yang, and A.~F. Molisch, ``Caching at the wireless edge:
  design aspects, challenges, and future directions,'' \emph{IEEE Commun.
  Magaz.}, vol.~54, no.~9, pp. 22--28, 2016.

\bibitem{mcmahan2017communication}
\BIBentryALTinterwordspacing
B.~McMahan, E.~Moore, D.~Ramage, S.~Hampson, and B.~A. y~Arcas,
  ``Communication-efficient learning of deep networks from decentralized
  data,'' in \emph{Proc. AIStats.}\hskip 1em plus 0.5em minus 0.4em\relax PMLR,
  2017, pp. 1273--1282. [Online]. Available:
  \url{https://proceedings.mlr.press/v54/mcmahan17a.html}
\BIBentrySTDinterwordspacing

\bibitem{wang2022demystifying}
J.~Wang, S.~Wang, R.-R. Chen, and M.~Ji, ``Demystifying why local aggregation
  helps: Convergence analysis of hierarchical {SGD},'' in \emph{Proc. AAAI},
  vol.~36, no.~8, 2022, pp. 8548--8556.

\bibitem{pervej2024hierarchical}
M.~F. Pervej, R.~Jin, and H.~Dai, ``Hierarchical federated learning in wireless
  networks: Pruning tackles bandwidth scarcity and system heterogeneity,''
  \emph{IEEE Trans. Wireless Commun.}, vol.~23, no.~9, pp. 11\,417--11\,432,
  2024.

\bibitem{javedankherad2022mobility}
M.~Javedankherad, Z.~Zeinalpour-Yazdi, and F.~Ashtiani, ``Mobility-aware
  content caching using graph-coloring,'' \emph{IEEE Trans. Vehicular
  Technol.}, vol.~71, no.~5, pp. 5666--5670, 2022.

\bibitem{malak2023joint}
D.~Malak, F.~V. Mutlu, J.~Zhang, and E.~M. Yeh, ``Joint power control and
  caching for transmission delay minimization in wireless hetnets,''
  \emph{IEEE/ACM Trans. Network.}, 2023.

\bibitem{qian20233c}
B.~Qian, T.~Ma, K.~Yu, Y.~Xu, Y.~Wu, and H.~Zhou, ``{3C} resource sharing for
  personalized content delivery in {B5G} networks: A contract approach,''
  \emph{IEEE Internet Things J.}, 2023.

\bibitem{ning2023optimizing}
Q.~Ning, M.~Yang, C.~Tang, and L.~Huang, ``Optimizing network performance
  through joint caching and recommendation policy for continuous user request
  behavior,'' \emph{IEEE Trans. Network Serv. Manag.}, 2023.

\bibitem{kang2023content}
M.~W. Kang and Y.~W. Chung, ``Content caching based on popularity and priority
  of content using seq2seq {LSTM} in {ICN},'' \emph{IEEE Access}, vol.~11, pp.
  16\,831--16\,842, 2023.

\bibitem{li2023deep}
D.~Li, H.~Ding, H.~Zhang, L.~Wang, and D.~Yuan, ``Deep learning-enabled joint
  edge content caching and power allocation strategy in wireless networks,''
  \emph{IEEE Trans. Vehicular Technol.}, 2023.

\bibitem{nguyen2021user}
T.-V. Nguyen, N.-N. Dao, W.~Noh, S.~Cho \emph{et~al.}, ``User-aware and
  flexible proactive caching using {LSTM} and ensemble learning in {IoT-MEC}
  networks,'' \emph{IEEE Internet Things J.}, vol.~9, no.~5, pp. 3251--3269,
  2021.

\bibitem{lin2023joint}
P.~Lin, Z.~Ning, Z.~Zhang, Y.~Liu, F.~R. Yu, and V.~C. Leung, ``Joint
  optimization of preference-aware caching and content migration in
  cost-efficient mobile edge networks,'' \emph{IEEE Trans. Wireless Commun.},
  2023.

\bibitem{fang2023drl}
C.~Fang, Z.~Hu, X.~Meng, S.~Tu, Z.~Wang, D.~Zeng, W.~Ni, S.~Guo, and Z.~Han,
  ``{DRL}-driven joint task offloading and resource allocation for
  energy-efficient content delivery in cloud-edge cooperation networks,''
  \emph{IEEE Trans. Vehicular Technol.}, 2023.

\bibitem{zhou2023recommendation}
X.~Zhou, Z.~Ke, and T.~Qiu, ``Recommendation-driven multi-cell cooperative
  caching: A multi-agent reinforcement learning approach,'' \emph{IEEE Trans.
  Mobile Comput.}, 2023.

\bibitem{xiong2023reinforcement}
G.~Xiong, S.~Wang, G.~Yan, and J.~Li, ``Reinforcement learning for dynamic
  dimensioning of cloud caches: A restless bandit approach,'' \emph{IEEE/ACM
  Trans. Networ.}, 2023.

\bibitem{qiao2022adaptive}
D.~Qiao, S.~Guo, D.~Liu, S.~Long, P.~Zhou, and Z.~Li, ``Adaptive federated deep
  reinforcement learning for proactive content caching in edge computing,''
  \emph{IEEE Trans. Parallel Distrib. Syst.}, vol.~33, no.~12, pp. 4767--4782,
  2022.

\bibitem{wang2019edge}
X.~Wang, Y.~Han, C.~Wang, Q.~Zhao, X.~Chen, and M.~Chen, ``In-edge {AI}:
  Intelligentizing mobile edge computing, caching and communication by
  federated learning,'' \emph{IEEE Network}, vol.~33, no.~5, pp. 156--165,
  2019.

\bibitem{wang2023popularity}
Y.~Wang, K.~Zheng, W.~Ye, and Y.~Tang, ``Popularity-aware caching for vehicle
  clusters with federated deep reinforcement learning,'' \emph{IEEE Commun.
  Let.}, 2023.

\bibitem{jiang2021federated}
Y.~Jiang, Y.~Wu, F.-C. Zheng, M.~Bennis, and X.~You, ``Federated learning-based
  content popularity prediction in fog radio access networks,'' \emph{IEEE
  Trans. Wireless Commun.}, vol.~21, no.~6, pp. 3836--3849, 2021.

\bibitem{li2023community}
D.~Li, H.~Zhang, T.~Li, H.~Ding, and D.~Yuan, ``Community detection and
  attention-weighted federated learning based proactive edge caching for
  {D2D}-assisted wireless networks,'' \emph{IEEE Trans. Wireless Commun.},
  2023.

\bibitem{khanal2022route}
S.~Khanal, K.~Thar, and E.-N. Huh, ``Route-based proactive content caching
  using self-attention in hierarchical federated learning,'' \emph{IEEE
  Access}, vol.~10, pp. 29\,514--29\,527, 2022.

\bibitem{lin2023feedback}
N.~Lin, Y.~Wang, E.~Zhang, K.~Yu, L.~Zhao, and M.~Guizani, ``Feedback
  delay-tolerant proactive caching scheme based on federated learning at the
  wireless edge,'' \emph{IEEE Network. Let.}, vol.~5, no.~1, pp. 26--30, 2023.

\bibitem{maale2023deepfesl}
G.~T. Maale, G.~Sun, N.~A.~E. Kuadey, T.~Kwantwi, R.~Ou, and G.~Liu,
  ``Deepfesl: Deep federated echo state learning-based proactive content
  caching in {UAV}-assisted networks,'' \emph{IEEE Trans. Vehicular Technol.},
  2023.

\bibitem{cao2023mobility}
Y.~Cao, S.~Maghsudi, T.~Ohtsuki, and T.~Q. Quek, ``Mobility-aware routing and
  caching in small cell networks using federated learning,'' \emph{IEEE Trans.
  Commun.}, 2023.

\bibitem{feng2023proactive}
B.~Feng, C.~Feng, D.~Feng, Y.~Wu, and X.-G. Xia, ``Proactive content caching
  scheme in urban vehicular networks,'' \emph{IEEE Trans. Commun.}, 2023.

\bibitem{kairouz2021advances}
P.~Kairouz, H.~B. McMahan, B.~Avent, A.~Bellet, M.~Bennis, A.~N. Bhagoji,
  K.~Bonawitz, Z.~Charles, G.~Cormode, R.~Cummings \emph{et~al.}, ``Advances
  and open problems in federated learning,'' \emph{Foundations and
  trends{\textregistered} in machine learning}, vol.~14, no. 1--2, pp. 1--210,
  2021.

\bibitem{cho202towards}
\BIBentryALTinterwordspacing
Y.~Jee~Cho, J.~Wang, and G.~Joshi, ``Towards understanding biased client
  selection in federated learning,'' in \emph{Proc. AIStat.}, vol. 151.\hskip
  1em plus 0.5em minus 0.4em\relax PMLR, 28--30 Mar 2022, pp. 10\,351--10\,375.
  [Online]. Available: \url{https://proceedings.mlr.press/v151/jee-cho22a.html}
\BIBentrySTDinterwordspacing

\bibitem{chen2022optimal}
\BIBentryALTinterwordspacing
W.~Chen, S.~Horv{\'a}th, and P.~Richt{\'a}rik, ``Optimal client sampling for
  federated learning,'' \emph{Trans. on Machine Learning Research}, 2022.
  [Online]. Available: \url{https://openreview.net/forum?id=8GvRCWKHIL}
\BIBentrySTDinterwordspacing

\bibitem{wang2024a}
\BIBentryALTinterwordspacing
S.~Wang and M.~Ji, ``A lightweight method for tackling unknown participation
  statistics in federated averaging,'' in \emph{Proc. ICLR}, 2024. [Online].
  Available: \url{https://openreview.net/forum?id=ZKEuFKfCKA}
\BIBentrySTDinterwordspacing

\bibitem{yang2021achieving}
\BIBentryALTinterwordspacing
H.~Yang, M.~Fang, and J.~Liu, ``Achieving linear speedup with partial worker
  participation in non-{IID} federated learning,'' in \emph{Proc. ICLR}, 2021.
  [Online]. Available: \url{https://openreview.net/forum?id=jDdzh5ul-d}
\BIBentrySTDinterwordspacing

\bibitem{pervej2023Resource}
M.~F. Pervej, R.~Jin, and H.~Dai, ``Resource constrained vehicular edge
  federated learning with highly mobile connected vehicles,'' \emph{IEEE J.
  Sel. Areas Commun.}, vol.~41, no.~6, pp. 1825--1844, 2023.

\bibitem{du2023gradient}
J.~Du, B.~Jiang, C.~Jiang, Y.~Shi, and Z.~Han, ``Gradient and channel aware
  dynamic scheduling for over-the-air computation in federated edge learning
  systems,'' \emph{IEEE J. Sel. Areas Commun.}, vol.~41, no.~4, pp. 1035--1050,
  2023.

\bibitem{chen2023exploring}
Z.~Chen, W.~Yi, and A.~Nallanathan, ``Exploring representativity in device
  scheduling for wireless federated learning,'' \emph{IEEE Trans. Wireless
  Commun.}, vol.~23, no.~1, pp. 720--735, 2024.

\bibitem{ni2024joint}
Z.~Ni, Z.~Zhang, N.~C. Luong, D.~Niyato, D.~I. Kim, and S.~Feng, ``Joint client
  scheduling and quantization optimization in energy harvesting-enabled
  federated learning networks,'' \emph{IEEE Trans. Wireless Commun.}, 2024.

\bibitem{liu2020client}
L.~Liu, J.~Zhang, S.~Song, and K.~B. Letaief, ``Client-edge-cloud hierarchical
  federated learning,'' in \emph{Proc. IEEE ICC}, 2020.

\bibitem{molisch2023wireless}
A.~F. Molisch, \emph{Wireless Communications: From Fundamentals to Beyond
  {5G}}, 3rd~ed.\hskip 1em plus 0.5em minus 0.4em\relax IEEE Press - Wiley,
  2023.

\bibitem{3GPPTR38_901}
``3rd generation partnership project; technical specification group radio
  access network; study on channel model for frequencies from 0.5 to 100
  {GHz},'' 3GPP TR 38.901 V16.1.0, Release 16, Dec. 2019.

\bibitem{somuyiwa2018reinforcement}
S.~O. Somuyiwa, A.~Gy{\"o}rgy, and D.~G{\"u}nd{\"u}z, ``A
  reinforcement-learning approach to proactive caching in wireless networks,''
  \emph{IEEE J. Sel. Areas Commun.}, vol.~36, no.~6, pp. 1331--1344, 2018.

\bibitem{bharath2018caching}
B.~Bharath, K.~G. Nagananda, D.~G{\"u}nd{\"u}z, and H.~V. Poor, ``Caching with
  time-varying popularity profiles: A learning-theoretic perspective,''
  \emph{IEEE Trans. Commun.}, vol.~66, no.~9, pp. 3837--3847, 2018.

\bibitem{pervej2024efficient}
M.~F. Pervej, R.~Jin, S.-C. Lin, and H.~Dai, ``Efficient content delivery in
  user-centric and cache-enabled vehicular edge networks with
  deadline-constrained heterogeneous demands,'' \emph{IEEE Trans. Vehicular
  Technol.}, vol.~73, no.~1, pp. 1129--1145, 2024.

\bibitem{pervej2022mobility}
M.~F. Pervej \emph{et~al.}, ``Mobility, communication and computation aware
  federated learning for internet of vehicles,'' in \emph{Proc. IEEE Intel.
  Vehicles Symp.}, 2022.

\bibitem{hosseinalipour2023parallel}
S.~Hosseinalipour, S.~Wang, N.~Michelusi, V.~Aggarwal, C.~G. Brinton, D.~J.
  Love, and M.~Chiang, ``Parallel successive learning for dynamic distributed
  model training over heterogeneous wireless networks,'' \emph{IEEE/ACM Trans.
  Network.}, pp. 1--16, 2023.

\bibitem{zhu2021datafree}
\BIBentryALTinterwordspacing
Z.~Zhu, J.~Hong, and J.~Zhou, ``Data-free knowledge distillation for
  heterogeneous federated learning,'' in \emph{Proc. ICML}, vol. 139.\hskip 1em
  plus 0.5em minus 0.4em\relax PMLR, 18--24 Jul 2021, pp. 12\,878--12\,889.
  [Online]. Available: \url{https://proceedings.mlr.press/v139/zhu21b.html}
\BIBentrySTDinterwordspacing

\bibitem{zeng2022federated}
T.~Zeng, O.~Semiari, M.~Chen, W.~Saad, and M.~Bennis, ``Federated learning on
  the road autonomous controller design for connected and autonomous
  vehicles,'' \emph{IEEE Trans. Wireless Commun.}, vol.~21, no.~12, pp.
  10\,407--10\,423, 2022.

\bibitem{du2020machine}
J.~Du, C.~Jiang, J.~Wang, Y.~Ren, and M.~Debbah, ``Machine learning for 6g
  wireless networks: Carrying forward enhanced bandwidth, massive access, and
  ultrareliable/low-latency service,'' \emph{IEEE Vehicular Technol. Magaz.},
  vol.~15, no.~4, pp. 122--134, 2020.

\bibitem{diamond2016cvxpy}
S.~Diamond and S.~Boyd, ``{CVXPY}: {A} {P}ython-embedded modeling language for
  convex optimization,'' \emph{J. Machine Learn. Research}, vol.~17, no.~83,
  pp. 1--5, 2016.

\bibitem{sun2019optimal}
Y.~Sun, D.~Xu, D.~W.~K. Ng, L.~Dai, and R.~Schober, ``Optimal {3D}-trajectory
  design and resource allocation for solar-powered {UAV} communication
  systems,'' \emph{IEEE Trans. Commun.}, vol.~67, no.~6, pp. 4281--4298, 2019.

\end{thebibliography}
\bibliographystyle{IEEEtran}

 \onecolumn
 \appendices
\section*{Supplementary Material}
\subsection{Key Equations}
\noindent
Recall that the \acp{ue} minimizes the following local loss function.
\begin{equation}
	\label{localLossFunc_apndx}
	\begin{aligned}
		f_u (\mathbf{w} | \mathcal{D}_{u,\mathrm{proc}}^t) \coloneqq [1/\mathrm{D}_{u,\mathrm{proc}}^t] \sum\nolimits_{(\mathbf{x}_{u}^a, \mathbf{y}_u^a) \in \mathcal{D}_{u,\mathrm{proc}}^t} \mathrm{l}\left(\mathbf{w} | (\mathbf{x}_{u}^a, \bblue{y_u^a}) \right),
	\end{aligned}
\end{equation}
where $\mathrm{l}\left(\mathbf{w} | (\mathbf{x}_{u}^a, \bblue{y_u^a})\right)$ is the loss associated with the $a^{\mathrm{th}}$ data sample.
The \ac{ue} minimizes (\ref{localLossFunc_apndx}) by taking $\mathrm{L}_u^{k,e}$ \ac{sgd} steps as 
\begin{equation}
	\label{localModUp_apndx}
	\begin{aligned}
		\mathbf{w}_{u}^{k,e,\mathrm{L}_{u}^{k,e}} 
		&= \mathbf{w}_{u}^{k, e, 0} - \eta \sum\nolimits_{l=0}^{\mathrm{L}_{u}^{k,e}-1} g_u(\mathbf{w}_{u}^{k,e,l}),
	\end{aligned}
\end{equation} 
where $\eta$ is the learning rate and $g_u(\mathbf{w}_{u}^{k,e,l})$ is the stochastic gradient.

Besides, the \acp{es} want to minimize the following loss
\begin{equation}
	\label{esLossRawHFL_apndx}
	f_{b} (\mathbf{w} | \cup_{u \in \bar{\mathcal{U}}_b^{k,e}} \mathcal{D}_{u, \mathrm{proc}}^t ) \coloneqq \sum\nolimits_{u \in \bar{\mathcal{U}}_b^{k,e}} \alpha_u f_u (\mathbf{w} | \mathcal{D}_{u,\mathrm{proc}}^t),
\end{equation}
where $\sum_{u \in \bar{\mathcal{U}}_b^{k,e}} \alpha_u = 1$.
Each \ac{es} follows the following aggregation rule.
\begin{equation}
	\label{edgeUpdateRule_apndx}
	\begin{aligned}
		\mathbf{w}_b^{k, e+1} 
		&= \mathbf{w}_{b}^{k,e} - \eta \sum\nolimits_{u \in \bar{\mathcal{U}}_b^{k,e}} \alpha_u \big[\mathrm{1}_{u,\mathrm{sc}}^{k,e}/\mathrm{p}_{u,\mathrm{sc}}^{k,e}\big]
		\tilde{g}_u^{k,e},
	\end{aligned}
\end{equation}
where $\tilde{g}_u^{k,e} \coloneqq \sum_{l=0}^{\mathrm{L}_u^{k,e}-1} g_u(\mathbf{w}_{u}^{k,e,l})$ and $\mathrm{1}_{u,\mathrm{sc}}^{k,e}$ is defined as 
\begin{equation}
	\begin{aligned}
		\mathrm{1}_{u,\mathrm{sc}}^{k,e} &= 
		\begin{cases}
			1, &\text{with probability } \mathrm{p}_{u, \mathrm{sc}}^{k,e},\\
			0, &\text{otherwise},
		\end{cases}.
	\end{aligned}
\end{equation}

The central server wants to minimize the following loss function
\begin{equation}
	\label{centralLossRawHFL_apndx}
	f(\mathbf{w}^k | \cup_{b=0}^{B-1} \cup_{u \in \bar{\mathcal{U}}_b^{k,e}} \mathcal{D}_{u,\mathrm{proc}}^t) \coloneqq \sum\nolimits_{b=0}^{B-1} \alpha_b f_{b} \big(\mathbf{w}^k | \cup_{u \in \bar{\mathcal{U}}_b^{k,e}} \mathcal{D}_{u, \mathrm{proc}}^t \big),
\end{equation}
where $\sum_{b=0}^{B-1}\alpha_b = 1$.
Moreover, the central server uses the following aggregation rule.
\begin{equation}
	\label{globalUpdateRule_apndx}
	\begin{aligned}
		\mathbf{w}^{k+1} 
		&= \sum_{b=0}^{B-1}  \alpha_b \mathbf{w}_b^{k, E}   
		= \mathbf{w}^k - \eta \sum\nolimits_{e=0}^{E-1} \sum\nolimits_{b=0}^{B-1} \alpha_b \sum\nolimits_{u \in \bar{\mathcal{U}}_b^{k,e}} \alpha_u \big[\mathrm{1}_{u,\mathrm{sc}}^{k,e} / \mathrm{p}_{u,\mathrm{sc}}^{k,e}\big] \tilde{g}_u^{k,e}.
	\end{aligned}   
\end{equation}

In the following, we have used $f_u(\mathbf{w})$, $f_b(\mathbf{w})$ and $f(\mathbf{w})$ to represent $f_u(\mathbf{w}| \mathcal{D}_{u,\mathrm{proc}}^t)$, $f_b(\mathbf{w} | \cup_{u \in \bar{\mathcal{U}}_b^{k,e}} \mathcal{D}_{u,\mathrm{proc}}^t)$ and $f(\mathbf{w} | \cup_{b=0}^{B-1} \cup_{u \in \bar{\mathcal{U}}_b^{k,e}} \mathcal{D}_{u,\mathrm{proc}}^t)$, respectively, for notational simplicity.
Moreover, from the definition of the loss functions, it is easy to notice that $\nabla f_b(\mathbf{w}) = \sum_{u \in \bar{\mathcal{U}}_b^{k,e}}\alpha_u \nabla f_u(\mathbf{w})$ and $\nabla f(\mathbf{w}) = \sum_{b=0}^{B-1} \alpha_b \nabla f_b(\mathbf{w}) = \sum_{b=0}^{B-1} \alpha_b \sum_{u \in \bar{\mathcal{U}}_b^{k,e}}\alpha_u \nabla f_u(\mathbf{w})$.

\subsection{Key Assumptions}
\noindent 
We make the following assumptions for our theoretical analysis, which are standard in the literature \cite{pervej2024hierarchical, wang2022demystifying, hosseinalipour2023parallel,pervej2023Resource}.
\begin{Assumption}
	$\beta$-smoothness: The loss functions in all nodes are $\beta$-smooth, i.e., $\Vert \nabla f(\mathbf{w}) - \nabla f(\mathbf{w}')\Vert \leq \beta \Vert \mathbf{w} - \mathbf{w}' \Vert$, where $\Vert \cdot \Vert$ is the $L_2$ norm. 
\end{Assumption}
\begin{Assumption}
	Unbiased \ac{sgd}: mini-batch gradients are unbiased, i.e., $\mathbb{E}_{\xi \sim \mathcal{D}_{u,\mathrm{proc}}} \left[ g_u(\mathbf{w}) \right] = \nabla f_u(\mathbf{w})$, where $\mathbb{E} [\cdot]$ is the expectation operator, and $\xi$ is client's randomly sampled mini-batch. 
\end{Assumption}
\begin{Assumption}
	Bounded variance: variance of the gradients is bounded, i.e., $\Vert g_u(\mathbf{w}) - \nabla f_u(\mathbf{w}) \Vert^2 \leq \sigma^2$.
\end{Assumption}
\begin{Assumption}
	Independence: a) the \textit{stochastic gradients} are independent of each other in different episodes and b) \textit{accumulated gradient offloading} is independent of client selection and each other in each edge round $e$. 
\end{Assumption}
\begin{Assumption}
	Bounded divergence: divergence between the a) local and edge and b) edge and global loss functions are bounded, i.e., for all $u$, $b$ and $\mathbf{w}$
	\begin{align}
		&\sum\nolimits_{u \in \bar{\mathcal{U}}_b^{k,e}} \alpha_u \Vert \nabla f_u(\mathbf{w}) - \nabla f_b(\mathbf{w}) \Vert^2 \leq \epsilon_0^2,\\
		&\sum_{b=0}^{B-1} \alpha_b \Big\Vert \sum_{u \in \bar{\mathcal{U}}_b^{k,e}} \rs \rs \alpha_u \nabla \tilde{f}_{u} (\mathbf{w}) - \sum_{b'=0}^{B-1} \alpha_{b'} \rs \rs \sum_{u' \in \bar{\mathcal{U}}_{b'}^{k,e}} \rs\rs \alpha_{u'} \nabla \tilde{f}_{u'} (\mathbf{w}) \Big\Vert^2 \leq \epsilon_1^2, \label{ES_Central_Loss_Divergence_apndx}
	\end{align}
	where $\nabla \tilde{f}_u (\mathbf{w}) \coloneqq \sum_{l=0}^{\mathrm{L}_u^{k,e} - 1} \nabla f_u (\mathbf{w})$.
\end{Assumption}

\section{Proof of Theorem \ref{theorem1_apndx}}
\label{theorem1Proof}

\setcounter{Theorem}{0}
\begin{Theorem}
	\label{theorem1_apndx}
	Suppose $\eta < \mathrm{min}\left\{\frac{1}{2\sqrt{5} \beta \mathrm{L}}, \frac{1}{\beta E \mathrm{L}} \right\}$ and the above assumptions hold.
	Then, the average global gradient norm from $K$ global rounds of \ac{rawhfl} is upper-bounded as
	\begin{align}
		\label{theorem1_eqn_apndx}
		& \frac{1}{K} \sum_{k=0}^{K-1} \mathbb{E} \big[\Vert \nabla f(\mathbf{w}^k) \Vert^2 \big]
		\leq \frac{2}{\eta K} \sum_{k=0}^{K-1} \frac{1}{\Omega^k} \Big\{ \mathbb{E} [ f(\mathbf{w}^k) ] - \mathbb{E} [ f(\mathbf{w}^{k+1}) ] \Big\} + \frac{2 \beta \eta \mathrm{L} \sigma^2}{K} \sum_{k=0}^{K-1} \frac{\mathrm{N}_1^k}{\Omega^k} + \frac{18 E \beta^2 \epsilon_0^2 \eta^2 \mathrm{L}^3}{K} \sum_{k=0}^{K-1} \frac{\mathrm{N}_2}{\Omega^k} + \nonumber\\
		&\squad \frac{20 \mathrm{L} \beta^2 \epsilon_1^2 \eta^2 E^3}{K} \sum_{k=0}^{K-1} \frac{1}{\Omega^k} + \frac{2 \beta \eta \mathrm{L}}{K} \sum_{k=0}^{K-1} \frac{1}{\Omega^k} \sum_{e=0}^{E-1} \sum_{b=0}^{B-1} \alpha_b  \sum\nolimits_{u \in \bar{\mathcal{U}}_b^{k,e}} \alpha_u \mathrm{N}_u  \big[(1/\mathrm{p}_{u,\mathrm{sc}}^{k,e}) - 1 \big] \sum\nolimits_{l=0}^{\mathrm{L}_u^{k,e} - 1} \mathbb{E} \big[\big\Vert g_u (\mathbf{w}_u^{k, e, l}) \big\Vert^2 \big],
	\end{align}
	where the expectations depend on clients' randomly selected mini-batches and $\mathrm{1}_{u,\mathrm{sc}}^{k,e}$'s. 
	Besides, $\Omega^k \coloneqq \sum_{e=0}^{E-1} \sum_{b=0}^{B-1} \alpha_b \sum_{u \in \bar{\mathcal{U}}_b^{k,e}} \alpha_u \mathrm{L}_{u}^{k,e}$, $\mathrm{N}_1^k \coloneqq  60 \beta^3 \eta^3 E^3 \mathrm{L}^3 + 3 \beta \eta E \mathrm{L} +  \sum_{e=0}^{E-1} \sum_{b=0}^{B-1} \alpha_b  \left(\alpha_b + 4 E \mathrm{L} \beta \eta \right) \sum_{u \in \bar{\mathcal{U}}_b^{k,e}} \left(\alpha_u\right)^2$, $\mathrm{N}_2 \coloneqq 1 + 20 \beta^2 \eta^2 E^2 \mathrm{L}^2$ and $\mathrm{N}_{u} \coloneqq  E + 3 \beta \eta \mathrm{L} + 4 \beta \eta E \left(\alpha_u + 15 E \beta^2 \eta^2 \mathrm{L}^3 \right)$.
\end{Theorem}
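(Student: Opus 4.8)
The plan is to begin from the $\beta$-smoothness of the global loss (Assumption~\ref{betaSmoothAssump}) applied across one global round, combined with the global aggregation rule (\ref{globalUpdateRule_apndx}). Writing the increment as $\mathbf{w}^{k+1} - \mathbf{w}^k = -\eta \sum_{e}\sum_{b}\sum_{u} \alpha_b \alpha_u (\mathrm{1}_{u,\mathrm{sc}}^{k,e}/\mathrm{p}_{u,\mathrm{sc}}^{k,e}) \tilde{g}_u^{k,e}$, smoothness yields $f(\mathbf{w}^{k+1}) \le f(\mathbf{w}^k) - \eta \langle \nabla f(\mathbf{w}^k), \cdot \rangle + \tfrac{\beta\eta^2}{2}\Vert \cdot \Vert^2$, which is exactly the starting inequality of the sketch. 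The first task is to take expectations. Here I would exploit two facts: unbiasedness of the mini-batch gradient, so $\mathbb{E}[g_u(\mathbf{w})] = \nabla f_u(\mathbf{w})$; and the reception indicator, which by its Bernoulli definition together with the independence assumption satisfies $\mathbb{E}[\mathrm{1}_{u,\mathrm{sc}}^{k,e}/\mathrm{p}_{u,\mathrm{sc}}^{k,e}] = 1$ and $\mathbb{E}[(\mathrm{1}_{u,\mathrm{sc}}^{k,e}/\mathrm{p}_{u,\mathrm{sc}}^{k,e})^2] = 1/\mathrm{p}_{u,\mathrm{sc}}^{k,e}$. Thus the inner-product term reduces in expectation to $-\eta\langle \nabla f(\mathbf{w}^k), \sum_{e,b,u}\alpha_b\alpha_u \sum_{l}\nabla f_u(\mathbf{w}_u^{k,e,l})\rangle$, which I would split via the polarization identity $\langle a,b\rangle = \tfrac12(\Vert a\Vert^2 + \Vert b\Vert^2 - \Vert a-b\Vert^2)$ to expose the target quantity $\Vert\nabla f(\mathbf{w}^k)\Vert^2$, while the residual $\nabla f_u(\mathbf{w}_u^{k,e,l}) - \nabla f_u(\mathbf{w}^k)$ is converted by $\beta$-smoothness into a model-drift term $\Vert \mathbf{w}_u^{k,e,l} - \mathbf{w}^k\Vert^2$.

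Next I would handle the squared-norm term. Using the importance-sampling variance identity, it separates into a ``mean'' piece and a stochastic-noise piece; the noise piece, bounded via the bounded-variance assumption, produces the $\sigma^2$ contributions and, through the $1/\mathrm{p}_{u,\mathrm{sc}}^{k,e}$ second moments, the $\big(1/\mathrm{p}_{u,\mathrm{sc}}^{k,e}-1\big)$ factors appearing in the final term of (\ref{theorem1_eqn_apndx}). Collecting these pieces and imposing $\eta \le 1/(\beta E \mathrm{L})$ to absorb the quadratic term, I arrive at the intermediate inequality (\ref{theorem1_Mid_Bound}), whose remaining unknowns are the two drift quantities $\mathbb{E}\Vert \mathbf{w}^k - \mathbf{w}_b^{k,e}\Vert^2$ (edge-to-global) and $\mathbb{E}\Vert \mathbf{w}_b^{k,e} - \mathbf{w}_u^{k,e,l}\Vert^2$ (client-to-edge).

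These two drifts I would establish as separate lemmas. For the client-to-edge drift I would unroll the local SGD recursion (\ref{localModUp_apndx}) within a single edge round, repeatedly applying the relaxed triangle inequality $\Vert x+y\Vert^2 \le (1+\rho)\Vert x\Vert^2 + (1+1/\rho)\Vert y\Vert^2$, the bounded-variance bound, and the local--edge divergence bound $\epsilon_0^2$; this gives (\ref{lemama1_Eqn}) under $\eta \le 1/(3\sqrt{2}\,\beta\mathrm{L})$. For the edge-to-global drift I would unroll the edge aggregation (\ref{edgeUpdateRule_apndx}) over $e$, now invoking the edge--global divergence bound $\epsilon_1^2$, to obtain (\ref{lemma2_eqn}). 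The final step is to substitute both lemmas into (\ref{theorem1_Mid_Bound}), divide by $\Omega^k$, average over $k$, and regroup constants into the compact coefficients $\mathrm{N}_1^k$, $\mathrm{N}_2$, and $\mathrm{N}_u$.

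The hard part will be the coupling and self-reference of the drift recursions: the edge-to-global bound (\ref{lemma2_eqn}) itself contains client-to-edge drift terms, and each recursion reintroduces higher-order ($\eta^2$, $\eta^4$) copies of the other and of itself. Closing these estimates is what forces the stated learning-rate condition $\eta < \min\{1/(2\sqrt{5}\,\beta\mathrm{L}),\, 1/(\beta E\mathrm{L})\}$, chosen so that the geometric-type contraction factors remain below one and the nested series can be summed. I expect that no single conceptual manipulation is difficult, but that the careful propagation of constants through the coupled substitutions—ensuring they collapse exactly into $\mathrm{N}_1^k$, $\mathrm{N}_2$, and $\mathrm{N}_u$—is where the bulk of the technical effort lies.
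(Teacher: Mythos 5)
Your proposal is correct and follows essentially the same route as the paper's proof: the same smoothness-plus-aggregation starting inequality, the same expectation handling of the reception indicators (yielding the $(1/\mathrm{p}_{u,\mathrm{sc}}^{k,e}-1)$ factors), the same polarization/variance decomposition leading to the intermediate bound, and the same two coupled drift lemmas (client-to-edge under $\eta \le 1/(3\sqrt{2}\beta\mathrm{L})$, edge-to-global under $\eta < 1/(2\sqrt{5}\beta\mathrm{L})$) substituted back and regrouped into $\mathrm{N}_1^k$, $\mathrm{N}_2$, $\mathrm{N}_u$. Your closing observation that the edge-to-global recursion reuses the client-to-edge bound, which forces the stated learning-rate condition, is exactly how the paper closes the argument.
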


\begin{proof}
	\noindent
	From our aggregation rule in (\ref{globalUpdateRule_apndx}), we write the following
	\begin{equation}
		\label{convProofMainEq}
		\begin{aligned}
			&f(\mathbf{w}^{k+1}) 
			= f \bigg(\mathbf{w}^k - \eta \sum_{e=0}^{E-1} \sum_{b=0}^{B-1} \alpha_b\sum_{u \in \bar{\mathcal{U}}_b^{k,e}} \alpha_u \frac{\mathrm{1}_{u,\mathrm{sc}}^{k,e}}{\mathrm{p}_{u,\mathrm{sc}}^{k,e}} \tilde{g}_u^{k,e}\bigg)\\
			&\overset{(a)}{\leq} f(\mathbf{w}^k) + \bigg<\nabla f(\mathbf{w}^k), - \eta \sum_{e=0}^{E-1} \sum_{b=0}^{B-1} \alpha_b\sum_{u \in \bar{\mathcal{U}}_b^{k,e}} \alpha_u \frac{\mathrm{1}_{u,\mathrm{sc}}^{k,e}}{\mathrm{p}_{u,\mathrm{sc}}^{k,e}} \tilde{g}_u^{k,e}\bigg> + \frac{\beta \eta^2}{2} \bigg\Vert \sum_{e=0}^{E-1} \sum_{b=0}^{B-1} \alpha_b\sum_{u \in \bar{\mathcal{U}}_b^{k,e}} \alpha_u \frac{\mathrm{1}_{u,\mathrm{sc}}^{k,e}}{\mathrm{p}_{u,\mathrm{sc}}^{k,e}} \tilde{g}_u^{k,e} \bigg\Vert^2 \\
			&= f(\mathbf{w}^k) - \eta \bigg<\nabla f(\mathbf{w}^k), \sum_{e=0}^{E-1} \sum_{b=0}^{B-1} \alpha_b\sum_{u \in \bar{\mathcal{U}}_b^{k,e}} \alpha_u \frac{\mathrm{1}_{u,\mathrm{sc}}^{k,e}}{\mathrm{p}_{u,\mathrm{sc}}^{k,e}} \sum_{l=0}^{\mathrm{L}_{u}^{k,e}-1} g_u\big(\mathbf{w}_u^{k, e,l}\big) \bigg> + \frac{\beta \eta^2}{2} \bigg\Vert \sum_{e=0}^{E-1} \sum_{b=0}^{B-1} \alpha_b\sum_{u \in \bar{\mathcal{U}}_b^{k,e}} \alpha_u \frac{\mathrm{1}_{u,\mathrm{sc}}^{k,e}}{\mathrm{p}_{u,\mathrm{sc}}^{k,e}} \tilde{g}_u^{k,e} \bigg\Vert^2,
		\end{aligned}   
	\end{equation}
	where $(a)$ stems from $\beta$-smoothness property, i.e., $f(\mathbf{y}) \leq f(\mathbf{x}) + \left<\nabla f(\mathbf{x}), \mathbf{y} - \mathbf{x} \right> + \frac{\beta}{2}\Vert \mathbf{y} - \mathbf{x} \Vert^2$. 
	
	Now, taking expectation over both sides of (\ref{convProofMainEq}), we get 
	\begin{equation}
		\label{convProofMainEq_1}
		\begin{aligned}
			\mathbb{E} \left[ f(\mathbf{w}^{k+1}) \right] 
			&\leq ~ \mathbb{E} [f(\mathbf{w}^k)] - \underbrace{\eta \sum_{e=0}^{E-1} \sum_{b=0}^{B-1} \alpha_b\sum_{u \in \bar{\mathcal{U}}_b^{k,e}} \alpha_u \sum_{l=0}^{\mathrm{L}_{u}^{k,e}-1}  \mathbb{E} \left[\bigg<\nabla f(\mathbf{w}^k), \frac{\mathrm{1}_{u,\mathrm{sc}}^{k,e}}{\mathrm{p}_{u,\mathrm{sc}}^{k,e}} g_u\big(\mathbf{w}_u^{k, e,l}\big) \bigg>\right]}_{\mathrm{T}_1} + \\
			&\bquad\bquad\squad \underbrace{\frac{\beta \eta^2}{2} \mathbb{E} \bigg[\bigg\Vert \sum_{e=0}^{E-1} \sum_{b=0}^{B-1} \alpha_b\sum_{u \in \bar{\mathcal{U}}_b^{k,e}} \alpha_u \frac{\mathrm{1}_{u,\mathrm{sc}}^{k,e}}{\mathrm{p}_{u,\mathrm{sc}}^{k,e}} \tilde{g}_u^{k,e} \bigg\Vert^2\bigg]}_{\mathrm{T}_2}, 
		\end{aligned}
	\end{equation}
	where $\mathbb{E}[\cdot]$ depends on $\mathrm{1}_{u,\mathrm{sc}}^{k,e}$'s and clients mini-batches.
	
	Now, we simplify the $\mathrm{T}_1$ term as 
	\begin{align}
		\mathrm{T}_1 
		&\overset{(a)}{=} -\eta \sum_{e=0}^{E-1} \sum_{b=0}^{B-1} \alpha_b\sum_{u \in \bar{\mathcal{U}}_b^{k,e}} \alpha_u \sum_{l=0}^{\mathrm{L}_{u}^{k,e}-1}  \Bigg<\nabla f(\mathbf{w}^k), \frac{\mathbb{E} \left[\mathrm{1}_{u,\mathrm{sc}}^{k,e}\right]}{ \mathrm{p}_{u,\mathrm{sc}}^{k,e}} \mathbb{E} \left[g_u\big(\mathbf{w}_u^{k, e,l}\big)\right] \Bigg> \nonumber\\
		&\overset{(b)}{=} -\eta \sum_{e=0}^{E-1} \sum_{b=0}^{B-1} \alpha_b\sum_{u \in \bar{\mathcal{U}}_b^{k,e}} \alpha_u \sum_{l=0}^{\mathrm{L}_{u}^{k,e}-1} \left<\nabla f(\mathbf{w}^k),  \nabla f_u \big(\mathbf{w}_u^{k, e,l}\big) \right> \nonumber \\
		&= -\frac{\eta}{2} \sum_{e=0}^{E-1} \sum_{b=0}^{B-1} \alpha_b\sum_{u \in \bar{\mathcal{U}}_b^{k,e}} \alpha_u \sum_{l=0}^{\mathrm{L}_{u}^{k,e}-1} \bigg[\left\Vert \nabla f(\mathbf{w}^k)\right\Vert^2 + \left\Vert \nabla f_u \big(\mathbf{w}_u^{k, e,l}\big) \right\Vert^2 - \left\Vert \nabla f(\mathbf{w}^k) - \nabla f_u \big(\mathbf{w}_u^{k, e,l}\big) \right\Vert^2 \bigg] \nonumber\\
		&= \frac{\eta}{2} \sum_{e=0}^{E-1} \sum_{b=0}^{B-1} \alpha_b\sum_{u \in \bar{\mathcal{U}}_b^{k,e}} \alpha_u \sum_{l=0}^{\mathrm{L}_{u}^{k,e}-1} \left\Vert \nabla f(\mathbf{w}^k) - \nabla f_u \big(\mathbf{w}_u^{k, e,l}\big) \right\Vert^2 - \frac{\eta}{2} \sum_{e=0}^{E-1} \sum_{b=0}^{B-1} \alpha_b\sum_{u \in \bar{\mathcal{U}}_b^{k,e}} \alpha_u \mathrm{L}_{u}^{k,e} \left\Vert \nabla f(\mathbf{w}^k)\right\Vert^2 - \nonumber\\
		&\bquad \bquad \mquad \frac{\eta}{2} \sum_{e=0}^{E-1} \sum_{b=0}^{B-1} \alpha_b\sum_{u \in \bar{\mathcal{U}}_b^{k,e}} \alpha_u \sum_{l=0}^{\mathrm{L}_{u}^{k,e}-1} \left
		\Vert \nabla f_u \big(\mathbf{w}_u^{k, e,l}\big) \right\Vert^2,
	\end{align}
	where $(a)$ appears from the independence of successful offloading of $\tilde{g}_u^{k,e}$'s and stochastic gradients. 
	Besides, $(b)$ stems from the fact that $\mathbb{E} [\mathrm{1}_{u,\mathrm{sc}}^{k,e}] = \mathrm{p}_{u,\mathrm{sc}}^{k,e}$ and $\mathbb{E} [g_u(\mathbf{w}_u^{k,e,l})] = \nabla f_u (\mathbf{w}_u^{k,e,l})$.

	We also simplify the $\mathrm{T}_2$ term as follows:
	\begin{align}
		\mathrm{T}_2 
		&\overset{(a)} = \frac{\beta \eta^2}{2} \mathbb{E} \Bigg[\Bigg\Vert \sum_{e=0}^{E-1} \sum_{b=0}^{B-1} \alpha_b\sum_{u \in \bar{\mathcal{U}}_b^{k,e}} \alpha_u \frac{\mathrm{1}_{u,\mathrm{sc}}^{k,e}}{\mathrm{p}_{u,\mathrm{sc}}^{k,e}} \tilde{g}_u^{k,e} - \mathbb{E} \left[ \sum_{e=0}^{E-1} \sum_{b=0}^{B-1} \alpha_b\sum_{u \in \bar{\mathcal{U}}_b^{k,e}} \alpha_u \frac{\mathrm{1}_{u,\mathrm{sc}}^{k,e}}{\mathrm{p}_{u,\mathrm{sc}}^{k,e}} \tilde{g}_u^{k,e} \right] \Bigg\Vert^2\Bigg] + \nonumber \\
		&\bquad \bquad\mquad \frac{\beta \eta^2}{2} \left(\mathbb{E} \left[ \sum_{e=0}^{E-1} \sum_{b=0}^{B-1} \alpha_b\sum_{u \in \bar{\mathcal{U}}_b^{k,e}} \alpha_u \frac{\mathrm{1}_{u,\mathrm{sc}}^{k,e}}{\mathrm{p}_{u,\mathrm{sc}}^{k,e}} \tilde{g}_u^{k,e} \right] \right)^2 \nonumber\\
		&=\frac{\beta \eta^2}{2} \mathbb{E} \Bigg[\Bigg\Vert \sum_{e=0}^{E-1} \sum_{b=0}^{B-1} \alpha_b\sum_{u \in \bar{\mathcal{U}}_b^{k,e}} \alpha_u \frac{\mathrm{1}_{u,\mathrm{sc}}^{k,e}}{\mathrm{p}_{u,\mathrm{sc}}^{k,e}} \tilde{g}_u^{k,e} - \sum_{e=0}^{E-1} \sum_{b=0}^{B-1} \alpha_b\sum_{u \in \bar{\mathcal{U}}_b^{k,e}} \alpha_u \sum_{l=0}^{\mathrm{L}_{u}^{k,e} - 1}\nabla f_u (\mathbf{w}_u^{k,e,l})  \Bigg\Vert^2\Bigg] + \nonumber \\
		&\bquad \bquad\mquad \frac{\beta \eta^2}{2} \left\Vert  \sum_{e=0}^{E-1} \sum_{b=0}^{B-1} \alpha_b\sum_{u \in \bar{\mathcal{U}}_b^{k,e}} \alpha_u \sum_{l=0}^{\mathrm{L}_u^{k,e} - 1} \nabla f_u (\mathbf{w}_u^{k, e, l}) \right\Vert^2 \nonumber\\
		&\overset{(b)}{\leq} \frac{\beta \eta^2}{2} \mathbb{E} \Bigg[\Bigg\Vert \sum_{e=0}^{E-1} \sum_{b=0}^{B-1} \alpha_b\sum_{u \in \bar{\mathcal{U}}_b^{k,e}} \alpha_u \frac{\mathrm{1}_{u,\mathrm{sc}}^{k,e}}{\mathrm{p}_{u,\mathrm{sc}}^{k,e}} \sum_{l=0}^{\mathrm{L}_u^{k,e} - 1} g_u (\mathbf{w}_u^{k, e, l}) \pm \sum_{e=0}^{E-1} \sum_{b=0}^{B-1} \alpha_b\sum_{u \in \bar{\mathcal{U}}_b^{k,e}} \alpha_u \sum_{l=0}^{\mathrm{L}_u^{k,e} - 1} g_u (\mathbf{w}_u^{k, e, l}) - \nonumber\\
		&\bquad \sum_{e=0}^{E-1} \sum_{b=0}^{B-1} \alpha_b\sum_{u \in \bar{\mathcal{U}}_b^{k,e}} \alpha_u \sum_{l=0}^{\mathrm{L}_{u}^{k,e} - 1}\nabla f_u (\mathbf{w}_u^{k,e,l})  \Bigg\Vert^2\Bigg] + \frac{\beta E \eta^2}{2} \sum_{e=0}^{E-1} \sum_{b=0}^{B-1} \alpha_b \sum_{u \in \bar{\mathcal{U}}_b^{k,e}} \alpha_u \mathrm{L}_u^{k,e} \sum_{l=0}^{\mathrm{L}_u^{k,e} - 1} \left\Vert  \nabla f_u (\mathbf{w}_u^{k, e, l}) \right\Vert^2 \nonumber\\
		&\overset{(c)}{\leq} \frac{\beta \eta^2}{2} \mathbb{E} \Bigg[\Bigg\Vert \sum_{e=0}^{E-1} \sum_{b=0}^{B-1} \alpha_b\sum_{u \in \bar{\mathcal{U}}_b^{k,e}} \alpha_u  \sum_{l=0}^{\mathrm{L}_u^{k,e} - 1} g_u (\mathbf{w}_u^{k, e, l}) \left(\frac{\mathrm{1}_{u,\mathrm{sc}}^{k,e}} {\mathrm{p}_{u,\mathrm{sc}}^{k,e}} - 1 \right) + \sum_{e=0}^{E-1} \sum_{b=0}^{B-1} \alpha_b\sum_{u \in \bar{\mathcal{U}}_b^{k,e}} \alpha_u \sum_{l=0}^{\mathrm{L}_u^{k,e} - 1} \left( g_u (\mathbf{w}_u^{k, e, l}) - \nabla f_u (\mathbf{w}_u^{k,e,l}) \right) \Bigg\Vert^2\Bigg] \nonumber \\
		&\bquad \bquad \mquad + \frac{\beta E \mathrm{L} \eta^2}{2} \sum_{e=0}^{E-1} \sum_{b=0}^{B-1} \alpha_b \sum_{u \in \bar{\mathcal{U}}_b^{k,e}} \alpha_u \sum_{l=0}^{\mathrm{L}_u^{k,e} - 1} \left\Vert  \nabla f_u (\mathbf{w}_u^{k, e, l}) \right\Vert^2 \nonumber\\
		&\leq \beta \eta^2 \mathbb{E} \Bigg[\Bigg\Vert \sum_{e=0}^{E-1} \sum_{b=0}^{B-1} \alpha_b\sum_{u \in \bar{\mathcal{U}}_b^{k,e}} \alpha_u  \sum_{l=0}^{\mathrm{L}_u^{k,e} - 1} g_u (\mathbf{w}_u^{k, e, l}) \left(\frac{\mathrm{1}_{u,\mathrm{sc}}^{k,e}} {\mathrm{p}_{u,\mathrm{sc}}^{k,e}} - 1 \right) \Bigg\Vert^2 \Bigg] + \frac{\beta E \mathrm{L} \eta^2}{2} \sum_{e=0}^{E-1} \sum_{b=0}^{B-1} \alpha_b \sum_{u \in \bar{\mathcal{U}}_b^{k,e}} \alpha_u \sum_{l=0}^{\mathrm{L}_u^{k,e} - 1} \left\Vert  \nabla f_u (\mathbf{w}_u^{k, e, l}) \right\Vert^2 + \nonumber\\
		&\bquad \bquad \beta \eta^2 \mathbb{E} \Bigg[\Bigg\Vert \sum_{e=0}^{E-1} \sum_{b=0}^{B-1} \alpha_b\sum_{u \in \bar{\mathcal{U}}_b^{k,e}} \alpha_u \sum_{l=0}^{\mathrm{L}_u^{k,e} - 1} \left( g_u (\mathbf{w}_u^{k, e, l}) - \nabla f_u (\mathbf{w}_u^{k,e,l}) \right) \Bigg\Vert^2\Bigg]  \nonumber\\
		&\leq \beta E \eta^2 \sum_{e=0}^{E-1} \sum_{b=0}^{B-1} \alpha_b\sum_{u \in \bar{\mathcal{U}}_b^{k,e}} \alpha_u \mathrm{L}_u^{k,e} \sum_{l=0}^{\mathrm{L}_u^{k,e} - 1} \mathbb{E} \Bigg[\Bigg\Vert g_u (\mathbf{w}_u^{k, e, l}) \left(\frac{\mathrm{1}_{u,\mathrm{sc}}^{k,e}} {\mathrm{p}_{u,\mathrm{sc}}^{k,e}} - 1 \right) \Bigg\Vert^2 \Bigg] + \frac{\beta E \mathrm{L} \eta^2}{2} \sum_{e=0}^{E-1} \sum_{b=0}^{B-1} \alpha_b \sum_{u \in \bar{\mathcal{U}}_b^{k,e}} \alpha_u \sum_{l=0}^{\mathrm{L}_u^{k,e} - 1} \left\Vert  \nabla f_u (\mathbf{w}_u^{k, e, l}) \right\Vert^2 + \nonumber\\
		&\bquad \bquad \beta \eta^2 \mathbb{E} \Bigg[\Bigg\Vert \sum_{b=0}^{B-1} \alpha_b \sum_{e=0}^{E-1} \sum_{u \in \bar{\mathcal{U}}_b^{k,e}} \alpha_u \sum_{l=0}^{\mathrm{L}_u^{k,e} - 1} \left( g_u (\mathbf{w}_u^{k, e, l}) - \nabla f_u (\mathbf{w}_u^{k,e,l}) \right) \Bigg\Vert^2\Bigg] \nonumber \\
		&\overset{(d)}{=} \beta E \eta^2 \sum_{e=0}^{E-1} \sum_{b=0}^{B-1} \alpha_b\sum_{u \in \bar{\mathcal{U}}_b^{k,e}} \alpha_u \mathrm{L}_u^{k,e} \sum_{l=0}^{\mathrm{L}_u^{k,e} - 1} \mathbb{E} \left[\left\Vert g_u (\mathbf{w}_u^{k, e, l}) \right\Vert^2 \right] \times \mathbb{E} \Bigg[\bigg(\frac{\mathrm{1}_{u,\mathrm{sc}}^{k,e}} {\mathrm{p}_{u,\mathrm{sc}}^{k,e}} - 1 \bigg)^2\Bigg]  +  \nonumber\\
		& \frac{\beta E \mathrm{L} \eta^2}{2} \sum_{e=0}^{E-1} \sum_{b=0}^{B-1} \alpha_b \sum_{u \in \bar{\mathcal{U}}_b^{k,e}} \alpha_u \sum_{l=0}^{\mathrm{L}_u^{k,e} - 1} \left\Vert  \nabla f_u (\mathbf{w}_u^{k, e, l}) \right\Vert^2 + \beta \eta^2 \mathbb{E} \Bigg\{ \sum_{b=0}^{B-1} \left(\alpha_b\right)^2 \Bigg\Vert \sum_{e=0}^{E-1} \sum_{u \in \bar{\mathcal{U}}_b^{k,e}} \alpha_u \sum_{l=0}^{\mathrm{L}_u^{k,e} - 1} \left( g_u (\mathbf{w}_u^{k, e, l}) - \nabla f_u (\mathbf{w}_u^{k,e,l}) \right) \Bigg\Vert^2 \nonumber\\
		& + \underbrace{\sum_{b=0}^{B-1} \alpha_b \bigg[ \sum_{e=0}^{E-1} \sum_{u \in \bar{\mathcal{U}}_b^{k,e}} \alpha_u \sum_{l=0}^{\mathrm{L}_u^{k,e} - 1} \left( g_u (\mathbf{w}_u^{k, e, l}) - \nabla f_u (\mathbf{w}_u^{k,e,l}) \right) \bigg] \times \sum_{b'=1, b' \neq b}^B \alpha_{b'}^{k} \bigg[ \sum_{e=0}^{E-1} \sum_{u \in \mathcal{U}_{b'}^{k,e}} \alpha_u \sum_{l=0}^{\mathrm{L}_u^{k,e} - 1} \left( g_u (\mathbf{w}_u^{k, e, l}) - \nabla f_u (\mathbf{w}_u^{k,e,l}) \right) \bigg]}_{T_1}  \Bigg\}  \nonumber \\
		&\overset{(e)}{=} \beta E \eta^2 \sum_{e=0}^{E-1} \sum_{b=0}^{B-1} \alpha_b\sum_{u \in \bar{\mathcal{U}}_b^{k,e}} \alpha_u \mathrm{L}_u^{k,e} \sum_{l=0}^{\mathrm{L}_u^{k,e} - 1} \bigg[\frac{1} {\mathrm{p}_{u,\mathrm{sc}}^{k,e}} - 1 \bigg] \mathbb{E} \left[\left\Vert g_u (\mathbf{w}_u^{k, e, l}) \right\Vert^2 \right] + \frac{\beta E \mathrm{L} \eta^2}{2} \sum_{e=0}^{E-1} \sum_{b=0}^{B-1} \alpha_b \sum_{u \in \bar{\mathcal{U}}_b^{k,e}} \alpha_u \sum_{l=0}^{\mathrm{L} - 1} \left\Vert  \nabla f_u (\mathbf{w}_u^{k, e, l}) \right\Vert^2 + \nonumber\\
		&\bquad\bquad \beta \eta^2 \sum_{b=0}^{B-1} \left(\alpha_b\right)^2 \mathbb{E} \Bigg[ \Bigg\Vert \sum_{e=0}^{E-1} \sum_{u \in \bar{\mathcal{U}}_b^{k,e}} \alpha_u \sum_{l=0}^{\mathrm{L}_u^{k,e} - 1} \left( g_u (\mathbf{w}_u^{k, e, l}) - \nabla f_u (\mathbf{w}_u^{k,e,l}) \right) \Bigg\Vert^2 \Bigg]  \nonumber \\
		&\overset{(f)}{=} \beta E \eta^2 \sum_{e=0}^{E-1} \sum_{b=0}^{B-1} \alpha_b\sum_{u \in \bar{\mathcal{U}}_b^{k,e}} \alpha_u \mathrm{L}_u^{k,e} \sum_{l=0}^{\mathrm{L}_u^{k,e} - 1} \bigg[\frac{1} {\mathrm{p}_{u,\mathrm{sc}}^{k,e}} - 1 \bigg] \mathbb{E} \left[\left\Vert g_u (\mathbf{w}_u^{k, e, l}) \right\Vert^2 \right] + \frac{\beta E \mathrm{L} \eta^2}{2} \sum_{e=0}^{E-1} \sum_{b=0}^{B-1} \alpha_b \sum_{u \in \bar{\mathcal{U}}_b^{k,e}} \alpha_u \sum_{l=0}^{\mathrm{L}_u^{k,e} - 1} \left\Vert  \nabla f_u (\mathbf{w}_u^{k, e, l}) \right\Vert^2 + \nonumber\\
		&\bquad\bquad \beta \eta^2 \sum_{e=0}^{E-1} \sum_{b=0}^{B-1} \left(\alpha_b\right)^2 \sum_{u \in \bar{\mathcal{U}}_b^{k,e}}  \left(\alpha_u \right)^2 \sum_{l=0}^{\mathrm{L}_u^{k,e} - 1} \mathbb{E} \Bigg[ \Bigg\Vert g_u (\mathbf{w}_u^{k, e, l}) - \nabla f_u (\mathbf{w}_u^{k,e,l}) \Bigg\Vert^2 \Bigg] \nonumber \\
		&\overset{(g)}{\leq} \beta E \eta^2 \sum_{e=0}^{E-1} \sum_{b=0}^{B-1} \alpha_b\sum_{u \in \bar{\mathcal{U}}_b^{k,e}} \alpha_u \mathrm{L}_u^{k,e} \sum_{l=0}^{\mathrm{L}_u^{k,e} - 1} \bigg[\frac{1} {\mathrm{p}_{u,\mathrm{sc}}^{k,e}} - 1 \bigg] \mathbb{E} \left[\left\Vert g_u (\mathbf{w}_u^{k, e, l}) \right\Vert^2 \right] + \frac{\beta E \mathrm{L} \eta^2}{2} \sum_{e=0}^{E-1} \sum_{b=0}^{B-1} \alpha_b \sum_{u \in \bar{\mathcal{U}}_b^{k,e}} \alpha_u \sum_{l=0}^{\mathrm{L}_u^{k,e} - 1} \left\Vert  \nabla f_u (\mathbf{w}_u^{k, e, l}) \right\Vert^2 + \nonumber\\
		&\bquad\bquad \beta \eta^2 \sum_{e=0}^{E-1} \sum_{b=0}^{B-1} \left(\alpha_b\right)^2 \sum_{u \in \bar{\mathcal{U}}_b^{k,e}}  \left(\alpha_u \right)^2 \sum_{l=0}^{\mathrm{L}_u^{k,e} - 1} \sigma^2 \nonumber\\
		&= \beta \eta^2 \sigma^2 \sum_{e=0}^{E-1} \sum_{b=0}^{B-1} \left(\alpha_b\right)^2 \sum_{u \in \bar{\mathcal{U}}_b^{k,e}}  \left(\alpha_u \right)^2 \mathrm{L}_u^{k,e} + \beta E \eta^2 \sum_{e=0}^{E-1} \sum_{b=0}^{B-1} \alpha_b\sum_{u \in \bar{\mathcal{U}}_b^{k,e}} \alpha_u \mathrm{L}_u^{k,e} \sum_{l=0}^{\mathrm{L}_u^{k,e} - 1} \bigg[\frac{1} {\mathrm{p}_{u,\mathrm{sc}}^{k,e}} - 1 \bigg] \mathbb{E} \left[\left\Vert g_u (\mathbf{w}_u^{k, e, l}) \right\Vert^2 \right] + \nonumber\\
		& \bquad\bquad \frac{\beta E \mathrm{L} \eta^2}{2} \sum_{e=0}^{E-1} \sum_{b=0}^{B-1} \alpha_b \sum_{u \in \bar{\mathcal{U}}_b^{k,e}} \alpha_u \sum_{l=0}^{\mathrm{L}_u^{k,e} - 1} \left\Vert  \nabla f_u (\mathbf{w}_u^{k, e, l}) \right\Vert^2,  
	\end{align}
	where $(a)$ comes from the definition of variance.
	$(b)$ stems from the fact that $\Vert \sum_{j=1}^J \mathbf{x}_j \Vert^2 \leq J \sum_{j=1}^J \Vert \mathbf{x}_j \Vert^2$, and from the convexity of $\Vert \cdot \Vert$ and Jensen inequality $\Vert \sum_{j=1}^J \alpha_j \mathbf{x}_j \Vert^2 \leq \sum_{j=1}^J \alpha_j \Vert \mathbf{x}_j \Vert^2$.
	Besides, the notation $\pm~ \mathbf{x}$ is used to indicate that $(+\mathbf{x} - \mathbf{x})$ for brevity. 
	In $(c)$, we use the fact that $1 \leq \mathrm{L}_u^{k,e} \leq \mathrm{L}$ in the last term.
	Furthermore, we get $(d)$ using the fact that $\mathrm{1}_{u,\mathrm{sc}}^{k,e}$ and $g_u(\mathbf{w}_{u}^{k,e,l})$ are independent.
	$(e)$ is true since $\mathbb{E}[T_1] = 0$ from the unbiased mini-batch gradient assumption, i.e., $\mathbb{E} [g_u(\mathbf{w}_u^{k,e,l})] = \nabla f_u (\mathbf{w}_u^{k,e,l})$.
	Moreover, in $(f)$, we follow similar steps as in $(d)$ and $(e)$.
	Finally, $(g)$ is obtained from the bounded variance assumption of the gradients.

	Now, plugging $\mathrm{T}_1$ and $\mathrm{T}_2$ into (\ref{convProofMainEq_1}), we get
	\begin{align}
		\label{convProofMainEq_11}
		&\mathbb{E} \left[ f(\mathbf{w}^{k+1}) \right] 
		\leq \mathbb{E} [f(\mathbf{w}^k)] + \frac{\eta}{2} \sum_{e=0}^{E-1} \sum_{b=0}^{B-1} \alpha_b\sum_{u \in \bar{\mathcal{U}}_b^{k,e}} \alpha_u \sum_{l=0}^{\mathrm{L}_{u}^{k,e}-1} \left\Vert \nabla f(\mathbf{w}^k) - \nabla f_u \big(\mathbf{w}_u^{k, e,l}\big) \right\Vert^2 - \frac{\eta}{2} \sum_{e=0}^{E-1} \sum_{b=0}^{B-1} \alpha_b\sum_{u \in \bar{\mathcal{U}}_b^{k,e}} \alpha_u \mathrm{L}_{u}^{k,e} \left\Vert \nabla f(\mathbf{w}^k)\right\Vert^2 - \nonumber\\
		&\frac{\eta}{2} \sum_{e=0}^{E-1} \sum_{b=0}^{B-1} \alpha_b\sum_{u \in \bar{\mathcal{U}}_b^{k,e}} \alpha_u \sum_{l=0}^{\mathrm{L}_u^{k,e} - 1} \left
		\Vert \nabla f_u \big(\mathbf{w}_u^{k, e,l}\big) \right\Vert^2 + \beta \eta^2 \sigma^2 \sum_{e=0}^{E-1} \sum_{b=0}^{B-1} \left(\alpha_b\right)^2 \sum_{u \in \bar{\mathcal{U}}_b^{k,e}}  \left(\alpha_u \right)^2 \mathrm{L}_u^{k,e} + \nonumber\\
		&\beta E \eta^2 \sum_{e=0}^{E-1} \sum_{b=0}^{B-1} \alpha_b\sum_{u \in \bar{\mathcal{U}}_b^{k,e}} \alpha_u \mathrm{L}_u^{k,e} \sum_{l=0}^{\mathrm{L}_u^{k,e} - 1} \bigg[\frac{1} {\mathrm{p}_{u,\mathrm{sc}}^{k,e}} - 1 \bigg] \mathbb{E} \left[\left\Vert g_u (\mathbf{w}_u^{k, e, l}) \right\Vert^2 \right] + \frac{\beta E \mathrm{L} \eta^2}{2} \sum_{e=0}^{E-1} \sum_{b=0}^{B-1} \alpha_b \sum_{u \in \bar{\mathcal{U}}_b^{k,e}} \alpha_u \sum_{l=0}^{\mathrm{L}_u^{k,e} - 1} \left\Vert  \nabla f_u (\mathbf{w}_u^{k, e, l}) \right\Vert^2 \nonumber\\
		&= \mathbb{E} [f(\mathbf{w}^k)] + \beta \eta^2 \sigma^2 \sum_{e=0}^{E-1} \sum_{b=0}^{B-1} \left(\alpha_b\right)^2 \sum_{u \in \bar{\mathcal{U}}_b^{k,e}}  \left(\alpha_u \right)^2 \mathrm{L}_u^{k,e} + \beta E \eta^2 \sum_{e=0}^{E-1} \sum_{b=0}^{B-1} \alpha_b\sum_{u \in \bar{\mathcal{U}}_b^{k,e}} \alpha_u \mathrm{L}_u^{k,e} \sum_{l=0}^{\mathrm{L}_u^{k,e} - 1} \bigg[\frac{1} {\mathrm{p}_{u,\mathrm{sc}}^{k,e}} - 1 \bigg] \mathbb{E} \left[\left\Vert g_u (\mathbf{w}_u^{k, e, l}) \right\Vert^2 \right] + \nonumber\\
		&\mquad \frac{\eta}{2} \sum_{e=0}^{E-1} \sum_{b=0}^{B-1} \alpha_b\sum_{u \in \bar{\mathcal{U}}_b^{k,e}} \alpha_u \sum_{l=0}^{\mathrm{L}_{u}^{k,e}-1} \left\Vert \nabla f(\mathbf{w}^k) - \nabla f_u \big(\mathbf{w}_u^{k, e,l}\big) \right\Vert^2 - \frac{\eta}{2} \sum_{e=0}^{E-1} \sum_{b=0}^{B-1} \alpha_b\sum_{u \in \bar{\mathcal{U}}_b^{k,e}} \alpha_u \mathrm{L}_{u}^{k,e} \left\Vert \nabla f(\mathbf{w}^k)\right\Vert^2 - \nonumber\\
		&\mquad \frac{\eta}{2} \left(1 - \beta \eta E \mathrm{L} \right) \sum_{e=0}^{E-1} \sum_{b=0}^{B-1} \alpha_b\sum_{u \in \bar{\mathcal{U}}_b^{k,e}} \alpha_u \sum_{l=0}^{\mathrm{L}_u^{k,e} - 1} \left
		\Vert \nabla f_u \big(\mathbf{w}_u^{k, e,l}\big) \right\Vert^2,
	\end{align}
	where the last term is non-negative when $\eta \leq \frac{1}{\beta E \mathrm{L}}$, as $\Vert \cdot \Vert \geq 0$.
	
	To that end, as we are after an upper-bound, we drop the last term using $\eta \leq \frac{1}{\beta E \mathrm{L}}$.
	Then we divide both sides by $\frac{\eta}{2}$, rearrange the terms and take expectations on both sides, which yields 
	\begin{align}
		\label{convProofMainEq_12}
		&\sum_{e=0}^{E-1} \sum_{b=0}^{B-1} \alpha_b\sum_{u \in \bar{\mathcal{U}}_b^{k,e}} \alpha_u \mathrm{L}_{u}^{k,e} \mathbb{E} \bigg[ \left\Vert \nabla f(\mathbf{w}^k)\right\Vert^2 \bigg] 
		\leq \frac{2\left( \mathbb{E} [f(\mathbf{w}^k)] - \mathbb{E} \left[ f(\mathbf{w}^{k+1}) \right]\right)}{\eta} + 2 \beta \eta \sigma^2 \sum_{e=0}^{E-1} \sum_{b=0}^{B-1} \left(\alpha_b\right)^2 \sum_{u \in \bar{\mathcal{U}}_b^{k,e}}  \left(\alpha_u \right)^2 \mathrm{L}_u^{k,e} + \nonumber\\
		& 2 \beta \eta E \sum_{e=0}^{E-1} \sum_{b=0}^{B-1} \alpha_b\sum_{u \in \bar{\mathcal{U}}_b^{k,e}} \alpha_u \mathrm{L}_u^{k,e} \sum_{l=0}^{\mathrm{L}_u^{k,e} - 1} \bigg[\frac{1} {\mathrm{p}_{u,\mathrm{sc}}^{k,e}} - 1 \bigg] \mathbb{E} \left[\left\Vert g_u (\mathbf{w}_u^{k, e, l}) \right\Vert^2 \right] + \underbrace{\sum_{e=0}^{E-1} \sum_{b=0}^{B-1} \alpha_b\sum_{u \in \bar{\mathcal{U}}_b^{k,e}} \alpha_u \sum_{l=0}^{\mathrm{L}_{u}^{k,e}-1} \mathbb{E} \bigg[\left\Vert \nabla f(\mathbf{w}^k) - \nabla f_u \big(\mathbf{w}_u^{k, e,l}\big) \right\Vert^2 \bigg]}_{\mathrm{T}_3}.
	\end{align}
	We now bound the $\mathrm{T}_3$ term as 
	\begin{align}
		\mathrm{T}_3 &
		\overset{(a)}{=} \sum_{e=0}^{E-1} \sum_{b=0}^{B-1} \alpha_b\sum_{u \in \bar{\mathcal{U}}_b^{k,e}} \alpha_u \sum_{l=0}^{\mathrm{L}_{u}^{k,e}-1} \mathbb{E} \left[\left\Vert \sum_{b'=0}^{B-1} \alpha_{b'} \sum_{u' \in \bar{\mathcal{U}}_{b'}^{k,e}} \alpha_{u'}  \nabla f_{u'}(\mathbf{w}^k) - \nabla f_u \big(\mathbf{w}_u^{k, e,l}\big) \right] \right\Vert^2  \nonumber\\
		&\overset{(b)}{=} \sum_{e=0}^{E-1} \sum_{b=0}^{B-1} \alpha_b\sum_{u \in \bar{\mathcal{U}}_b^{k,e}} \alpha_u \sum_{l=0}^{\mathrm{L}_{u}^{k,e}-1} \mathbb{E} \left[\left\Vert \sum_{b'=0}^{B-1} \alpha_{b'} \sum_{u' \in \bar{\mathcal{U}}_{b'}^{k,e}} \alpha_{u'} \left[ \nabla f_{u'}(\mathbf{w}^k) - \nabla f_u \big(\mathbf{w}_u^{k, e,l}\big) \right] \right\Vert^2 \right] \nonumber\\
		&\overset{(c)}{\leq} \sum_{e=0}^{E-1} \sum_{b=0}^{B-1} \alpha_b\sum_{u \in \bar{\mathcal{U}}_b^{k,e}} \alpha_u \sum_{l=0}^{\mathrm{L}_{u}^{k,e}-1} \mathbb{E} \left[\left\Vert \nabla f_{u}(\mathbf{w}^k) \pm \nabla f_{u}(\mathbf{w}_{b}^{k,e}) - \nabla f_u \big(\mathbf{w}_u^{k, e,l}\big) \right\Vert^2 \right] \nonumber\\
		&\overset{(d)}{\leq} 2 \sum_{e=0}^{E-1} \sum_{b=0}^{B-1} \alpha_b\sum_{u \in \bar{\mathcal{U}}_b^{k,e}} \alpha_u \sum_{l=0}^{\mathrm{L}_{u}^{k,e}-1} \mathbb{E} \left[\left\Vert \nabla f_{u}(\mathbf{w}^k) - \nabla f_{u}(\mathbf{w}_{b}^{k,e}) \right\Vert^2 \right] + 2 \sum_{e=0}^{E-1} \sum_{b=0}^{B-1} \alpha_b\sum_{u \in \bar{\mathcal{U}}_b^{k,e}} \alpha_u \sum_{l=0}^{\mathrm{L}_{u}^{k,e}-1} \mathbb{E} \left[ \left\Vert \nabla f_{u}(\mathbf{w}_{b}^{k,e}) - \nabla f_{u}(\mathbf{w}_{u}^{k,e,l}) \right\Vert^2 \right] \nonumber\\
		&\overset{(e)}{\leq} 2\beta^2 \sum_{e=0}^{E-1} \sum_{b=0}^{B-1} \alpha_b\sum_{u \in \bar{\mathcal{U}}_b^{k,e}} \alpha_u \sum_{l=0}^{\mathrm{L}_{u}^{k,e}-1} \mathbb{E} \left[ \left\Vert \mathbf{w}_{b}^{k,e} - \mathbf{w}_{u}^{k,e,l} \right\Vert^2 \right] + 2 \beta^2 \sum_{e=0}^{E-1} \sum_{b=0}^{B-1} \alpha_b \sum_{u \in \bar{\mathcal{U}}_b^{k,e}} \alpha_u \mathrm{L}_{u}^{k,e} \mathbb{E} \left[ \left\Vert \mathbf{w}^k - \mathbf{w}_{b}^{k,e} \right\Vert^2 \right] \nonumber\\
		&\overset{(f)}{\leq} 2\beta^2 \sum_{e=0}^{E-1} \sum_{b=0}^{B-1} \alpha_b\sum_{u \in \bar{\mathcal{U}}_b^{k,e}} \alpha_u \sum_{l=0}^{\mathrm{L}_u^{k,e} - 1} \mathbb{E} \left[ \left\Vert \mathbf{w}_{b}^{k,e} - \mathbf{w}_{u}^{k,e,l} \right\Vert^2 \right] + 2 \mathrm{L} \beta^2 \sum_{e=0}^{E-1} \sum_{b=0}^{B-1} \alpha_b \mathbb{E} \left[ \left\Vert \mathbf{w}^k - \mathbf{w}_{b}^{k,e} \right\Vert^2 \right],
	\end{align}
	where, in step $(a)$, we use the fact that $\nabla f(\mathbf{w}^k) = \sum_{b'=0}^{B-1} \alpha_{b'} \sum_{u' \in \bar{\mathcal{U}}_{b'}^{k,e}} \alpha_{u'}  \nabla f_{u'}(\mathbf{w}^k)$ by definition of the global loss function in (\ref{centralLossRawHFL_apndx}).
	Besides, we write $(b)$ because $\sum_{u \in \bar{\mathcal{U}}_b^{k,e}} \alpha_u = 1$ and $\sum_{b=0}^{B-1} \alpha_b=1$.
	Furthermore, $(c)$ stems from Jensen inequality and $(d)$ appears from the fact $\Vert \sum_{j=1}^J \mathbf{x}_j \Vert^2 \leq J \sum_{j=1}^J \Vert \mathbf{x}_j \Vert^2$.
	Moreover, we get to $(e)$ by using $\beta$-smoothness assumption.
	Finally, we write $(f)$ since $1 \leq \mathrm{L}_u^{k,e} \leq \mathrm{L}$ and $\Vert \cdot \Vert \geq 0$.

	Now, plugging $\mathrm{T}_3$ into (\ref{convProofMainEq_12}) and averaging over $K$ global rounds, we get
	\begin{align} 
		\label{convProofMainEq_13}
		&\frac{1}{K} \sum_{k=0}^{K-1} \mathbb{E} \left[\left\Vert \nabla f(\mathbf{w}^k)\right\Vert^2 \right]
		\leq \frac{2}{\eta K} \sum_{k=0}^{K-1} \left[ \frac{\mathbb{E} [f(\mathbf{w}^k)] - \mathbb{E} \left[ f(\mathbf{w}^{k+1}) \right] }{\Omega^k} \right] + \frac{2 \beta \eta \sigma^2}{K} \sum_{k=0}^{K-1} \left[\frac{\sum_{e=0}^{E-1} \sum_{b=0}^{B-1} \left(\alpha_b\right)^2 \sum_{u \in \bar{\mathcal{U}}_b^{k,e}} \left(\alpha_u \right)^2 \mathrm{L}_u^{k,e} }{\Omega^k} \right] + \nonumber\\
		&\squad \frac{2 \beta \eta E}{K} \sum_{k=0}^K \frac{1}{\Omega^k} \sum_{e=0}^{E-1} \sum_{b=0}^{B-1} \alpha_b\sum_{u \in \bar{\mathcal{U}}_b^{k,e}} \alpha_u \mathrm{L}_u^{k,e} \sum_{l=0}^{\mathrm{L}_u^{k,e} - 1} \bigg[\frac{1} {\mathrm{p}_{u,\mathrm{sc}}^{k,e}} - 1 \bigg] \mathbb{E} \left[\left\Vert g_u (\mathbf{w}_u^{k, e, l}) \right\Vert^2 \right] + \nonumber\\
		&\squad \frac{2 \mathrm{L} \beta^2}{K} \sum_{k=0}^{K-1} \frac{1}{\Omega^k} \sum_{e=0}^{E-1} \sum_{b=0}^{B-1} \alpha_b \mathbb{E} \left[ \left\Vert \mathbf{w}^k - \mathbf{w}_{b}^{k,e} \right\Vert^2 \right] + \frac{2 \beta^2}{K} \sum_{k=0}^{K-1} \frac{1}{\Omega^k} \sum_{e=0}^{E-1} \sum_{b=0}^{B-1} \alpha_b \rs \rs \sum_{u \in \bar{\mathcal{U}}_b^{k,e}} \rs \rs \alpha_u \rs \sum_{l=0}^{\mathrm{L}_u^{k,e} - 1} \mathbb{E} \left[ \left\Vert \mathbf{w}_{b}^{k,e} - \mathbf{w}_{u}^{k,e,l} \right\Vert^2 \right].
	\end{align}
	where $\Omega^k \coloneqq \sum_{e=0}^{E-1} \sum_{b=0}^{B-1} \alpha_b \sum_{u \in \bar{\mathcal{U}}_b^{k,e}} \alpha_u \mathrm{L}_{u}^{k,e}$.

	\begin{Lemma}
		\label{lem_div_BS_UE_models}
		When learning rate $\eta < \mathrm{min} \left\{ \frac{1}{\beta E \mathrm{L}}, \frac{1}{3\sqrt{2} \beta \mathrm{L}}\right\}$, the mean square error between the local model and the ES model is upper-bounded as
		\begin{align}
			&\frac{1}{K} \sum_{k=0}^{K-1} \frac{1}{\Omega^k} \sum_{e=0}^{E-1} \sum_{b=0}^{B-1} \alpha_b\sum_{u \in \bar{\mathcal{U}}_b^{k,e}} \alpha_u \sum_{l=0}^{\mathrm{L}_u^{k,e} - 1} \mathbb{E} \left[ \left\Vert \mathbf{w}_{b}^{k,e} - \mathbf{w}_{u}^{k,e,l} \right\Vert^2 \right] 
			\leq \frac{3 E \mathrm{L}^2 \eta^2 \sigma^2}{K} \sum_{k=0}^{K-1} \frac{1}{\Omega^k} + \frac{9 E \epsilon_0^2 \eta^2 \mathrm{L}^3}{K} \sum_{k=0}^{K-1} \frac{1}{\Omega^k} + \nonumber\\
			&\bquad \frac{3 \mathrm{L}^2 \eta^2}{K} \sum_{k=0}^{K-1} \frac{1}{\Omega^k} \sum_{e=0}^{E-1} \sum_{b=0}^{B-1} \rs \alpha_b \rs \rs \sum_{u \in \bar{\mathcal{U}}_b^{k,e}} \rs\rs  \alpha_u \rs \bigg[\frac{1}{\mathrm{p}_{u,\mathrm{sc}}^{k,e}} - 1 \bigg]\sum_{l=0}^{\mathrm{L}_u^{k,e} - 1} \mathbb{E} \left[ \left\Vert g_u (\mathbf{w}_u^{k, e, l})  \right\Vert^2 \right].
		\end{align}
	\end{Lemma}
	\begin{proof}
		The proof is left to Appendix \ref{lem_div_BS_UE_models_Proof}.
	\end{proof}

	\begin{Lemma}
		\label{lem_div_central_BS_models}
		When learning rate $\eta < \mathrm{min}\left\{\frac{1}{2\sqrt{5} \beta \mathrm{L}}, \frac{1}{\beta E \mathrm{L}} \right\}$, the mean square error between the edge model at the ES and the global model at the central server is upper-bounded as
		\begin{align}
			&\frac{1}{K} \sum_{k=0}^{K-1} \frac{1}{\Omega^k} \sum_{e=0}^{E-1} \sum_{b=0}^{B-1} \alpha_b \mathbb{E} \left[ \left\Vert \mathbf{w}^k - \mathbf{w}_{b}^{k,e} \right\Vert^2 \right] \nonumber\\
			&\leq \frac{4 E \eta^2 \sigma^2} {K} \sum_{k=0}^{K-1} \frac{1}{\Omega^k} \sum_{e=0}^{E-1} \sum_{b=0}^{B-1} \alpha_b \rs\rs \sum_{u \in \bar{\mathcal{U}}_b^{k,e}} \rs \left(\alpha_u\right)^2 \mathrm{L}_u^{k,e} + \frac{60 \beta^2 \sigma^2 E^3 \mathrm{L}^3 \eta^4}{K} \sum_{k=0}^{K-1} \frac{1}{\Omega^k} + \frac{180 \beta^2 \epsilon_0^2 E^3 \mathrm{L}^4 \eta^4}{K} \sum_{k=0}^{K-1} \frac{1}{\Omega^k} + \frac{10 \epsilon_1^2 \eta^2 E^3}{K} \sum_{k=0}^{K-1} \frac{1}{\Omega^k} + \nonumber\\
			&\squad \frac{4 E \eta^2}{K} \sum_{k=0}^{K-1} \frac{1}{\Omega^k} \sum_{e=0}^{E-1} \sum_{b=0}^{B-1} \rs \alpha_b \rs\rs \sum_{u \in \bar{\mathcal{U}}_b^{k,e}} \alpha_u \left(\alpha_u + 15 E \beta^2 \eta^2 \mathrm{L}^3 \right) \rs \bigg[\frac{1}{\mathrm{p}_{u,\mathrm{sc}}^{k,e}} - 1 \bigg] \sum_{l=0}^{\mathrm{L}_u^{k,e} - 1} \mathbb{E} \left[\left\Vert g_u (\mathbf{w}_u^{k, e, l}) \right\Vert^2\right].
		\end{align}
	\end{Lemma}
	\begin{proof}
		The proof is left to Appendix \ref{lem_div_central_BS_models_Proof}.
	\end{proof}
	
	\noindent
	\textbf{Final Results:}
	Using Lemma \ref{lem_div_BS_UE_models} and Lemma \ref{lem_div_central_BS_models}, we write
	\begin{align}
		&\frac{1}{K} \sum_{k=0}^{K-1} \mathbb{E} \left[\left\Vert \nabla f(\mathbf{w}^k)\right\Vert^2 \right]
		\leq \frac{2}{\eta K} \sum_{k=0}^{K-1} \left[ \frac{\mathbb{E} [f(\mathbf{w}^k)] - \mathbb{E} \left[ f(\mathbf{w}^{k+1}) \right] }{\Omega^k} \right] + \frac{2 \beta \eta \sigma^2}{K} \sum_{k=0}^{K-1} \left[\frac{\sum_{e=0}^{E-1} \sum_{b=0}^{B-1} \left(\alpha_b\right)^2 \sum_{u \in \bar{\mathcal{U}}_b^{k,e}} \left(\alpha_u \right)^2 \mathrm{L}_u^{k,e} }{\Omega^k} \right] + \nonumber\\
		&\frac{2 \beta \eta E}{K} \sum_{k=0}^K \frac{1}{\Omega^k} \sum_{e=0}^{E-1} \sum_{b=0}^{B-1} \alpha_b\sum_{u \in \bar{\mathcal{U}}_b^{k,e}} \alpha_u \mathrm{L}_u^{k,e} \sum_{l=0}^{\mathrm{L}_u^{k,e} - 1} \bigg[\frac{1} {\mathrm{p}_{u,\mathrm{sc}}^{k,e}} - 1 \bigg] \mathbb{E} \left[\left\Vert g_u (\mathbf{w}_u^{k, e, l}) \right\Vert^2 \right] + \nonumber\\
		& \frac{8 E \mathrm{L} \beta^2 \eta^2 \sigma^2} {K} \sum_{k=0}^{K-1} \frac{1}{\Omega^k} \sum_{e=0}^{E-1} \sum_{b=0}^{B-1} \alpha_b \rs\rs \sum_{u \in \bar{\mathcal{U}}_b^{k,e}} \rs \left(\alpha_u\right)^2 \mathrm{L}_u^{k,e} + \frac{120 \sigma^2 E^3 \beta^4 \eta^4 \mathrm{L}^4}{K} \sum_{k=0}^{K-1} \frac{1}{\Omega^k} +  \frac{360 \epsilon_0^2 E^3 \beta^4 \eta^4 \mathrm{L}^5}{K} \sum_{k=0}^{K-1} \frac{1}{\Omega^k} + \frac{20 \mathrm{L} \beta^2 \epsilon_1^2 \eta^2 E^3}{K} \sum_{k=0}^{K-1} \frac{1}{\Omega^k} +  \nonumber\\
		&\frac{8 E \mathrm{L} \beta^2 \eta^2}{K} \sum_{k=0}^{K-1} \frac{1}{\Omega^k} \sum_{e=0}^{E-1} \sum_{b=0}^{B-1} \rs \alpha_b \rs\rs \sum_{u \in \bar{\mathcal{U}}_b^{k,e}} \alpha_u \left(\alpha_u + 15 E \beta^2 \eta^2 \mathrm{L}^3 \right) \rs \bigg[\frac{1}{\mathrm{p}_{u,\mathrm{sc}}^{k,e}} - 1 \bigg] \sum_{l=0}^{\mathrm{L}_u^{k,e} - 1} \mathbb{E} \left[\left\Vert g_u (\mathbf{w}_u^{k, e, l}) \right\Vert^2\right] + \frac{6 E \beta^2 \eta^2 \sigma^2 \mathrm{L}^2}{K} \sum_{k=0}^{K-1} \frac{1}{\Omega^k} + \nonumber\\
		& \frac{18 E \beta^2 \epsilon_0^2 \eta^2 \mathrm{L}^3}{K} \sum_{k=0}^{K-1} \frac{1}{\Omega^k} + \frac{6 \beta^2 \mathrm{L}^2 \eta^2}{K} \sum_{k=0}^{K-1} \frac{1}{\Omega^k} \sum_{e=0}^{E-1} \sum_{b=0}^{B-1} \rs \alpha_b \rs \rs \sum_{u \in \bar{\mathcal{U}}_b^{k,e}} \rs\rs  \alpha_u  \bigg[\frac{1}{\mathrm{p}_{u,\mathrm{sc}}^{k,e}} - 1 \bigg]\sum_{l=0}^{\mathrm{L}_u^{k,e} - 1} \mathbb{E} \left[ \left\Vert g_u (\mathbf{w}_u^{k, e, l}) \right\Vert^2 \right] \nonumber\\
		&\leq \frac{2}{\eta K} \sum_{k=0}^{K-1} \frac{1}{\Omega^k} \left(\mathbb{E} [f(\mathbf{w}^k)] - \mathbb{E} [f(\mathbf{w}^{k+1})] \right) + \frac{2 \beta \eta \mathrm{L} \sigma^2}{K} \sum_{k=0}^{K-1} \frac{1}{\Omega^k}\Bigg[ 60 \beta^3 \eta^3 E^3 \mathrm{L}^3 + 3 \beta \eta E \mathrm{L} +  \sum_{e=0}^{E-1} \sum_{b=0}^{B-1} \alpha_b  \left(\alpha_b + 4 E \mathrm{L} \beta \eta \right) \rs\rs \sum_{u \in \bar{\mathcal{U}}_b^{k,e}} \rs\rs \left(\alpha_u\right)^2 \Bigg] + \nonumber\\
		&\squad \frac{18 E \beta^2 \epsilon_0^2 \eta^2 \mathrm{L}^3}{K} \sum_{k=0}^{K-1} \frac{1}{\Omega^k} \big[1 + 20 \beta^2 \eta^2 E^2 \mathrm{L}^2 \big] + \frac{20 \mathrm{L} \beta^2 \epsilon_1^2 \eta^2 E^3}{K} \sum_{k=0}^{K-1} \frac{1}{\Omega^k} + \frac{2 \beta \eta \mathrm{L}}{K} \sum_{k=0}^{K-1} \frac{1}{\Omega^k} \sum_{e=0}^{E-1} \sum_{b=0}^{B-1} \alpha_b\sum_{u \in \bar{\mathcal{U}}_b^{k,e}} \alpha_u \big[ E + 3 \beta \eta \mathrm{L} + \nonumber\\
		&\squad  4 \beta \eta E \left(\alpha_u + 15 E \beta^2 \eta^2 \mathrm{L}^3 \right) \big] \sum_{l=0}^{\mathrm{L}_u^{k,e} - 1} \bigg[\frac{1} {\mathrm{p}_{u,\mathrm{sc}}^{k,e}} - 1 \bigg] \mathbb{E} \left[\left\Vert g_u (\mathbf{w}_u^{k, e, l}) \right\Vert^2 \right] \nonumber\\
		&=\frac{2}{\eta K} \sum_{k=0}^{K-1} \frac{1}{\Omega^k} \left(\mathbb{E} [f(\mathbf{w}^k)] - \mathbb{E} [f(\mathbf{w}^{k+1})] \right) + \frac{2 \beta \eta \mathrm{L} \sigma^2}{K} \sum_{k=0}^{K-1} \frac{\mathrm{N}_1^k}{\Omega^k} + \frac{18 E \beta^2 \epsilon_0^2 \eta^2 \mathrm{L}^3}{K} \sum_{k=0}^{K-1} \frac{\mathrm{N}_2}{\Omega^k} + \frac{20 \mathrm{L} \beta^2 \epsilon_1^2 \eta^2 E^3}{K} \sum_{k=0}^{K-1} \frac{1}{\Omega^k} + \nonumber\\
		&\squad \frac{2 \beta \eta \mathrm{L}}{K} \sum_{k=0}^{K-1} \frac{1}{\Omega^k} \sum_{e=0}^{E-1} \sum_{b=0}^{B-1} \alpha_b \rs \rs \sum_{u \in \bar{\mathcal{U}}_b^{k,e}} \rs\rs \alpha_u \mathrm{N}_u  \bigg[\frac{1} {\mathrm{p}_{u,\mathrm{sc}}^{k,e}} - 1 \bigg] \sum_{l=0}^{\mathrm{L}_u^{k,e} - 1} \mathbb{E} \left[\left\Vert g_u (\mathbf{w}_u^{k, e, l}) \right\Vert^2 \right],
	\end{align}
	where $\mathrm{N}_1^k \coloneqq  60 \beta^3 \eta^3 E^3 \mathrm{L}^3 + 3 \beta \eta E \mathrm{L} +  \sum_{e=0}^{E-1} \sum_{b=0}^{B-1} \alpha_b  \left(\alpha_b + 4 E \mathrm{L} \beta \eta \right) \sum_{u \in \bar{\mathcal{U}}_b^{k,e}} \left(\alpha_u\right)^2$, $\mathrm{N}_2 \coloneqq \big[1 + 20 \beta^2 \eta^2 E^2 \mathrm{L}^2 \big]$ and $\mathrm{N}_{u} \coloneqq  E + 3 \beta \eta \mathrm{L} + 4 \beta \eta E \left(\alpha_u + 15 E \beta^2 \eta^2 \mathrm{L}^3 \right)$.

	This concludes the proof of Theorem \ref{theorem1_apndx}.
	
\end{proof}

\section{Missing Proof of Lemma {\ref{lem_div_BS_UE_models}} and Lemma \ref{lem_div_central_BS_models}}
\label{missingProofOfLemmas}

\subsection{Proof of Lemma {\ref{lem_div_BS_UE_models}}}
\label{lem_div_BS_UE_models_Proof}
\begin{align}
	\label{proofLocal_BS_Div}
	&\frac{1}{K} \sum_{k=0}^{K-1} \frac{1}{\Omega^k} \sum_{e=0}^{E-1} \sum_{b=0}^{B-1} \alpha_b\sum_{u \in \bar{\mathcal{U}}_b^{k,e}} \alpha_u \sum_{l=0}^{\mathrm{L}_u^{k,e} - 1} \mathbb{E} \left[ \left\Vert \mathbf{w}_{b}^{k,e} - \mathbf{w}_{u}^{k,e,l} \right\Vert^2 \right] \nonumber\\
	&\overset{(a)}{=}\frac{\eta^2}{K} \sum_{k=0}^{K-1} \frac{1}{\Omega^k} \sum_{e=0}^{E-1} \sum_{b=0}^{B-1} \alpha_b\sum_{u \in \bar{\mathcal{U}}_b^{k,e}} \alpha_u \sum_{l=0}^{\mathrm{L}_u^{k,e} - 1} \mathbb{E} \left[ \left\Vert  \sum_{l'=0}^{l-1} g (\mathbf{w}_{u}^{k, e, l'}) - \sum_{u' \in \mathcal{U}_b} \alpha_{u'} \frac{\mathrm{1}_{u',\mathrm{sc}}^{k,e}}{\mathrm{p}_{u',\mathrm{sc}}^{k,e}} \sum_{l'=0}^{l-1} g (\mathbf{w}_{u'}^{k, e, l'}) \right\Vert^2 \right] \nonumber\\
	&=\frac{\eta^2}{K} \sum_{k=0}^{K-1} \frac{1}{\Omega^k} \sum_{e=0}^{E-1} \sum_{b=0}^{B-1} \alpha_b\sum_{u \in \bar{\mathcal{U}}_b^{k,e}} \alpha_u \sum_{l=0}^{\mathrm{L}_u^{k,e} - 1} \mathbb{E} \Bigg[ \Bigg\Vert \sum_{l'=0}^{l-1} g (\mathbf{w}_u^{k,e,l'}) \pm \sum_{l'=0}^{l-1} \nabla f_u (\mathbf{w}_u^{k,e,l'}) -  \sum_{u' \in \mathcal{U}_b} \alpha_{u'} \frac{\mathrm{1}_{u',\mathrm{sc}}^{k,e}}{\mathrm{p}_{u',\mathrm{sc}}^{k,e}} \sum_{l'=0}^{l-1} g (\mathbf{w}_{u'}^{k,e,l'}) \nonumber\\
	&\mquad \pm \sum_{u' \in \mathcal{U}_b} \alpha_{u'} \sum_{l'=0}^{l-1} \mathrm{g} (\mathbf{w}_{u'}^{k,e,l}) \pm \sum_{u' \in \mathcal{U}_b} \alpha_{u'} \sum_{l'=0}^{l-1} \nabla f_u (\mathbf{w}_{u'}^{k,e,l'}) \Bigg\Vert^2 \Bigg] \nonumber\\
	&\leq \frac{3\eta^2}{K} \sum_{k=0}^{K-1} \frac{1}{\Omega^k} \sum_{e=0}^{E-1} \sum_{b=0}^{B-1} \alpha_b\sum_{u \in \bar{\mathcal{U}}_b^{k,e}} \alpha_u \sum_{l=0}^{\mathrm{L}_u^{k,e} - 1} \mathbb{E} \Bigg[ \Bigg\Vert \sum_{l'=0}^{l-1} \left[ g (\mathbf{w}_u^{k,e,l'}) -  \nabla f_u (\mathbf{w}_u^{k,e,l'}) \right] - \nonumber\\
	&\mquad \sum_{u' \in \mathcal{U}_b} \alpha_{u'} \sum_{l'=0}^{l-1} \left[ \mathrm{g} (\mathbf{w}_{u'}^{k,e,l'}) - \nabla f_u (\mathbf{w}_{u'}^{k,e,l'}) \right] \Bigg\Vert^2 \Bigg] + \nonumber\\
	&\mquad \frac{3\eta^2}{K} \sum_{k=0}^{K-1} \frac{1}{\Omega^k} \sum_{e=0}^{E-1} \sum_{b=0}^{B-1} \alpha_b\sum_{u \in \bar{\mathcal{U}}_b^{k,e}} \alpha_u \sum_{l=0}^{\mathrm{L}_u^{k,e} - 1} \mathbb{E} \Bigg[ \Bigg\Vert \sum_{l'=0}^{l-1} \nabla f_u (\mathbf{w}_u^{k,e,l'}) - \sum_{u' \in \mathcal{U}_b} \alpha_{u'} \sum_{l'=0}^{l-1} \nabla f_{u'} (\mathbf{w}_{u'}^{k,e,l'}) \Bigg\Vert^2 \Bigg] + \nonumber\\
	&\mquad \frac{3\eta^2}{K} \sum_{k=0}^{K-1} \frac{1}{\Omega^k} \sum_{e=0}^{E-1} \sum_{b=0}^{B-1} \alpha_b\sum_{u \in \bar{\mathcal{U}}_b^{k,e}} \alpha_u \sum_{l=0}^{\mathrm{L}_u^{k,e} - 1} \mathbb{E} \Bigg[ \Bigg\Vert \sum_{u' \in \mathcal{U}_b} \alpha_{u'}  \sum_{l'=0}^{l-1} g (\mathbf{w}_{u'}^{k,e,l'}) \left[ 1 - \frac{\mathrm{1}_{u',\mathrm{sc}}^{k,e}}{\mathrm{p}_{u',\mathrm{sc}}^{k,e}} \right] \Bigg\Vert^2 \Bigg], 
\end{align}
where we get $(a)$ due to the fact that the client received the same model from the \ac{es} during the synchronization time.
Besides, the change in the \ac{es}' model is captured for all $0 \leq l'\leq l-1$ \ac{sgd} rounds, and the aggregation rule follows (\ref{edgeUpdateRule_apndx}) to indicate whether these gradients are received at the \ac{es}.

The first term of (\ref{proofLocal_BS_Div}) is upper bounded as 
\begin{align}
	\label{proofLocal_BS_Div_Term1}
	&\frac{3\eta^2}{K} \sum_{k=0}^{K-1} \frac{1}{\Omega^k} \sum_{e=0}^{E-1} \sum_{b=0}^{B-1} \alpha_b\sum_{u \in \bar{\mathcal{U}}_b^{k,e}} \alpha_u \sum_{l=0}^{\mathrm{L}_u^{k,e} - 1} \mathbb{E} \Bigg[ \Bigg\Vert \sum_{l'=0}^{l-1} \bigg[\left( g (\mathbf{w}_u^{k,e,l'}) - \nabla f_u (\mathbf{w}_u^{k,e,l'}) \right)  -  \sum_{u' \in \bar{\mathcal{U}}_b^{k,e}} \alpha_{u'} \left( g (\mathbf{w}_{u'}^{k,e,l'}) - \nabla f_u (\mathbf{w}_{u'}^{k,e,l'}) \right) \bigg] \Bigg\Vert^2 \Bigg] \nonumber\\
	&= \frac{3\eta^2}{K} \sum_{k=0}^{K-1} \frac{1}{\Omega^k} \sum_{e=0}^{E-1} \sum_{b=0}^{B-1} \alpha_b\sum_{u \in \bar{\mathcal{U}}_b^{k,e}} \alpha_u \sum_{l=0}^{\mathrm{L}_u^{k,e} - 1} \mathbb{E} \Bigg[ \sum_{l'=0}^{l-1} \Bigg\Vert \left( g (\mathbf{w}_u^{k,e,l'}) - \nabla f_u (\mathbf{w}_u^{k,e,l'}) \right) -  \sum_{u' \in \bar{\mathcal{U}}_b^{k,e}} \alpha_{u'} \left( g (\mathbf{w}_{u'}^{k,e,l'}) - \nabla f_u (\mathbf{w}_{u'}^{k,e,l'}) \right) \Bigg\Vert^2 + \nonumber\\
	&\sum_{l'=0}^{l-1} \bigg\{\left( g (\mathbf{w}_u^{k,e,l'}) - \nabla f_u (\mathbf{w}_u^{k,e,l'}) \right) -  \sum_{u' \in \bar{\mathcal{U}}_b^{k,e}} \alpha_{u'} \left( g (\mathbf{w}_{u'}^{k,e,l'}) - \nabla f_u (\mathbf{w}_{u'}^{k,e,l'}) \right) \bigg\} \times \sum_{l''=0, l' \neq l''}^{l-1} \bigg\{\left( g (\mathbf{w}_u^{k,e,l''}) - \nabla f_u (\mathbf{w}_u^{k,e,l''}) \right) - \nonumber\\
	&\bquad \bquad \bquad \qquad \sum_{u' \in \bar{\mathcal{U}}_b^{k,e}} \alpha_{u'} \left( g (\mathbf{w}_{u'}^{k,e,l''}) - \nabla f_u (\mathbf{w}_{u'}^{k,e,l''}) \right) \bigg\} \Bigg] \nonumber\\
	&\overset{(a)}{=} \frac{3\eta^2}{K} \sum_{k=0}^{K-1} \frac{1}{\Omega^k} \sum_{e=0}^{E-1} \sum_{b=0}^{B-1} \alpha_b\sum_{u \in \bar{\mathcal{U}}_b^{k,e}} \alpha_u \sum_{l=0}^{\mathrm{L}_u^{k,e} - 1} \sum_{l'=0}^{l-1} \mathbb{E} \Bigg[ \Bigg\Vert \left( g (\mathbf{w}_u^{k,e,l'}) - \nabla f_u (\mathbf{w}_u^{k,e,l'}) \right)  -  \sum_{u' \in \bar{\mathcal{U}}_b^{k,e}} \alpha_{u'} \left( g (\mathbf{w}_{u'}^{k,e,l'}) - \nabla f_u (\mathbf{w}_{u'}^{k,e,l'}) \right) \Bigg\Vert^2 \Bigg] \nonumber\\
	&\overset{(b)}{\leq} \frac{3 \mathrm{L} \eta^2}{K} \sum_{k=0}^{K-1} \frac{1}{\Omega^k} \sum_{e=0}^{E-1} \sum_{b=0}^{B-1} \alpha_b\sum_{u \in \bar{\mathcal{U}}_b^{k,e}} \alpha_u  \sum_{l=0}^{\mathrm{L}_u^{k,e} - 1} \mathbb{E} \bigg[ \bigg\Vert \left( g_u (\mathbf{w}_u^{k, e, l}) - \nabla f_u (\mathbf{w}_u^{k,e,l}) \right)  - \sum_{u' \in \bar{\mathcal{U}}_b^{k,e}} \alpha_{u'} \left( g (\mathbf{w}_{u'}^{k,e,l}) - \nabla f_u (\mathbf{w}_{u'}^{k,e,l}) \right) \bigg\Vert^2 \bigg] \nonumber\\
	&\overset{(c)}{=} \frac{3 \mathrm{L} \eta^2}{K} \sum_{k=0}^{K-1} \frac{1}{\Omega^k} \sum_{e=0}^{E-1} \sum_{b=0}^{B-1} \alpha_b \sum_{l=0}^{\mathrm{L}_u^{k,e} - 1} \sum_{u \in \bar{\mathcal{U}}_b^{k,e}} \alpha_u \mathbb{E} \left[ \left\Vert g_u (\mathbf{w}_u^{k, e, l}) - \nabla f_u (\mathbf{w}_u^{k,e,l}) \right\Vert^2 \right]  - \nonumber\\
	&\bquad \mquad \frac{3 \mathrm{L} \eta^2}{K} \sum_{k=0}^{K-1} \frac{1}{\Omega^k} \sum_{e=0}^{E-1} \sum_{b=0}^{B-1} \alpha_b \sum_{l=0}^{\mathrm{L}_u^{k,e} - 1} \mathbb{E} \left[ \left\Vert \sum_{u' \in \bar{\mathcal{U}}_b^{k,e}} \alpha_{u'} \left( g (\mathbf{w}_{u'}^{k,e,l}) - \nabla f_u (\mathbf{w}_{u'}^{k,e,l}) \right) \right\Vert^2 \right] \nonumber\\
	&\overset{(d)}{\leq} \frac{3 \mathrm{L} \eta^2}{K} \sum_{k=0}^{K-1} \frac{1}{\Omega^k} \sum_{e=0}^{E-1} \sum_{b=0}^{B-1} \alpha_b \sum_{l=0}^{\mathrm{L}_u^{k,e} - 1} \sum_{u \in \bar{\mathcal{U}}_b^{k,e}} \alpha_u \sigma^2 - \frac{3 \mathrm{L} \eta^2}{K} \sum_{k=0}^{K-1} \frac{1}{\Omega^k} \sum_{e=0}^{E-1} \sum_{b=0}^{B-1} \alpha_b \sum_{l=0}^{\mathrm{L}_u^{k,e} - 1} \sum_{u' \in \bar{\mathcal{U}}_b^{k,e}} \left(\alpha_{u'} \right)^2 \mathbb{E} \left[ \left\Vert g (\mathbf{w}_{u'}^{k,e,l}) - \nabla f_u (\mathbf{w}_{u'}^{k,e,l}) \right\Vert^2 \right] \nonumber\\
	&\overset{(e)}{\leq} \frac{3 E \mathrm{L}^2 \eta^2 \sigma^2}{K} \sum_{k=0}^{K-1} \frac{1}{\Omega^k}  - \frac{3 \mathrm{L}^2 \eta^2 \sigma^2}{K} \sum_{k=0}^{K-1} \frac{1}{\Omega^k} \sum_{e=0}^{E-1} \sum_{b=0}^{B-1} \alpha_b
	\nonumber\\
	&\leq \frac{3 E \mathrm{L}^2 \eta^2 \sigma^2}{K} \sum_{k=0}^{K-1} \frac{1}{\Omega^k},
\end{align}
where $(a)$ stems from time independence of the mini-batch gradients and $\mathbb{E} [g_u(\mathbf{w}_u^{k,e,l})] = \nabla f_u (\mathbf{w}_u^{k,e,l})$, which makes the expectation of the cross product terms $0$. 
Besides, we used the fact that $l \leq \mathrm{L}_u^{k,e}$ and $1 \leq \mathrm{L}_u^{k,e} \leq \mathrm{L}$ in $(b)$.
In $(c)$, $\sum_{j=1}^J \alpha_j \Vert \mathbf{x}_j - \bar{\mathbf{x}} \Vert^2 = \sum_{j=1}^J \alpha_j \Vert \mathbf{x}_j \Vert^2 - \Vert \bar{\mathbf{x}} \Vert^2$, where $\bar{\mathbf{x}} = \sum_{j=1}^J \alpha_j \mathbf{x}_j$.
In $(d)$ and $(e)$, we use the bounded variance of the SGD assumptions and independence of SGDs in different epochs. 

For the second term of (\ref{proofLocal_BS_Div}), we write 
\begin{align}
	\label{proofLocal_BS_Div_Term2}
	&\frac{3\eta^2}{K} \sum_{k=0}^{K-1} \frac{1}{\Omega^k} \sum_{e=0}^{E-1} \sum_{b=0}^{B-1} \alpha_b\sum_{u \in \bar{\mathcal{U}}_b^{k,e}} \alpha_u \sum_{l=0}^{\mathrm{L}_u^{k,e} - 1} \mathbb{E} \Bigg[ \Bigg\Vert \sum_{l'=0}^{l-1} \left(\nabla f_u (\mathbf{w}_u^{k,e,l'}) - \sum_{u' \in \mathcal{U}_b} \alpha_{u'} \nabla f_{u'} (\mathbf{w}_{u'}^{k,e,l'}) \right) \Bigg\Vert^2 \Bigg] \nonumber\\
	&\leq \frac{3 \mathrm{L} \eta^2}{K} \sum_{k=0}^{K-1} \frac{1}{\Omega^k} \sum_{e=0}^{E-1} \sum_{b=0}^{B-1} \alpha_b\sum_{u \in \bar{\mathcal{U}}_b^{k,e}} \alpha_u \sum_{l=0}^{\mathrm{L}_u^{k,e} - 1} \sum_{l'=0}^{l-1} \mathbb{E} \left[ \left\Vert \nabla f_u (\mathbf{w}_u^{k,e,l'}) - \sum\nolimits_{u' \in \mathcal{U}_b} \alpha_{u'} \nabla f_{u'} (\mathbf{w}_{u'}^{k,e,l'}) \right\Vert^2 \right] \nonumber \\
	&\leq \frac{3 \mathrm{L}^2 \eta^2}{K} \sum_{k=0}^{K-1} \frac{1}{\Omega^k} \sum_{e=0}^{E-1} \sum_{b=0}^{B-1} \alpha_b\sum_{u \in \bar{\mathcal{U}}_b^{k,e}} \alpha_u \sum_{l=0}^{\mathrm{L}_u^{k,e} - 1}  \mathbb{E} \left[ \left\Vert \nabla f_u (\mathbf{w}_u^{k,e,l}) \pm \nabla f_u (\mathbf{w}_b^{k,e}) \pm \sum_{u' \in \mathcal{U}_b} \alpha_{u'} \nabla f_{u'} (\mathbf{w}_{b}^{k,e}) - \sum_{u' \in \mathcal{U}_b} \alpha_{u'} \nabla f_{u'} (\mathbf{w}_{u'}^{k,e,l}) \right\Vert^2 \right] \nonumber\\
	&\leq \frac{9 \mathrm{L}^2 \eta^2}{K} \sum_{k=0}^{K-1} \frac{1}{\Omega^k} \sum_{e=0}^{E-1} \sum_{b=0}^{B-1} \alpha_b\sum_{u \in \bar{\mathcal{U}}_b^{k,e}} \alpha_u \sum_{l=0}^{\mathrm{L}_u^{k,e} - 1}  \mathbb{E} \left[ \left\Vert \nabla f_u (\mathbf{w}_u^{k,e,l}) - \nabla f_u (\mathbf{w}_b^{k,e}) \right\Vert^2 \right] + \nonumber\\
	&\mquad \frac{9 \mathrm{L}^2 \eta^2}{K} \sum_{k=0}^{K-1} \frac{1}{\Omega^k} \sum_{e=0}^{E-1} \sum_{b=0}^{B-1} \alpha_b\sum_{u \in \bar{\mathcal{U}}_b^{k,e}} \alpha_u \sum_{l=0}^{\mathrm{L}_u^{k,e} - 1}  \mathbb{E} \Bigg[ \Bigg\Vert \nabla f_u (\mathbf{w}_b^{k,e}) - \sum_{u' \in \mathcal{U}_b} \alpha_{u'} \nabla f_{u'} (\mathbf{w}_{b}^{k,e}) \Bigg\Vert^2\Bigg] + \nonumber\\
	&\mquad \frac{9 \mathrm{L}^2 \eta^2}{K} \sum_{k=0}^{K-1} \frac{1}{\Omega^k} \sum_{e=0}^{E-1} \sum_{b=0}^{B-1} \alpha_b\sum_{u \in \bar{\mathcal{U}}_b^{k,e}} \alpha_u \sum_{l=0}^{\mathrm{L}_u^{k,e} - 1}  \mathbb{E} \Bigg[ \Bigg\Vert \sum_{u' \in \mathcal{U}_b} \alpha_{u'} \nabla f_{u'} (\mathbf{w}_{b}^{k,e}) - \sum_{u' \in \mathcal{U}_b} \alpha_{u'} \nabla f_{u'} (\mathbf{w}_{u'}^{k,e,l}) \Bigg\Vert^2 \Bigg] \nonumber\\
	&\leq \frac{9 \mathrm{L}^2 \beta^2\eta^2}{K} \sum_{k=0}^{K-1} \frac{1}{\Omega^k} \sum_{e=0}^{E-1} \sum_{b=0}^{B-1} \alpha_b\sum_{u \in \bar{\mathcal{U}}_b^{k,e}} \alpha_u \sum_{l=0}^{\mathrm{L}_u^{k,e} - 1}  \mathbb{E} \left[ \left\Vert \mathbf{w}_b^{k,e} - \mathbf{w}_u^{k,e,l} \right\Vert^2 \right] + \nonumber\\
	&\mquad \frac{9 \mathrm{L}^2 \eta^2}{K} \sum_{k=0}^{K-1} \frac{1}{\Omega^k} \sum_{e=0}^{E-1} \sum_{b=0}^{B-1} \alpha_b\sum_{u \in \bar{\mathcal{U}}_b^{k,e}} \alpha_u \sum_{l=0}^{\mathrm{L}_u^{k,e} - 1}  \mathbb{E} \left[ \left\Vert \nabla f_u (\mathbf{w}_b^{k,e}) - \nabla f_b (\mathbf{w}_{b}^{k,e}) \right\Vert^2\right] + \nonumber\\
	&\mquad \frac{9 \mathrm{L}^2 \beta^2\eta^2}{K} \sum_{k=0}^{K-1} \frac{1}{\Omega^k} \sum_{e=0}^{E-1} \sum_{b=0}^{B-1} \alpha_b\sum_{u \in \bar{\mathcal{U}}_b^{k,e}} \alpha_u \sum_{l=0}^{\mathrm{L}_u^{k,e} - 1}  \mathbb{E} \left[ \left\Vert \mathbf{w}_{b}^{k,e} - \mathbf{w}_{u}^{k,e,l} \right\Vert^2 \right] \nonumber\\
	&\leq \frac{18 \mathrm{L}^2 \beta^2\eta^2}{K} \sum_{k=0}^{K-1} \frac{1}{\Omega^k} \sum_{e=0}^{E-1} \sum_{b=0}^{B-1} \alpha_b\sum_{u \in \bar{\mathcal{U}}_b^{k,e}} \alpha_u \sum_{l=0}^{\mathrm{L}_u^{k,e} - 1}  \mathbb{E} \left[ \left\Vert \mathbf{w}_b^{k,e} - \mathbf{w}_u^{k,e,l} \right\Vert^2 \right] + \frac{9 \mathrm{L}^2 \eta^2}{K} \sum_{k=0}^{K-1} \frac{1}{\Omega^k} \sum_{e=0}^{E-1} \sum_{b=0}^{B-1} \alpha_b\sum_{u \in \bar{\mathcal{U}}_b^{k,e}} \alpha_u \sum_{l=0}^{\mathrm{L}_u^{k,e} - 1} \epsilon_0^2 \nonumber\\
	&\leq \frac{9 E \epsilon_0^2 \eta^2 \mathrm{L}^3}{K} \sum_{k=0}^{K-1} \frac{1}{\Omega^k} + \frac{18 \mathrm{L}^2 \beta^2\eta^2}{K} \sum_{k=0}^{K-1} \frac{1}{\Omega^k} \sum_{e=0}^{E-1} \sum_{b=0}^{B-1} \alpha_b\sum_{u \in \bar{\mathcal{U}}_b^{k,e}} \alpha_u \sum_{l=0}^{\mathrm{L}_u^{k,e} - 1}  \mathbb{E} \left[ \left\Vert \mathbf{w}_b^{k,e} - \mathbf{w}_u^{k,e,l} \right\Vert^2 \right] .  
\end{align}

For the third term of (\ref{proofLocal_BS_Div}), we write 
\begin{align}
	\label{proofLocal_BS_Div_Term3}
	&\frac{3 \eta^2}{K} \sum_{k=0}^{K-1} \frac{1}{\Omega^k} \sum_{e=0}^{E-1} \sum_{b=0}^{B-1} \alpha_b\sum_{u \in \bar{\mathcal{U}}_b^{k,e}} \alpha_u \sum_{l=0}^{\mathrm{L}_u^{k,e} - 1} \mathbb{E} \Bigg[ \Bigg\Vert \sum_{u' \in \mathcal{U}_b} \alpha_{u'}  \sum_{l'=0}^{l-1} g (\mathbf{w}_{u'}^{k,e,l'}) \left[ 1 - \frac{\mathrm{1}_{u',\mathrm{sc}}^{k,e}}{\mathrm{p}_{u',\mathrm{sc}}^{k,e}} \right] \Bigg\Vert^2 \Bigg] \nonumber\\
	&\leq \frac{3 \mathrm{L} \eta^2}{K} \sum_{k=0}^{K-1} \frac{1}{\Omega^k} \sum_{e=0}^{E-1} \sum_{b=0}^{B-1} \alpha_b\sum_{u \in \bar{\mathcal{U}}_b^{k,e}} \alpha_u \sum_{l=0}^{\mathrm{L}_u^{k,e} - 1} \sum_{l'=0}^{l-1} \mathbb{E} \Bigg[ \Bigg\Vert   g (\mathbf{w}_{u}^{k,e,l'}) \left[ 1 - \frac{\mathrm{1}_{u,\mathrm{sc}}^{k,e}}{\mathrm{p}_{u,\mathrm{sc}}^{k,e}} \right] \Bigg\Vert^2 \Bigg] \nonumber\\
	&\leq \frac{3 \mathrm{L}^2 \eta^2}{K} \sum_{k=0}^{K-1} \frac{1}{\Omega^k} \sum_{e=0}^{E-1} \sum_{b=0}^{B-1} \alpha_b\sum_{u \in \bar{\mathcal{U}}_b^{k,e}} \alpha_u \sum_{l=0}^{\mathrm{L}_u^{k,e} - 1} \mathbb{E} \Bigg[ \Bigg\Vert g_u (\mathbf{w}_u^{k, e, l}) \left[ 1 - \frac{\mathrm{1}_{u,\mathrm{sc}}^{k,e}}{\mathrm{p}_{u,\mathrm{sc}}^{k,e}} \right] \Bigg\Vert^2 \Bigg] \nonumber\\
	&= \frac{3 \mathrm{L}^2 \eta^2}{K} \sum_{k=0}^{K-1} \frac{1}{\Omega^k} \sum_{e=0}^{E-1} \sum_{b=0}^{B-1} \alpha_b\sum_{u \in \bar{\mathcal{U}}_b^{k,e}} \alpha_u \bigg[\frac{1}{\mathrm{p}_{u,\mathrm{sc}}^{k,e}} - 1 \bigg]\sum_{l=0}^{\mathrm{L}_u^{k,e} - 1} \mathbb{E} \left[ \left\Vert g_u (\mathbf{w}_u^{k, e, l})  \right\Vert^2 \right]
\end{align}

To this end, plugging (\ref{proofLocal_BS_Div_Term1}), (\ref{proofLocal_BS_Div_Term2}) and (\ref{proofLocal_BS_Div_Term3}) into (\ref{proofLocal_BS_Div}), and rearranging the terms, we get
\begin{align}
	\label{proofLocal_BS_Div_1}
	&\frac{1}{K} \sum_{k=0}^{K-1} \frac{1}{\Omega^k} \sum_{e=0}^{E-1} \sum_{b=0}^{B-1} \alpha_b\sum_{u \in \bar{\mathcal{U}}_b^{k,e}} \alpha_u \sum_{l=0}^{\mathrm{L}_u^{k,e} - 1} \mathbb{E} \left[ \left\Vert \mathbf{w}_{b}^{k,e} - \mathbf{w}_{u}^{k,e,l} \right\Vert^2 \right] \nonumber\\
	&\leq \frac{ \frac{3 E \mathrm{L}^2 \eta^2 \sigma^2}{K} \sum_{k=0}^{K-1} \frac{1}{\Omega^k} + \frac{9 E \epsilon_0^2 \eta^2 \mathrm{L}^3}{K} \sum_{k=0}^{K-1} \frac{1}{\Omega^k} + \frac{3 \mathrm{L}^2 \eta^2}{K} \sum_{k=0}^{K-1} \frac{1}{\Omega^k} \sum_{e=0}^{E-1} \sum_{b=0}^{B-1} \alpha_b\sum_{u \in \bar{\mathcal{U}}_b^{k,e}} \alpha_u \left[\frac{1}{\mathrm{p}_{u,\mathrm{sc}}^{k,e}} - 1 \right]\sum_{l=0}^{\mathrm{L}_u^{k,e} - 1} \mathbb{E} \left[ \left\Vert g_u (\mathbf{w}_u^{k, e, l})  \right\Vert^2 \right]}{1 - 18 \beta^2 \eta^2 \mathrm{L}^2}
\end{align}
Notice that, when learning rate $\eta < \frac{1}{3\sqrt{2} \beta \mathrm{L}}$, we have $0 < (1 - 18\beta^2\eta^2\mathrm{L}^2) <1$. 
Besides, we can assume that $\eta < \mathrm{min}\left\{\frac{1}{3\sqrt{2} \beta \mathrm{L}}, \frac{1}{\beta E \mathrm{L}} \right\}$ to satisfy all assumptions for the learning rate. 
As such, we write the upper bound as
\begin{align}
	&\frac{1}{K} \sum_{k=0}^{K-1} \frac{1}{\Omega^k} \sum_{e=0}^{E-1} \sum_{b=0}^{B-1} \alpha_b\sum_{u \in \bar{\mathcal{U}}_b^{k,e}} \alpha_u \sum_{l=0}^{\mathrm{L}_u^{k,e} - 1} \mathbb{E} \left[ \left\Vert \mathbf{w}_{b}^{k,e} - \mathbf{w}_{u}^{k,e,l} \right\Vert^2 \right] \nonumber\\ 
	&\leq \frac{3 E \mathrm{L}^2 \eta^2 \sigma^2}{K} \sum_{k=0}^{K-1} \frac{1}{\Omega^k} + \frac{9 E \epsilon_0^2 \eta^2 \mathrm{L}^3}{K} \sum_{k=0}^{K-1} \frac{1}{\Omega^k} + \frac{3 \mathrm{L}^2 \eta^2}{K} \sum_{k=0}^{K-1} \frac{1}{\Omega^k} \sum_{e=0}^{E-1} \sum_{b=0}^{B-1} \alpha_b\sum_{u \in \bar{\mathcal{U}}_b^{k,e}} \alpha_u \left[\frac{1}{\mathrm{p}_{u,\mathrm{sc}}^{k,e}} - 1 \right]\sum_{l=0}^{\mathrm{L}_u^{k,e} - 1} \mathbb{E} \left[ \left\Vert g_u (\mathbf{w}_u^{k, e, l})  \right\Vert^2 \right].
\end{align}
This concludes the proof of Lemma \ref{lem_div_BS_UE_models}.

\subsection{Proof of Lemma \ref{lem_div_central_BS_models}}
\label{lem_div_central_BS_models_Proof}

\begin{align}
	\label{lemma_bs_server_eqn_0}
	&\frac{1}{K} \sum_{k=0}^{K-1} \frac{1}{\Omega^k} \sum_{e=0}^{E-1} \sum_{b=0}^{B-1} \alpha_b \mathbb{E} \left[ \left\Vert \mathbf{w}^k - \mathbf{w}_{b}^{k,e} \right\Vert^2 \right] \nonumber\\
	&=\frac{ \eta^2}{K} \sum_{k=0}^{K-1} \frac{1}{\Omega^k} \sum_{e=0}^{E-1} \sum_{b=0}^{B-1} \alpha_b \mathbb{E} \Bigg[\Bigg\Vert \sum_{e'=0}^{e-1} \sum_{u \in \bar{\mathcal{U}}_b^{k,e'}} \alpha_u \frac{\mathrm{1}_{u,\mathrm{sc}}^{k,e'} }{\mathrm{p}_{u,\mathrm{sc}}^{k,e'}} \mathrm{g}_u^{k,e'} - \sum_{e'=0}^{e-1} \sum_{b'=0}^B \alpha_{b'} \sum_{u' \in \bar{\mathcal{U}}_{b'}^{k,e'}} \alpha_{u'} \frac{\mathrm{1}_{u',\mathrm{sc}}^{k,e'} }{\mathrm{p}_{u',\mathrm{sc}}^{k,e'}} \mathrm{g}_{u'}^{k,e'} \Bigg\Vert^2 \Bigg] \nonumber\\ 
	&=\frac{\eta^2}{K} \sum_{k=0}^{K-1} \frac{1}{\Omega^k} \sum_{e=0}^{E-1} \sum_{b=0}^{B-1} \alpha_b \mathbb{E} \Bigg[\Bigg\Vert \sum_{e'=0}^{e-1} \Bigg\{\sum_{u \in \bar{\mathcal{U}}_b^{k,e'}} \alpha_u \sum_{l=0}^{\mathrm{L}_u^{k,e'} - 1} \bigg( \frac{\mathrm{1}_{u,\mathrm{sc}}^{k,e'} }{\mathrm{p}_{u,\mathrm{sc}}^{k,e'}}  g_u (\mathbf{w}_u^{k, e', l}) \pm \nabla f_u( \mathbf{w}_u^{k,e',l}) \bigg) - \nonumber \\
	&\bquad \bquad \sum_{b'=0}^B \alpha_{b'} \sum_{u' \in \bar{\mathcal{U}}_{b'}^{k,e'}} \alpha_{u'} \sum_{l=0}^{\mathrm{L}_{u'}^{k,e'} - 1} \bigg(\frac{\mathrm{1}_{u',\mathrm{sc}}^{k,e'} }{\mathrm{p}_{u',\mathrm{sc}}^{k,e'}} g (\mathbf{w}_{u'}^{k,e',l}) \pm \nabla f_{u'} (\mathbf{w}_{u'}^{k,e',l}) \bigg) \Bigg\} \Bigg\Vert^2 \Bigg] \nonumber\\
	&\leq \frac{2 \eta^2}{K} \sum_{k=0}^{K-1} \frac{1}{\Omega^k} \sum_{e=0}^{E-1} \sum_{b=0}^{B-1} \alpha_b \mathbb{E} \Bigg[\Bigg\Vert \sum_{e'=0}^{e-1} \Bigg\{\sum_{u \in \bar{\mathcal{U}}_b^{k,e'}} \alpha_u \sum_{l=0}^{\mathrm{L}_u^{k,e'} - 1} \bigg( \frac{\mathrm{1}_{u,\mathrm{sc}}^{k,e'} }{\mathrm{p}_{u,\mathrm{sc}}^{k,e'}}  g_u (\mathbf{w}_u^{k, e', l}) - \nabla f_u( \mathbf{w}_u^{k,e',l}) \bigg) - \nonumber\\
	&\bquad \mquad \sum_{b'=0}^B \alpha_{b'} \sum_{u' \in \bar{\mathcal{U}}_{b'}^{k,e'}} \alpha_{u'} \sum_{l=0}^{\mathrm{L}_{u'}^{k,e'} - 1} \bigg(\frac{\mathrm{1}_{u',\mathrm{sc}}^{k,e'} }{\mathrm{p}_{u',\mathrm{sc}}^{k,e'}} g (\mathbf{w}_{u'}^{k,e',l}) - \nabla f_{u'} (\mathbf{w}_{u'}^{k,e',l}) \bigg) \Bigg\} \Bigg]\Bigg\Vert^2 + \nonumber\\
	&\qquad \quad \frac{2 \eta^2}{K} \sum_{k=0}^{K-1} \frac{1}{\Omega^k} \sum_{e=0}^{E-1} \sum_{b=0}^{B-1} \alpha_b \mathbb{E} \Bigg[\Bigg\Vert \sum_{e'=0}^{e-1} \Bigg\{\sum_{u \in \bar{\mathcal{U}}_b^{k,e'}} \alpha_u \sum_{l=0}^{\mathrm{L}_u^{k,e'} - 1} \nabla f_{u} (\mathbf{w}_{u}^{k,e',l}) - \sum_{b'=0}^B \alpha_{b'} \sum_{u' \in \bar{\mathcal{U}}_{b'}^{k,e'}} \alpha_{u'} \sum_{l=0}^{\mathrm{L}_{u'}^{k,e'} - 1} \nabla f_{u'} (\mathbf{w}_{u'}^{k,e',l}) \Bigg\}\Bigg\Vert^2 \Bigg],
\end{align}  

The first part of (\ref{lemma_bs_server_eqn_0}) is simplified as
\begin{align}
	\label{lemma_bs_server_eqn_0_0}
	&\frac{2 \eta^2}{K} \sum_{k=0}^{K-1} \frac{1}{\Omega^k} \sum_{e=0}^{E-1} \sum_{b=0}^{B-1} \alpha_b \mathbb{E} \Bigg[\Bigg\Vert \sum_{e'=0}^{e-1} \Bigg\{\sum_{u \in \bar{\mathcal{U}}_b^{k,e'}} \alpha_u \sum_{l=0}^{\mathrm{L}_u^{k,e'} - 1} \bigg( \frac{\mathrm{1}_{u,\mathrm{sc}}^{k,e'} }{\mathrm{p}_{u,\mathrm{sc}}^{k,e'}}  g_u (\mathbf{w}_u^{k, e', l}) - \nabla f_u( \mathbf{w}_u^{k,e',l}) \bigg) - \nonumber\\
	&\bquad \mquad \sum_{b'=0}^B \alpha_{b'} \sum_{u' \in \bar{\mathcal{U}}_{b'}^{k,e'}} \alpha_{u'} \sum_{l=0}^{\mathrm{L}_{u'}^{k,e'} - 1} \bigg(\frac{\mathrm{1}_{u',\mathrm{sc}}^{k,e'} }{\mathrm{p}_{u',\mathrm{sc}}^{k,e'}} g (\mathbf{w}_{u'}^{k,e',l}) - \nabla f_{u'} (\mathbf{w}_{u'}^{k,e',l}) \bigg) \Bigg\} \Bigg]\Bigg\Vert^2 \nonumber\\
	&\overset{(a)}{=} \frac{2 \eta^2}{K} \sum_{k=0}^{K-1} \frac{1}{\Omega^k} \sum_{e=0}^{E-1} \sum_{b=0}^{B-1} \alpha_b \mathbb{E} \Bigg[\Bigg\Vert \sum_{e'=0}^{e-1} \Bigg\{\sum_{u \in \bar{\mathcal{U}}_b^{k,e'}} \alpha_u \sum_{l=0}^{\mathrm{L}_u^{k,e'} - 1} \bigg( \frac{\mathrm{1}_{u,\mathrm{sc}}^{k,e'} }{\mathrm{p}_{u,\mathrm{sc}}^{k,e'}}  g_u (\mathbf{w}_u^{k, e', l}) - \nabla f_u( \mathbf{w}_u^{k,e',l}) \bigg) \Bigg\Vert^2\Bigg] - \nonumber\\
	&\bquad \frac{2 \eta^2}{K} \sum_{k=0}^{K-1} \frac{1}{\Omega^k} \sum_{e=0}^{E-1} \mathbb{E} \Bigg[\Bigg\Vert \sum_{e'=0}^{e-1} \Bigg\{\sum_{b=0}^B \alpha_b \sum_{u \in \bar{\mathcal{U}}_{b}^{k,e'}} \alpha_u \sum_{l=0}^{\mathrm{L}_{u}^{k,e'} - 1} \bigg(\frac{\mathrm{1}_{u,\mathrm{sc}}^{k,e'} }{\mathrm{p}_{u,\mathrm{sc}}^{k,e'}} g_u (\mathbf{w}_u^{k, e', l}) - \nabla f_{u} (\mathbf{w}_{u}^{k,e',l}) \bigg) \Bigg\} \Bigg]\Bigg\Vert^2 \nonumber\\
	&\overset{(b)}{=} \frac{2 \eta^2}{K} \sum_{k=0}^{K-1} \frac{1}{\Omega^k} \sum_{e=0}^{E-1} \sum_{b=0}^{B-1} \alpha_b \sum_{e'=0}^{e-1} \mathbb{E} \Bigg[\Bigg\Vert \sum_{u \in \bar{\mathcal{U}}_b^{k,e'}} \alpha_u \sum_{l=0}^{\mathrm{L}_u^{k,e'} - 1} \bigg( \frac{\mathrm{1}_{u,\mathrm{sc}}^{k,e'} }{\mathrm{p}_{u,\mathrm{sc}}^{k,e'}}  g_u (\mathbf{w}_u^{k, e', l}) - \nabla f_u( \mathbf{w}_u^{k,e',l}) \bigg) \Bigg\Vert^2\Bigg] -  \nonumber\\
	&\bquad \frac{2 \eta^2}{K} \sum_{k=0}^{K-1} \frac{1}{\Omega^k} \sum_{e=0}^{E-1} \sum_{e'=0}^{e-1} \mathbb{E} \Bigg[\Bigg\Vert \sum_{b=0}^B \alpha_b \sum_{u \in \bar{\mathcal{U}}_{b}^{k,e'}} \alpha_u \sum_{l=0}^{\mathrm{L}_{u}^{k,e'} - 1} \bigg(\frac{\mathrm{1}_{u,\mathrm{sc}}^{k,e'} }{\mathrm{p}_{u,\mathrm{sc}}^{k,e'}} g_u (\mathbf{w}_u^{k, e', l}) - \nabla f_{u} (\mathbf{w}_{u}^{k,e',l}) \bigg) \Bigg]\Bigg\Vert^2  \nonumber\\
	&\overset{(c)}{\leq} \frac{2 E \eta^2}{K} \sum_{k=0}^{K-1} \frac{1}{\Omega^k} \sum_{e=0}^{E-1} \sum_{b=0}^{B-1} \alpha_b \mathbb{E} \Bigg[\Bigg\Vert \sum_{u \in \bar{\mathcal{U}}_b^{k,e}} \alpha_u \sum_{l=0}^{\mathrm{L}_u^{k,e} - 1} \bigg( \frac{\mathrm{1}_{u,\mathrm{sc}}^{k,e} }{\mathrm{p}_{u,\mathrm{sc}}^{k,e}}  g_u (\mathbf{w}_u^{k, e, l}) - \nabla f_u( \mathbf{w}_u^{k,e,l}) \bigg) \Bigg\Vert^2\Bigg] -  \nonumber\\
	&\bquad \frac{2 E \eta^2}{K} \sum_{k=0}^{K-1} \frac{1}{\Omega^k} \sum_{e=0}^{E-1} \mathbb{E} \Bigg[\Bigg\Vert \sum_{b=0}^B \alpha_b \sum_{u \in \bar{\mathcal{U}}_{b}^{k,e}} \alpha_u \sum_{l=0}^{\mathrm{L}_{u}^{k,e} - 1} \bigg(\frac{\mathrm{1}_{u,\mathrm{sc}}^{k,e} }{\mathrm{p}_{u,\mathrm{sc}}^{k,e}} g_u (\mathbf{w}_u^{k, e, l}) - \nabla f_{u} (\mathbf{w}_{u}^{k,e,l}) \bigg) \Bigg]\Bigg\Vert^2  \nonumber\\
	&\overset{(d)}{=} \frac{2 E \eta^2}{K} \sum_{k=0}^{K-1} \frac{1}{\Omega^k} \sum_{e=0}^{E-1} \sum_{b=0}^{B-1} \alpha_b \sum_{u \in \bar{\mathcal{U}}_b^{k,e}} \left(\alpha_u\right)^2 \sum_{l=0}^{\mathrm{L}_u^{k,e} - 1} \mathbb{E} \Bigg[\Bigg\Vert \frac{\mathrm{1}_{u,\mathrm{sc}}^{k,e} }{\mathrm{p}_{u,\mathrm{sc}}^{k,e}}  g_u (\mathbf{w}_u^{k, e, l}) - \nabla f_u( \mathbf{w}_u^{k,e,l}) \Bigg\Vert^2\Bigg] - \nonumber\\
	&\bquad \frac{2 E \eta^2}{K} \sum_{k=0}^{K-1} \frac{1}{\Omega^k} \sum_{e=0}^{E-1} \sum_{b=0}^B \left(\alpha_b\right)^2 \sum_{u \in \bar{\mathcal{U}}_{b}^{k,e}} \left(\alpha_u\right)^2 \sum_{l=0}^{\mathrm{L}_{u}^{k,e} - 1}  \mathbb{E} \Bigg[\Bigg\Vert \frac{\mathrm{1}_{u,\mathrm{sc}}^{k,e} }{\mathrm{p}_{u,\mathrm{sc}}^{k,e}} g_u (\mathbf{w}_u^{k, e, l}) - \nabla f_{u'} (\mathbf{w}_{u}^{k,e,l}) \Bigg]\Bigg\Vert^2 \nonumber\\
	&= \frac{2 E \eta^2}{K} \sum_{k=0}^{K-1} \frac{1}{\Omega^k} \sum_{e=0}^{E-1} \sum_{b=0}^{B-1} \alpha_b \sum_{u \in \bar{\mathcal{U}}_b^{k,e}} \left(\alpha_u\right)^2 \sum_{l=0}^{\mathrm{L}_u^{k,e} - 1} \mathbb{E} \left[\left\Vert \frac{\mathrm{1}_{u,\mathrm{sc}}^{k,e} }{\mathrm{p}_{u,\mathrm{sc}}^{k,e}}  g_u (\mathbf{w}_u^{k, e, l}) \pm g_u (\mathbf{w}_u^{k, e, l}) - \nabla f_u( \mathbf{w}_u^{k,e,l}) \right\Vert^2\right] - \nonumber\\
	&\bquad \frac{2 E \eta^2}{K} \sum_{k=0}^{K-1} \frac{1}{\Omega^k} \sum_{e=0}^{E-1} \sum_{b=0}^B \left(\alpha_b\right)^2 \sum_{u \in \bar{\mathcal{U}}_{b}^{k,e}} \left(\alpha_u \right)^2 \sum_{l=0}^{\mathrm{L}_{u}^{k,e} - 1} \mathbb{E} \left[\left\Vert \frac{\mathrm{1}_{u,\mathrm{sc}}^{k,e} }{\mathrm{p}_{u,\mathrm{sc}}^{k,e}} g_u (\mathbf{w}_u^{k, e, l}) \pm g_u (\mathbf{w}_u^{k, e, l}) - \nabla f_{u} (\mathbf{w}_{u}^{k,e,l}) \right]\right\Vert^2 \nonumber\\
	&\overset{(e)}{\leq} \frac{4 E \eta^2}{K} \sum_{k=0}^{K-1} \frac{1}{\Omega^k} \sum_{e=0}^{E-1} \sum_{b=0}^{B-1} \alpha_b \sum_{u \in \bar{\mathcal{U}}_b^{k,e}} \left(\alpha_u\right)^2 \sum_{l=0}^{\mathrm{L}_u^{k,e} - 1} \mathbb{E} \left[\left\Vert \left(\frac{\mathrm{1}_{u,\mathrm{sc}}^{k,e} }{\mathrm{p}_{u,\mathrm{sc}}^{k,e}} - 1 \right) g_u (\mathbf{w}_u^{k, e, l}) \right\Vert^2\right] + \nonumber\\
	&\bquad \frac{4 E \eta^2}{K} \sum_{k=0}^{K-1} \frac{1}{\Omega^k} \sum_{e=0}^{E-1} \sum_{b=0}^{B-1} \alpha_b \sum_{u \in \bar{\mathcal{U}}_b^{k,e}} \left(\alpha_u\right)^2 \sum_{l=0}^{\mathrm{L}_u^{k,e} - 1} \mathbb{E} \left[\left \Vert  g_u (\mathbf{w}_u^{k, e, l}) - \nabla f_u( \mathbf{w}_u^{k,e,l}) \right\Vert^2\right] - \nonumber\\
	&\bquad \frac{4 E \eta^2}{K} \sum_{k=0}^{K-1} \frac{1}{\Omega^k} \sum_{e=0}^{E-1} \sum_{b=0}^B \left(\alpha_b\right)^2 \sum_{u \in \bar{\mathcal{U}}_{b}^{k,e}} \left(\alpha_u \right)^2 \sum_{l=0}^{\mathrm{L}_{u}^{k,e} - 1} \mathbb{E} \left[\left\Vert \left(\frac{\mathrm{1}_{u,\mathrm{sc}}^{k,e} }{\mathrm{p}_{u,\mathrm{sc}}^{k,e}} - 1 \right) g_u (\mathbf{w}_u^{k, e, l}) \right\Vert^2 \right] - \nonumber \\
	&\bquad \frac{4 E \eta^2}{K} \sum_{k=0}^{K-1} \frac{1}{\Omega^k} \sum_{e=0}^{E-1} \sum_{b=0}^B \left(\alpha_b\right)^2 \sum_{u \in \bar{\mathcal{U}}_{b}^{k,e}} \left(\alpha_u \right)^2 \sum_{l=0}^{\mathrm{L}_{u}^{k,e} - 1} \mathbb{E} \left[\left\Vert g_u (\mathbf{w}_u^{k, e, l}) - \nabla f_{u} (\mathbf{w}_{u}^{k,e,l}) \right]\right\Vert^2 \nonumber \\
	&\overset{(f)}{\leq} \frac{4 E \eta^2 \sigma^2} {K} \sum_{k=0}^{K-1} \frac{1}{\Omega^k} \sum_{e=0}^{E-1} \sum_{b=0}^{B-1} \alpha_b \sum_{u \in \bar{\mathcal{U}}_b^{k,e}} \left(\alpha_u\right)^2 \mathrm{L}_u^{k,e} + \frac{4 E \eta^2}{K} \sum_{k=0}^{K-1} \frac{1}{\Omega^k} \sum_{e=0}^{E-1} \sum_{b=0}^{B-1} \alpha_b \sum_{u \in \bar{\mathcal{U}}_b^{k,e}} \left(\alpha_u\right)^2 \bigg[\frac{1}{\mathrm{p}_{u,\mathrm{sc}}^{k,e}} - 1 \bigg] \sum_{l=0}^{\mathrm{L}_u^{k,e} - 1} \mathbb{E} \left[\left\Vert g_u (\mathbf{w}_u^{k, e, l}) \right\Vert^2\right] - \nonumber\\
	&\frac{4 E \eta^2 \sigma^2}{K} \sum_{k=0}^{K-1} \frac{1}{\Omega^k} \sum_{e=0}^{E-1} \sum_{b=0}^B \left(\alpha_b\right)^2 \sum_{u \in \bar{\mathcal{U}}_{b}^{k,e}} \left(\alpha_u \right)^2 \mathrm{L}_{u}^{k,e} - \frac{4 E \eta^2}{K} \sum_{k=0}^{K-1} \frac{1}{\Omega^k} \sum_{e=0}^{E-1} \sum_{b=0}^B \left(\alpha_b\right)^2 \sum_{u \in \bar{\mathcal{U}}_{b}^{k,e}} \left(\alpha_u \right)^2 \bigg[\frac{1} {\mathrm{p}_{u,\mathrm{sc}}^{k,e}} - 1 \bigg] \sum_{l=0}^{\mathrm{L}_{u}^{k,e} - 1}  \mathbb{E} \left[\left\Vert g_u (\mathbf{w}_u^{k, e, l}) \right\Vert^2 \right] \nonumber \\
	&\leq \frac{4 E \eta^2 \sigma^2} {K} \sum_{k=0}^{K-1} \frac{1}{\Omega^k} \sum_{e=0}^{E-1} \sum_{b=0}^{B-1} \alpha_b \rs\rs \sum_{u \in \bar{\mathcal{U}}_b^{k,e}} \rs \left(\alpha_u\right)^2 \mathrm{L}_u^{k,e} + \frac{4 E \eta^2}{K} \sum_{k=0}^{K-1} \frac{1}{\Omega^k} \sum_{e=0}^{E-1} \sum_{b=0}^{B-1} \rs \alpha_b \rs\rs \sum_{u \in \bar{\mathcal{U}}_b^{k,e}} \rs \left(\alpha_u\right)^2 \rs \bigg[\frac{1}{\mathrm{p}_{u,\mathrm{sc}}^{k,e}} - 1 \bigg] \sum_{l=0}^{\mathrm{L}_u^{k,e} - 1} \mathbb{E} \left[\left\Vert g_u (\mathbf{w}_u^{k, e, l}) \right\Vert^2\right],
\end{align}
where $(a)$ stems from the fact that $\sum_{j=1}^J \alpha_j \Vert \mathbf{x}_j - \bar{\mathbf{x}} \Vert^2 = \sum_{j=1}^J \alpha_j \Vert \mathbf{x}_j \Vert^2 - \Vert \bar{\mathbf{x}} \Vert^2$, where $\bar{\mathbf{x}} = \sum_{j=1}^J \alpha_j \mathbf{x}_j$.
Besides, $(b)$ appears from the independent mini-batch gradients and $\mathbb{E} [g_u(\mathbf{w}_u^{k,e,l})] = \nabla f_u (\mathbf{w}_u^{k,e,l})$ assumptions, which make the expectation of the cross product terms $0$.
Besides, $(c)$ is true since $(e-1)- e' \leq E$.
Furthermore, we use the independent mini-batch gradients and $\mathbb{E} [g_u(\mathbf{w}_u^{k,e,l})] = \nabla f_u (\mathbf{w}_u^{k,e,l})$ assumptions in step $(d)$.
In $(e)$, we use the fact that $\Vert \sum_{j=1}^J \mathbf{x}_j \Vert^2 \leq J \sum_{j=1}^J \Vert \mathbf{x}_j \Vert^2$. 
Moreover, $(f)$ appears from bounded variance of the gradients.
Finally, we drop the negative second term of $(f)$ in the last inequality.

The second part of (\ref{lemma_bs_server_eqn_0}) is simplified as
\begin{align}
	\label{lemma_bs_server_eqn_0_1}
	&\frac{2 \eta^2}{K} \sum_{k=0}^{K-1} \frac{1}{\Omega^k} \sum_{e=0}^{E-1} \sum_{b=0}^{B-1} \alpha_b \mathbb{E} \Bigg[\Bigg\Vert \sum_{e'=0}^{e-1} \Bigg\{\sum_{u \in \bar{\mathcal{U}}_b^{k,e'}} \alpha_u \sum_{l=0}^{\mathrm{L}_u^{k,e'} - 1} \nabla f_{u} (\mathbf{w}_{u}^{k,e',l}) - \sum_{b'=0}^B \alpha_{b'} \sum_{u' \in \bar{\mathcal{U}}_{b'}^{k,e'}} \alpha_{u'} \sum_{l=0}^{\mathrm{L}_{u'}^{k,e'} - 1} \nabla f_{u'} (\mathbf{w}_{u'}^{k,e',l}) \Bigg\}\Bigg\Vert^2 \Bigg] \nonumber\\
	&\overset{(a)}{\leq}\frac{2 E \eta^2}{K} \sum_{k=0}^{K-1} \frac{1}{\Omega^k} \sum_{e=0}^{E-1} \sum_{b=0}^{B-1} \alpha_b \sum_{e'=0}^{e-1} \mathbb{E} \Bigg[\Bigg\Vert  \sum_{u \in \bar{\mathcal{U}}_b^{k,e'}} \alpha_u \sum_{l=0}^{\mathrm{L}_u^{k,e'} - 1} \nabla f_{u} (\mathbf{w}_{u}^{k,e',l}) - \sum_{b'=0}^B \alpha_{b'} \sum_{u' \in \bar{\mathcal{U}}_{b'}^{k,e'}} \alpha_{u'} \sum_{l=0}^{\mathrm{L}_{u'}^{k,e'} - 1} \nabla f_{u'} (\mathbf{w}_{u'}^{k,e',l}) \Bigg\Vert^2 \Bigg] \nonumber \\
	&\overset{(b)}{\leq} \frac{2 \eta^2 E^2}{K} \sum_{k=0}^{K-1} \frac{1}{\Omega^k} \sum_{e=0}^{E-1} \sum_{b=0}^{B-1} \alpha_b \mathbb{E} \Bigg[\Bigg\Vert  \sum_{u \in \bar{\mathcal{U}}_b^{k,e}} \alpha_u \sum_{l=0}^{\mathrm{L}_u^{k,e} - 1} \nabla f_{u} (\mathbf{w}_{u}^{k,e,l}) - \sum_{b'=0}^B \alpha_{b'} \sum_{u' \in \bar{\mathcal{U}}_{b'}^{k,e}} \alpha_{u'} \sum_{l=0}^{\mathrm{L}_{u'}^{k,e} - 1} \nabla f_{u'} (\mathbf{w}_{u'}^{k,e,l}) \Bigg\Vert^2 \Bigg] \nonumber\\
	&= \frac{2 \eta^2 E^2}{K} \sum_{k=0}^{K-1} \frac{1}{\Omega^k} \sum_{e=0}^{E-1} \sum_{b=0}^{B-1} \alpha_b \mathbb{E} \Bigg[\Bigg\Vert  \sum_{u \in \bar{\mathcal{U}}_b^{k,e}} \alpha_u \sum_{l=0}^{\mathrm{L}_u^{k,e} - 1} \big[\nabla f_{u} (\mathbf{w}_{u}^{k,e,l}) - \nabla f_{u} (\mathbf{w}_b^{k,e}) \big] + \sum_{u \in \bar{\mathcal{U}}_b^{k,e}} \alpha_u \sum_{l=0}^{\mathrm{L}_u^{k,e} - 1} \big[\nabla f_{u} (\mathbf{w}_{b}^{k,e}) - \nabla f_{u} (\mathbf{w}^{k}) \big] +\nonumber\\
	&\squad \sum_{u \in \bar{\mathcal{U}}_b^{k,e}} \alpha_u \sum_{l=0}^{\mathrm{L}_u^{k,e} - 1} \nabla f_{u} (\mathbf{w}^k) - \sum_{b'=0}^B \alpha_{b'} \sum_{u' \in \bar{\mathcal{U}}_{b'}^{k,e}} \alpha_{u'} \sum_{l=0}^{\mathrm{L}_{u'}^{k,e} - 1} \nabla f_{u'} (\mathbf{w}^k) + \sum_{b'=0}^B \alpha_{b'} \sum_{u' \in \bar{\mathcal{U}}_{b'}^{k,e}} \alpha_{u'} \sum_{l=0}^{\mathrm{L}_{u'}^{k,e} - 1} \big[\nabla f_{u'} (\mathbf{w}^k) - \nabla f_{u'} (\mathbf{w}_{b'}^{k,e}) \big] + \nonumber\\
	&\squad \sum_{b'=0}^B \alpha_{b'} \sum_{u' \in \bar{\mathcal{U}}_{b'}^{k,e}} \alpha_{u'} \sum_{l=0}^{\mathrm{L}_{u'}^{k,e} - 1} \big[ \nabla f_{u'} (\mathbf{w}_{b'}^{k,e}) - \nabla f_{u'} (\mathbf{w}_{u'}^{k,e,l}) \big] \Bigg\Vert^2 \Bigg] \nonumber\\
	&\overset{(c)}{\leq} \frac{10 \eta^2 E^2}{K} \sum_{k=0}^{K-1} \frac{1}{\Omega^k} \sum_{e=0}^{E-1} \sum_{b=0}^{B-1} \alpha_b \mathbb{E} \Bigg[\Bigg\Vert \sum_{u \in \bar{\mathcal{U}}_b^{k,e}} \alpha_u \sum_{l=0}^{\mathrm{L}_u^{k,e} - 1} \big[\nabla f_{u} (\mathbf{w}_{u}^{k,e,l}) - \nabla f_{u} (\mathbf{w}_b^{k,e}) \big] \Bigg\Vert^2\Bigg] + \nonumber\\
	&\squad \frac{10 \eta^2 E^2}{K} \sum_{k=0}^{K-1} \frac{1}{\Omega^k} \sum_{e=0}^{E-1} \sum_{b=0}^{B-1} \alpha_b \mathbb{E} \Bigg[\Bigg\Vert \sum_{u \in \bar{\mathcal{U}}_b^{k,e}} \alpha_u \sum_{l=0}^{\mathrm{L}_u^{k,e} - 1} \big[\nabla f_{u} (\mathbf{w}_{b}^{k,e}) - \nabla f_{u} (\mathbf{w}^{k}) \big] \Bigg\Vert^2 \Bigg] + \nonumber\\
	&\squad \frac{10 \eta^2 E^2}{K} \sum_{k=0}^{K-1} \frac{1}{\Omega^k} \sum_{e=0}^{E-1} \sum_{b=0}^{B-1} \alpha_b \mathbb{E} \Bigg[\Bigg\Vert \sum_{u \in \bar{\mathcal{U}}_b^{k,e}} \alpha_u \sum_{l=0}^{\mathrm{L}_u^{k,e} - 1} \nabla f_{u} (\mathbf{w}^k) - \sum_{b'=0}^B \alpha_{b'} \sum_{u' \in \bar{\mathcal{U}}_{b'}^{k,e}} \alpha_{u'} \sum_{l=0}^{\mathrm{L}_{u'}^{k,e} - 1} \nabla f_{u'} (\mathbf{w}^k) \Bigg\Vert^2 \Bigg] + \nonumber\\
	&\squad \frac{10 \eta^2 E^2}{K} \sum_{k=0}^{K-1} \frac{1}{\Omega^k} \sum_{e=0}^{E-1} \sum_{b=0}^{B-1} \alpha_b \mathbb{E} \Bigg[\Bigg\Vert \sum_{b'=0}^B \alpha_{b'} \sum_{u' \in \bar{\mathcal{U}}_{b'}^{k,e}} \alpha_{u'} \sum_{l=0}^{\mathrm{L}_{u'}^{k,e} - 1} \big[\nabla f_{u'} (\mathbf{w}^k) - \nabla f_{u'} (\mathbf{w}_{b'}^{k,e}) \big] \Bigg\Vert^2\Bigg] + \nonumber\\
	&\squad \frac{10 \eta^2 E^2}{K} \sum_{k=0}^{K-1} \frac{1}{\Omega^k} \sum_{e=0}^{E-1} \sum_{b=0}^{B-1} \alpha_b \mathbb{E} \Bigg[\Bigg\Vert \sum_{b'=0}^B \alpha_{b'} \sum_{u' \in \bar{\mathcal{U}}_{b'}^{k,e}} \alpha_{u'} \sum_{l=0}^{\mathrm{L}_{u'}^{k,e} - 1} \big[ \nabla f_{u'} (\mathbf{w}_{b'}^{k,e}) - \nabla f_{u'} (\mathbf{w}_{u'}^{k,e,l}) \big] \Bigg\Vert^2 \Bigg] \nonumber\\
	&\overset{(d)}{\leq} \frac{10 \eta^2 E^2}{K} \sum_{k=0}^{K-1} \frac{1}{\Omega^k} \sum_{e=0}^{E-1} \sum_{b=0}^{B-1} \alpha_b \sum_{u \in \bar{\mathcal{U}}_b^{k,e}} \alpha_u \mathbb{E} \Bigg[\Bigg\Vert \sum_{l=0}^{\mathrm{L}_u^{k,e} - 1} \big[\nabla f_{u} (\mathbf{w}_{u}^{k,e,l}) - \nabla f_{u} (\mathbf{w}_b^{k,e}) \big] \Bigg\Vert^2\Bigg] + \nonumber\\
	&\squad \frac{10 \eta^2 E^2}{K} \sum_{k=0}^{K-1} \frac{1}{\Omega^k} \sum_{e=0}^{E-1} \sum_{b=0}^{B-1} \alpha_b \sum_{u \in \bar{\mathcal{U}}_b^{k,e}} \alpha_u \mathbb{E} \Bigg[\Bigg\Vert \sum_{l=0}^{\mathrm{L}_u^{k,e} - 1} \big[\nabla f_{u} (\mathbf{w}_{b}^{k,e}) - \nabla f_{u} (\mathbf{w}^{k}) \big] \Bigg\Vert^2 \Bigg] + \nonumber\\
	&\squad \frac{10 \eta^2 E^2}{K} \sum_{k=0}^{K-1} \frac{1}{\Omega^k} \sum_{e=0}^{E-1} \sum_{b=0}^{B-1} \alpha_b \mathbb{E} \Bigg[\Bigg\Vert \sum_{u \in \bar{\mathcal{U}}_b^{k,e}} \alpha_u \nabla \tilde{f}_{u} (\mathbf{w}^k) - \sum_{b'=0}^B \alpha_{b'} \sum_{u' \in \bar{\mathcal{U}}_{b'}^{k,e}} \alpha_{u'} \nabla \tilde{f}_{u'} (\mathbf{w}^k) \Bigg\Vert^2 \Bigg] + \nonumber\\
	&\squad \frac{10 \eta^2 E^2}{K} \sum_{k=0}^{K-1} \frac{1}{\Omega^k} \sum_{e=0}^{E-1} \sum_{b=0}^{B-1} \alpha_b \sum_{u \in \bar{\mathcal{U}}_b^{k,e}} \alpha_u \mathbb{E} \Bigg[\Bigg\Vert \sum_{l=0}^{\mathrm{L}_u^{k,e} - 1} \big[\nabla f_u (\mathbf{w}^k) - \nabla f_u (\mathbf{w}_b^{k,e}) \big] \Bigg\Vert^2\Bigg] + \nonumber\\
	&\squad \frac{10 \eta^2 E^2}{K} \sum_{k=0}^{K-1} \frac{1}{\Omega^k} \sum_{e=0}^{E-1} \sum_{b=0}^{B-1} \alpha_b  \sum_{u \in \bar{\mathcal{U}}_b^{k,e}} \alpha_u \mathbb{E} \Bigg[\Bigg\Vert \sum_{l=0}^{\mathrm{L}_u^{k,e} - 1} \big[ \nabla f_u (\mathbf{w}_b^{k,e}) - \nabla f_u (\mathbf{w}_u^{k,e,l}) \big] \Bigg\Vert^2 \Bigg] \nonumber\\
	&\overset{(e)}{\leq} \frac{10 \mathrm{L} \eta^2 E^2}{K} \sum_{k=0}^{K-1} \frac{1}{\Omega^k} \sum_{e=0}^{E-1} \sum_{b=0}^{B-1} \alpha_b \sum_{u \in \bar{\mathcal{U}}_b^{k,e}} \alpha_u \sum_{l=0}^{\mathrm{L}_u^{k,e} - 1} \mathbb{E} \Big[\left\Vert \nabla f_{u} (\mathbf{w}_{u}^{k,e,l}) - \nabla f_{u} (\mathbf{w}_b^{k,e}) \right\Vert^2\Big] + \nonumber\\
	&\squad \frac{10 \mathrm{L} \eta^2 E^2}{K} \sum_{k=0}^{K-1} \frac{1}{\Omega^k} \sum_{e=0}^{E-1} \sum_{b=0}^{B-1} \alpha_b \sum_{u \in \bar{\mathcal{U}}_b^{k,e}} \alpha_u \sum_{l=0}^{\mathrm{L}_u^{k,e} - 1} \mathbb{E} \Big[\left\Vert \nabla f_{u} (\mathbf{w}_{b}^{k,e}) - \nabla f_{u} (\mathbf{w}^{k}) \right\Vert^2 \Big] + \frac{10 \eta^2 E^2}{K} \sum_{k=0}^{K-1} \frac{1}{\Omega^k} \sum_{e=0}^{E-1} \sum_{b=0}^{B-1} \alpha_b \cdot \epsilon_1^2 + \nonumber\\
	&\squad \frac{10 \mathrm{L} \eta^2 E^2}{K} \sum_{k=0}^{K-1} \frac{1}{\Omega^k} \sum_{e=0}^{E-1} \sum_{b=0}^{B-1} \alpha_b \sum_{u \in \bar{\mathcal{U}}_b^{k,e}} \alpha_u \sum_{l=0}^{\mathrm{L}_u^{k,e} - 1} \mathbb{E} \Big[\left\Vert \nabla f_u (\mathbf{w}^k) - \nabla f_u (\mathbf{w}_b^{k,e}) \right\Vert^2\Big] + \nonumber\\
	&\squad \frac{10 \mathrm{L}\eta^2 E^2}{K} \sum_{k=0}^{K-1} \frac{1}{\Omega^k} \sum_{e=0}^{E-1} \sum_{b=0}^{B-1} \alpha_b  \sum_{u \in \bar{\mathcal{U}}_b^{k,e}} \alpha_u \sum_{l=0}^{\mathrm{L}_u^{k,e} - 1}  \mathbb{E} \Big[\left\Vert \nabla f_u (\mathbf{w}_b^{k,e}) - \nabla f_u (\mathbf{w}_u^{k,e,l}) \right\Vert^2 \Big] \nonumber\\
	&\overset{(f)}{\leq} \frac{10 \epsilon_1^2 \eta^2 E^3}{K} \sum_{k=0}^{K-1} \frac{1}{\Omega^k} + \frac{20 \mathrm{L} \beta^2 \eta^2 E^2}{K} \sum_{k=0}^{K-1} \frac{1}{\Omega^k} \sum_{e=0}^{E-1} \sum_{b=0}^{B-1} \alpha_b \sum_{u \in \bar{\mathcal{U}}_b^{k,e}} \alpha_u \sum_{l=0}^{\mathrm{L}_u^{k,e} - 1} \mathbb{E} \Big[\left\Vert  \mathbf{w}_b^{k,e} - \mathbf{w}_{u}^{k,e,l} \right\Vert^2\Big] + \nonumber\\
	&\squad \frac{20 \beta^2 \eta^2 E^2 \mathrm{L}^2}{K} \sum_{k=0}^{K-1} \frac{1}{\Omega^k} \sum_{e=0}^{E-1} \sum_{b=0}^{B-1} \alpha_b \mathbb{E} \Big[\left\Vert \mathbf{w}^k - \mathbf{w}_b^{k,e} \right\Vert^2\Big] \nonumber\\
	&\overset{(g)}{\leq} \frac{10 \epsilon_1^2 \eta^2 E^3}{K} \sum_{k=0}^{K-1} \frac{1}{\Omega^k} + \frac{20 \beta^2 \eta^2 E^2 \mathrm{L}^2}{K} \sum_{k=0}^{K-1} \frac{1}{\Omega^k} \sum_{e=0}^{E-1} \sum_{b=0}^{B-1} \alpha_b \mathbb{E} \Big[\left\Vert \mathbf{w}^k - \mathbf{w}_b^{k,e} \right\Vert^2\Big] + 20 \mathrm{L} \beta^2 \eta^2 E^2 \times \Bigg[ \frac{3 E \mathrm{L}^2 \eta^2 \sigma^2}{K} \sum_{k=0}^{K-1} \frac{1}{\Omega^k} + \nonumber\\
	&\squad \frac{9 E \epsilon_0^2 \eta^2 \mathrm{L}^3}{K} \sum_{k=0}^{K-1} \frac{1}{\Omega^k} + \frac{3 \mathrm{L}^2 \eta^2}{K} \sum_{k=0}^{K-1} \frac{1}{\Omega^k} \sum_{e=0}^{E-1} \sum_{b=0}^{B-1} \alpha_b\sum_{u \in \bar{\mathcal{U}}_b^{k,e}} \alpha_u \left[\frac{1}{\mathrm{p}_{u,\mathrm{sc}}^{k,e}} - 1 \right]\sum_{l=0}^{\mathrm{L}_u^{k,e} - 1} \mathbb{E} \left[ \left\Vert g_u (\mathbf{w}_u^{k, e, l})  \right\Vert^2 \right] \Bigg] \nonumber\\
	&= \frac{180 \beta^2 \epsilon_0^2 E^3 \mathrm{L}^4 \eta^4}{K} \sum_{k=0}^{K-1} \frac{1}{\Omega^k} + \frac{10 \epsilon_1^2 \eta^2 E^3}{K} \sum_{k=0}^{K-1} \frac{1}{\Omega^k} + \frac{60 \beta^2 \sigma^2 E^3 \mathrm{L}^3 \eta^4}{K} \sum_{k=0}^{K-1} \frac{1}{\Omega^k} + \nonumber\\
	&\squad \frac{60 \beta^2 E^2 \mathrm{L}^3 \eta^4}{K} \sum_{k=0}^{K-1} \frac{1}{\Omega^k} \sum_{e=0}^{E-1} \sum_{b=0}^{B-1} \alpha_b\sum_{u \in \bar{\mathcal{U}}_b^{k,e}} \alpha_u \left[\frac{1}{\mathrm{p}_{u,\mathrm{sc}}^{k,e}} - 1 \right]\sum_{l=0}^{\mathrm{L}_u^{k,e} - 1} \mathbb{E} \left[ \left\Vert g_u (\mathbf{w}_u^{k, e, l})  \right\Vert^2 \right] + \nonumber\\
	&\squad \frac{20 \beta^2 \eta^2 E^2 \mathrm{L}^2}{K} \sum_{k=0}^{K-1} \frac{1}{\Omega^k} \sum_{e=0}^{E-1} \sum_{b=0}^{B-1} \alpha_b \mathbb{E} \Big[\left\Vert \mathbf{w}^k - \mathbf{w}_b^{k,e} \right\Vert^2\Big],
\end{align}
where $(a)$ and $c$ stem from the fact that $\Vert \sum_{j=1}^J \mathbf{x}_j \Vert^2 \leq J \sum_{j=1}^J \Vert \mathbf{x}_j \Vert^2$.
Besides, we used $(e-1) - e' \leq E$ in step $(a)$ and step $(b)$.
In $(d)$, we use the fact that $\Vert \sum_{j=1}^J \alpha_j \mathbf{x}_j \Vert^2 \leq \sum_{j=1}^J \alpha_j \Vert \mathbf{x}_j \Vert^2$ and the definition of $\nabla \tilde{f}_u(\mathbf{w}^k) \coloneqq \sum_{l=0}^{\mathrm{L}_u^{k,e} - 1} \nabla f_u(\mathbf{w}^k)$. 
Furthermore, $\Vert \sum_{j=1}^J \mathbf{x}_j \Vert^2 \leq J \sum_{j=1}^J \Vert \mathbf{x}_j \Vert^2$ and the bounded divergence assumption in (\ref{ES_Central_Loss_Divergence_apndx}) yield $(e)$.
Moreover, the inequality in $(f)$ stems from the $\beta$-smoothness property. 
Finally, we use Lemma \ref{lem_div_BS_UE_models} in step $(g)$.

Now, plugging (\ref{lemma_bs_server_eqn_0_0}) and (\ref{lemma_bs_server_eqn_0_1}) into (\ref{lemma_bs_server_eqn_0}), we get
\begin{align}
	\label{lemma_bs_server_eqn_1}
	&\left(1 - 20 \beta^2 \eta^2 E^2 \mathrm{L}^2 \right) \frac{1}{K} \sum_{k=0}^{K-1} \frac{1}{\Omega^k} \sum_{e=0}^{E-1} \sum_{b=0}^{B-1} \alpha_b \mathbb{E} \left[ \left\Vert \mathbf{w}^k - \mathbf{w}_{b}^{k,e} \right\Vert^2 \right] \nonumber\\
	&\leq \frac{4 E \eta^2 \sigma^2} {K} \sum_{k=0}^{K-1} \frac{1}{\Omega^k} \sum_{e=0}^{E-1} \sum_{b=0}^{B-1} \alpha_b \rs\rs \sum_{u \in \bar{\mathcal{U}}_b^{k,e}} \rs \left(\alpha_u\right)^2 \mathrm{L}_u^{k,e} + \frac{60 \beta^2 \sigma^2 E^3 \mathrm{L}^3 \eta^4}{K} \sum_{k=0}^{K-1} \frac{1}{\Omega^k} + \frac{180 \beta^2 \epsilon_0^2 E^3 \mathrm{L}^4 \eta^4}{K} \sum_{k=0}^{K-1} \frac{1}{\Omega^k} + \frac{10 \epsilon_1^2 \eta^2 E^3}{K} \sum_{k=0}^{K-1} \frac{1}{\Omega^k} + \nonumber\\
	&\squad \frac{4 E \eta^2}{K} \sum_{k=0}^{K-1} \frac{1}{\Omega^k} \sum_{e=0}^{E-1} \sum_{b=0}^{B-1} \rs \alpha_b \rs\rs \sum_{u \in \bar{\mathcal{U}}_b^{k,e}} \alpha_u \left(\alpha_u + 15 E \beta^2 \eta^2 \mathrm{L}^3 \right) \rs \bigg[\frac{1}{\mathrm{p}_{u,\mathrm{sc}}^{k,e}} - 1 \bigg] \sum_{l=0}^{\mathrm{L}_u^{k,e} - 1} \mathbb{E} \left[\left\Vert g_u (\mathbf{w}_u^{k, e, l}) \right\Vert^2\right].
\end{align}

It is worth noting that when $\eta < \frac{1}{2\sqrt{5} \beta E \mathrm{L}}$, we have $0 < (1 - 20\beta^2\eta^2E^2 \mathrm{L}^2) < 1$. 
In order to satisfy the previous assumptions on the learning rate, we let $\eta < \mathrm{min}\left\{\frac{1}{2\sqrt{5} \beta \mathrm{L}}, \frac{1}{\beta E \mathrm{L}} \right\}$. 
As such, we re-write (\ref{lemma_bs_server_eqn_1}) as follows:
\begin{align}
	\label{lemma_bs_server_eqn_pf}
	&\frac{1}{K} \sum_{k=0}^{K-1} \frac{1}{\Omega^k} \sum_{e=0}^{E-1} \sum_{b=0}^{B-1} \alpha_b \mathbb{E} \left[ \left\Vert \mathbf{w}^k - \mathbf{w}_{b}^{k,e} \right\Vert^2 \right] \nonumber\\
	&\leq \frac{4 E \eta^2 \sigma^2} {K} \sum_{k=0}^{K-1} \frac{1}{\Omega^k} \sum_{e=0}^{E-1} \sum_{b=0}^{B-1} \alpha_b \rs\rs \sum_{u \in \bar{\mathcal{U}}_b^{k,e}} \rs \left(\alpha_u\right)^2 \mathrm{L}_u^{k,e} + \frac{60 \beta^2 \sigma^2 E^3 \mathrm{L}^3 \eta^4}{K} \sum_{k=0}^{K-1} \frac{1}{\Omega^k} + \frac{180 \beta^2 \epsilon_0^2 E^3 \mathrm{L}^4 \eta^4}{K} \sum_{k=0}^{K-1} \frac{1}{\Omega^k} + \frac{10 \epsilon_1^2 \eta^2 E^3}{K} \sum_{k=0}^{K-1} \frac{1}{\Omega^k} + \nonumber\\
	&\squad \frac{4 E \eta^2}{K} \sum_{k=0}^{K-1} \frac{1}{\Omega^k} \sum_{e=0}^{E-1} \sum_{b=0}^{B-1} \rs \alpha_b \rs\rs \sum_{u \in \bar{\mathcal{U}}_b^{k,e}} \alpha_u \left(\alpha_u + 15 E \beta^2 \eta^2 \mathrm{L}^3 \right) \rs \bigg[\frac{1}{\mathrm{p}_{u,\mathrm{sc}}^{k,e}} - 1 \bigg] \sum_{l=0}^{\mathrm{L}_u^{k,e} - 1} \mathbb{E} \left[\left\Vert g_u (\mathbf{w}_u^{k, e, l}) \right\Vert^2\right].
\end{align}
This concludes the proof of Lemma \ref{lem_div_central_BS_models}.

\end{document}